\theoremstyle{definition}
\newtheorem{defn}{\protect\definitionname}
\providecommand{\definitionname}{Definition}
\theoremstyle{remark}
\newtheorem{rem}{\protect\remarkname}
\providecommand{\remarkname}{Remark}
\newtheorem{assumption}{Assumption}
\newtheorem{lem}{\protect\lemmaname}
\providecommand{\lemmaname}{Lemma}
\newtheorem{theorem}{\protect\theoremname}
\providecommand{\theoremname}{Theorem}
\title{\textbf{Externally Valid Selection of Experimental Sites via the k-Median Problem}\footnote{We would like to thank Isaiah Andrews, Tim Armstrong, Mich\`ele Belot, Arun Chandrasekhar, Jiafeng (Kevin) Chen, Michael Gechter, Jesse Goodman, Kei Hirano, John List, Yiqi Liu, Francesca Molinari, Charles Manski, Guillaume Pouliot, Andres Santos, David Shmoys, Davide Viviano, four anonymous referees at the Twenty-Sixth ACM Conference on Economics and Computation (EC'25), as well as participants at the Bravo/JEA/SNSF Workshop on “Using Data to Make Decisions” (Brown
University), the Yale Research Initiative on Innovation \& Scale 2024 (Y-RISE) conference, and the Econometric Society 2025 World Congress (Seoul) for helpful feedback, comments, and suggestions. We would also like to thank Eddie Ramirez Saquic for excellent research assistance. We gratefully acknowledge financial support from the NSF under grant SES-2315600.}}
\author{
\begin{tabular}{c}
    Jos\'e Luis Montiel Olea\thanks{Cornell University. Department of Economics. Corresponding author: \href{mailto:montiel.olea@gmail.com}{\textit{montiel.olea@gmail.com}}} \and
    Brenda Prallon\footnotemark[2] \and 
    Chen Qiu\footnotemark[2]
\end{tabular}\\
\begin{tabular}{c}
    J\"org Stoye\footnotemark[2] \and
    Yiwei Sun\footnotemark[2]
\end{tabular}
}
\date{August 2025}
\begin{document}

\maketitle

\onehalfspacing

\begin{abstract}
We present a decision-theoretic justification for viewing the question of how to best choose \emph{where} to experiment in order to optimize external validity as a \emph{$k$-median problem}, a popular problem in computer science and operations research. We present conditions under which minimizing the \emph{worst-case}, welfare-based regret among \emph{all} nonrandom schemes that select $k$ sites to experiment is approximately equal---and sometimes exactly equal---to finding the $k$ most central vectors of baseline site-level covariates. The $k$-median problem can be formulated as a linear integer program. Two empirical applications illustrate the theoretical and computational benefits of the suggested procedure.
\end{abstract}

\onehalfspacing

 \section{Introduction}

A common concern in randomized evaluations of new policies is their \emph{external validity}; that is, whether the estimated effects of a policy intervention carry over to new samples or populations \cite[Chapter 8,][]{duflo2007using,al2012generalizability,list2020non,vivalt2020much, gechter2024generalizing}. A recent literature argues that the external validity of randomized evaluations can be improved by explicitly incorporating this goal into the experimental design by, for example, carefully deciding where to experiment; see \cite{degtiar2023review} for an overview and references and \cite{chassang2022designing} for some recent work. For instance, if a researcher has access to multiple sites to experimentally evaluate a new policy, it is possible to use the information available before the evaluation (such as site-level characteristics) to \emph{nonrandomly} select one or more sites. Recently, \cite{egamidesigning} and \cite{gechter2023site} argued that a research design which nonrandomly selects experimental sites---referred to broadly in the literature as \emph{purposive sampling} \cite[p. 511]{cook2002experimental}---may improve the external validity of randomized evaluations.

We present a decision-theoretic justification for viewing the question of how to best nonrandomly select $k$ sites to experiment---or, equivalently,  how to best design a purposive sampling scheme---with the goal of optimizing external validity as a \emph{$k$-median problem}. This is a classical problem in computer science and operations research \citep{shmoys2011design,cohen2022improved}. Broadly speaking, in a $k$-median problem, there is a set of \emph{facilities} and a set of \emph{clients}; the goal is to open at most $k$ facilities and \emph{connect} each client to at least one facility at minimal total connection cost. The $k$-facilities that solve the $k$-median problem are the locations that are closest to the clients, where closeness is defined by the total connection cost. As we will explain below, the $k$ sites selected for experimentation by solving the $k$-median problem will be the $k$ sites that are, broadly speaking, \emph{the most representative} of the target populations to which the new policy will be scaled up. We next provide more specific details on how we use statistical decision theory to establish a formal connection between the \emph{site-selection problem} to optimize external validity and the \emph{k-median problem}.

{\scshape The Site-Selection Problem:} Suppose that some \emph{experimental sites} can be used by a policymaker (or a researcher) to gather experimental evidence on the effectiveness of a policy or treatment of interest. Because of logistical or budget constraints, we assume that the policymaker can choose at most $k$ sites to experiment. Based on the experimental outcomes of these sites, the policymaker will decide whether it is worthwhile to scale up the policy or treatment of interest in a set of \emph{policy-relevant sites}. The experimental and policy-relevant sites may or may not overlap. How should the policymaker decide where to experiment? \cite{gechter2023site} formalize the site-selection problem using a statistical decision theory framework in which, after experimentation, the policymaker observes a vector of unbiased and normally distributed estimates of the policy effects of interest at the selected experimental sites. The policy effects are allowed to be heterogeneous across sites. The policymaker is assumed to have access to a vector of baseline \emph{site-level} covariates for all sites.\footnote{The baseline model assumes that the policymaker has no access to individual-level covariates. This means that the policymaker does not have enough information to estimate conditional average treatment effects (CATEs) that could be transported (e.g., via reweighting) to learn the treatment effects of interest at policy-relevant sites. At the end of Section \ref{subsection: model site selection}, we discuss how individual-level covarites, if available, could be used in the site-selection problem.}

{\scshape Connection to the k-median Problem:} Under conditions that will be explained clearly in Sections \ref{sec:notation} and \ref{sec:main}, our main result (Theorem \ref{thm:main}) shows that the \emph{worst-case} (welfare-based) \emph{regret} of \emph{any} purposive sampling scheme that experiments in at most $k$ sites is approximately equal---and \emph{sometimes exactly equal}---to the objective function of a $k$-median problem with the following features: the sites available for experimentation are treated as facilities; the sites where the policymaker would like to implement the new policy are treated as clients; the connection cost between clients and facilities is proportional to the Euclidean distance between corresponding site-level covariate values. Importantly, the solution of the $k$-median problem is exactly minimax-regret optimal (among purposive sampling schemes) when i) the candidate sites for experimentation and the policy-relevant sites are \emph{disjoint} and ii) the treatment effect heterogeneity across sites accommodated in the parameter space is \emph{substantial} (in a sense Theorem \ref{thm:main} makes precise). When ii) holds but i) does not, we show that the optimized value of the $k$-median problem only approximates the minimax-regret value, and that the approximation error improves as either the number of policy-relevant sites increases or the experiments conducted become more precise. The link with the $k$-median problem established in this paper thus shows that selecting the $k$ sites that have the most central vector of covariates tends to optimize external validity (in a minimax-regret sense).  

In order to formalize the connection between the site selection problem and the $k$-median problem, we leverage recent developments in the literature on treatment choice problems with partial identification \citep{yata2021,ishihara2021, olea2023decision}. Note first that in the $k$-median problem, each client is typically connected to only one facility (otherwise, the connection cost would not be minimized). In the context of the site selection problem, a \emph{connection} between a site $i$ (where an experiment was conducted) and a site $j$ (where no experimentation occurred) means that the estimated effects obtained in site $i$ are used to inform policy decisions in site $j$. This means that even when experimental outcomes in $k>1$ sites are available, the solution to the $k$-median problem would prescribe the policymaker to only use the information from the site with the smallest connection cost: the \emph{nearest neighbor}. But when is it decision-theoretically optimal for a policymaker to behave in this way? After $k$ sites have been selected for experimentation,
It turns out that, conditionally on having selected experimental sites, the policymaker faces the ``evidence aggregation'' problem introduced in \cite{ishihara2021} and recently discussed in \cite{yata2021}, \cite{christensen2022optimal}, and \cite{olea2023decision}. This literature has established conditions under which it is minimax-regret optimal to base decisions on the nearest neighbor's data, provided the true treatment effects are allowed to vary substantially as a function of site-level covariates; see, for example, \citet[][Proposition 1]{olea2023decision}. Moreover, in this case, the optimized worst-case regret is proportional to the distance between the baseline covariates of the site of interest and those of its nearest neighbor. Thus, these results on treatment choice problems with partial identification are a building block of our decision-theoretic justification for the use of the $k$-median (clustering)  problem to optimize external validity. 

{ \scshape Algorithms:} The connection with the $k$-median problem clarifies the problem's difficulty but also suggests efficient algorithms. To see the need for those, recall that any purposive sampling scheme optimizes over ``$n$ choose $k$'' potential site combinations, where $n$ is the total number of sites available for experimentation. Optimization over purposive sampling schemes also requires the evaluation of some measure of performance that depends on the dimension of the site-level covariates, $d$. Thus, optimally choosing a purposive sampling scheme by simply evaluating the performance of each combination is costly when ``$n$ choose $k$'' or $d$ is large.

Conceptually, the connection with the $k$-median problem allows us to understand the computational complexity of finding a minimax-regret optimal purposive sampling scheme under the conditions of Theorem \ref{thm:main}. Since the $k$-median problem is known to be NP-hard \citep{kariv1979algorithmic, megiddo1984complexity,cohen2018bane}, there is no algorithm for finding a minimax-regret optimal purposive sampling scheme whose computational time scales polynomially in all of the problem's inputs; namely, $(d,k,n)$. 

However, from a practical perspective, it is known that the $k$-median problem admits a linear integer program formulation \citep[Chapter 7.7, p. 185]{shmoys2011design}, and is routinely solved using off-the-shelf algorithms such as different versions of the branch-and-bound method \citep[][Chapter 11.1]{bertsimas2005optimization}. In addition, different branch-and-bound algorithms either find a solution with provable optimality or, if stopped early, generate a report on the suboptimality of the solution found \citep{AOSBertsimas}. As we explain later, in our application with $n=41$, $d=13$, any problem for $k \in \{1,\ldots, 10\}$ can be solved to provable optimality in just a few seconds using a personal laptop (see Figure \ref{fig:runtime} in Section \ref{section:application}).

{\scshape Related Literature:} Our results build on two recent papers that present novel purposive sampling strategies to select experimental sites so as to optimize external validity. \cite{gechter2023site} present an elegant decision-theoretic approach that frames external validity as a policy problem and---under the assumption that the policymaker has a priori information about the effects of the new policy across sites---recommend a Bayesian approach for choosing where to experiment. \cite{egamidesigning} use the principle behind synthetic control \citep{abadie2010synthetic} to recommend the \emph{synthetic purposive sampling} of sites; specifically, they select good \emph{donor} sites whose weighted average of covariates is close to those of the sites of interest.\footnote{The idea of using the synthetic control method for experimental design was first  introduced in \cite{abadie2021synthetic}.} It is important to note that the site selection achieved by solving the $k$-median problem can be interpreted as a degenerate synthetic purposive sampling strategy, whereby each unit's associated synthetic unit is just its nearest neighbor.

We also contribute to the literature arguing that a ``modern, decision-theoretic framework can help clarify important practical questions of experimental design'' \citep{banerjee2017decision}. Although decision-theoretic approaches to external validity are recent, a large body  of work used statistical decision theory to analyze other aspects of experimental design such as sample size determination \citep{raiffa1961applied,manski2016sufficient,manski2019trial,azevedo2020b,azevedo2023b,hu2024minimax}. Finally, our notion of external validity is conceptually related to areas of research in econometrics, machine learning, and statistics such as domain adaptation \citep{mansour2009domain, ben2010theory}, distributional shifts \citep{duchi2021learning,sugiyama2007covariate,adjaho2022externally}, learning under biased sampling \citep{sahoo2022learning}, and cross-domain transfer estimation and performance \citep{andrews2022transfer, menzel2023transfer}. To the best of our knowledge, none of these papers contains decision-theoretic analyses of \emph{where} to experiment. Our site selection problem is also related to optimal regression design (e.g., \citealt{sacks1984some}) and kriging (e.g., \citealt{stein1999interpolation}); both consider a mean square error criterion with either Bayes or minimax optimality, while our approach focuses on minimax welfare regret optimality. See also \cite{karmakar2022approximation} and references therein for the study of blocked randomization designs in stratified experiments to improve precision of treatment effect estimates.

{\scshape Outline:} This paper is organized as follows. Section \ref{sec:notation} introduces the formal framework. Section \ref{sec:main} presents our main result linking the $k$-median problem to considerations of external validity. Section \ref{section:kmedian} presents the linear integer program formulation of the $k$-median problem. Section \ref{section:application} presents two illustrative empirical applications. Section \ref{section:extensions} considers extensions of the baseline model. Section \ref{sec:conclusion} concludes. Proofs of the main results can be found in Appendix \ref{sec:app.main}. Additional results are collected in the Supplementary Appendix.

\section{Setting up the Decision Problem} \label{sec:notation}

\subsection{Notation and Assumptions}
A policymaker considers a set of $S \in \mathbb{N}$ candidate sites to evaluate and, eventually, implement a new policy of interest. The sites are indexed by $s \in \mathcal{S} \equiv \{1, \ldots, S\}$. For any $\tilde{\mathcal{S}}\subseteq\mathcal{S}$, let $\#\tilde{\mathcal{S}}: =\textrm{card}(\tilde{\mathcal{S}})$. In order to accommodate situations in which the policymaker is not necessarily able to experiment in all the candidate sites, we assume there is a nonempty subset $\mathcal{S}_{E} \subseteq \mathcal{S}$ of what we term \emph{experimental} sites.\footnote{As discussed in \cite{allcott2015site}, there are often systematic reasons determining the eligibility of certain sites for experimentation. For example, in microfinance RCT studies, experiments often require large sample sizes and well-managed microfinance institutions (MFIs), characteristics more commonly found in older and larger institutions. To qualify for clinical trials involving a new surgical procedure, hospitals and surgeons need to have both experience in the procedure and a history of low mortality rates.} Throughout the paper, and to avoid a trivial instance of the site selection problem, we assume that there are at least two experimental sites (i.e., $\#\mathcal{S}_{E} \geq 2$). It is also possible that institutional restrictions preclude the eventual implementation of the policy of interest in all of the candidate sites. Thus, it will be convenient to denote by $\mathcal{S}_{P} \subseteq \mathcal{S}$ the nonempty set of \emph{policy} or \emph{policy-relevant} sites. 

We allow for overlap between experimental and policy sites, i.e., $\mathcal{S}_{E} \cap \mathcal{S}_{P} \neq \emptyset$. However, the case in which these sets are disjoint allows for particularly succinct analysis. This case also makes the extrapolation problem particularly stark: A policy decision must be made in sites where no experimental evaluations are available.   

For each site, the policymaker observes a vector of \emph{site characteristics} $X_s \in \mathbb{R}^d$ that may affect the treatment effect. Thus, we allow for \emph{treatment effect heterogeneity} across sites, but we restrict this heterogeneity by assuming that it depends on observable \emph{site-level} characteristics. Specifically, let the function $\tau:\mathbb{R}^d \rightarrow \mathbb{R}$ define conditional (on $X$) average treatment effects. We posit that any pair of sites with similar observed characteristics also have similar treatment effects, formally by assuming the following: 
\begin{assumption} \label{asm:Lips}
    $\tau$ is a Lipschitz function (with respect to the Euclidean norm) with known constant $C$. That is, for any $x,x' \in \mathbb{R}^{d}$, $|\tau(x)-\tau(x')| \leq C \| x-x' \|,$ where $\|\cdot\|$ denotes the Euclidean norm.
\end{assumption}
\noindent This assumption is not innocuous, but  
we will argue that it can be replaced by other continuity-like conditions. For example, in Section \ref{subsec:other.distance}, we give a version of our main results using a weaker version that includes \emph{H\"{o}lder continuous} functions. Moreover, in Section \ref{subsec:cc}, we discuss how to further relax this assumption by assuming that $\tau$ belongs to a  convex and centrosymmetric space of functions.\footnote{Such a restriction has been used recently in the econometrics literature to analyze estimation, inference, and other decision problems that arise in a nonparametric regression setup \citep{yata2021,armstrong2018optimal}.} In Section \ref{subsec:unobserved}, we also discuss how Assumption \ref{asm:Lips} can be modified to accommodate some forms of unobserved treatment heterogeneity. Perhaps it is important to note at this point that the purpose of our paper is not to present a solution to the site selection problem under the most general set of conditions on the parameter space and the statistical model. Instead, our goal is to show that under reasonable assumptions, the principle of selecting the sites that are most representative of the policy-relevant sites can be fully rationalized using a statistical decision-theoretic framework, and in addition can be implemented using off-the-shelf algorithms in optimization. We let $\textrm{Lip}_C(\mathbb{R}^d)$ denote the space of all Lipschitz functions from $\mathbb{R}^{d}$ to $\mathbb{R}$ with constant $C$. 

In addition to Assumption \ref{asm:Lips}, we also impose a regularity condition on site-level covariates. That is, we assume that all observed covariates are different:
\begin{assumption} \label{asm:asm1}
        $X_{s} \neq X_{s'} \ \forall \ s,s' \in \mathcal{S}$.
\end{assumption}
\noindent Even if this were not the case in raw data, one would presumably want to induce it by adding site fixed effects.

\subsection{Statistical Model for the Site Selection Problem} \label{subsection: model site selection}

As in \cite{gechter2023site}, the policymaker must choose a strict subset of experimental sites $\mathscr{S} \subset \mathcal{S}_{E}$.\footnote{We require $\mathscr{S}$ to be a strict subset of $\mathcal{S}_{E}$ because if we allow the policymaker to experiment in all sites, and there is no cost of experimentation that varies at the site level, then there is no site selection problem. We consider the case in which $\mathscr{S}$ is allowed to equal $\mathcal{S}_{E}$ in Section \ref{subsection:fixed_costs}.} 
As discussed in the introduction, we focus on the case in which there is a restriction on the total number of experimental sites that the policymaker can select. That is, there is an integer $k \in \mathbb{N}$, $k < \#\mathcal{S}_{E}$, such that $\mathscr{S}$ must belong to the set 
\begin{equation*} \label{eqn:feasible_set} 
\mathcal{A}(k):=  \{ \mathscr{S} \subset \mathcal{S}_{E} \: | \: \#\mathscr{S} \leq k  \}. 
\end{equation*}

\noindent Our notation also allows for the possibility that the policymaker does not want to experiment at all. 

If the policymaker decides to experiment in a nonempty set $\mathscr{S} \in \mathcal{A}(k)$ of cardinality $\#\mathscr{S} \leq k$, then she will observe $\#\mathscr{S}$ treatment effect estimates. We collect these estimates in a vector of dimension $\#\mathscr{S}$. In a slight abuse of notation, let $\mathscr{S}_{1} < \mathscr{S}_{2} < \ldots < \mathscr{S}_{\#\mathscr{S}}$ denote the indices of the $\#\mathscr{S}$ experimental sites in $\mathscr{S}$. Letting $\widehat{\mathcal{\tau}}_{s}$ denote the estimated treatment effect in site $s$, we can define the vector
\begin{equation*} \label{eqn:estimated_effects}
\widehat{\tau}_{\mathscr{S}} : = ( \widehat{\tau}_{\mathscr{S}_1} , \ldots, \widehat{\tau}_{\mathscr{S}_{\#\mathscr{S}}}   )^{\top}. 
\end{equation*}
Analogously, we can denote the vector of true treatment effects for the experimental sites in $\mathscr{S}$ as 
\begin{equation*} \label{eqn:true_effects}
\tau_{\mathscr{S}} : = ( \tau(X_{\mathscr{S}_1}) , \ldots, \tau(X_{\mathscr{S}_{\#\mathscr{S}}})   )^{\top}. 
\end{equation*}
We assume that the treatment effect estimators obtained in each site are normally (and independently) distributed around the vector of true effects:
\begin{equation} \label{eqn:stat_model}
\widehat{\tau}_{\mathscr{S}} \sim \mathcal{N}_{\#\mathscr{S}}
\left(
\tau_{\mathscr{S}} \: , \: \Sigma_{\mathscr{S}}
\right), \textrm{ where } \Sigma_{\mathscr{S}} := \operatorname{diag}(\sigma_{\mathscr{S}_1}^2,\ldots,\sigma_{\mathscr{S}_{\#\mathscr{S}}}^2).
\end{equation}
Following \cite{gechter2023site}, we furthermore treat $\Sigma_{\mathscr{S}}$ as known.

The normality assumption in \eqref{eqn:stat_model} is unlikely to hold exactly; however, it is common to assume that treatment effect estimates from randomized controlled trials are asymptotically normal with asymptotic variances that can be estimated consistently. Treating the limiting normal model as an exact finite-sample statistical model eases exposition and allows us to focus on the core features of the site selection problem. 
Indeed, working directly with such a limiting model is common in applications of statistical decision theory to econometrics; see  \cite{muller2011efficient} and the references therein for theoretical support and applications in the context of testing problems and \citet{ishihara2021}, \citet{stoye2012minimax}, or \citet{tetenov2012statistical} for precedents in closely related work. \cite{gechter2023site} use the same statistical model, but our parameter space has a more specific form as treatment effects are controlled by the Lipschitz function $\mathcal{\tau}$.

After observing $\widehat{\mathcal{\tau}}_{\mathscr{S}}$, the policymaker chooses an \emph{action} $a_s\in[0,1]$ at each policy-relevant site $s \in \mathcal{S}_{P}$. We interpret this action as the proportion of a population in the site that will be randomly assigned to the new policy. Thus, $a_s=1$ means that everyone in site $s$ is exposed to the new policy, and $a_s=0$ means that the status quo at the site is preserved. Under this interpretation, $a_s=.5$ means that 50\% of the population at site $s$ will be exposed at random to the new policy; however, the formal development equally applies to either individual or population-level randomization. Our interpretation abstracts from integer issues arising with small populations. 

Thus, we can define a \emph{treatment rule} $T$ as a (measurable) function $T: \mathbb{R}^{\#\mathscr{S}} \rightarrow [0,1]^{\#\mathcal{S}_P}$ that maps experimental outcomes to (possibly) randomized policy actions in each of the policy-relevant sites. It will sometimes be convenient to use $T_{s}(\cdot)$ to denote the specific treatment rule for site $s \in \mathcal{S}_{P}$ implied by $T$ and $\mathcal{T}_{\mathscr{S}}$ to denote the set of all treatment rules. Note that we index the treatment rules by the selected experimental sites, $\mathscr{S}$, to be explicit about the fact that the data used to inform policy will vary depending on the choice of $\mathscr{S}$. We call $T\in \mathcal{T}_{\mathscr{S}}$ \emph{nonrandomized} if for every $s \in \mathcal{S}_P$ we have $T_{s}( z ) \in\{0,1\}$ for (Lebesgue) almost every $z \in\mathbb{R}^{\#\mathscr{S}}$. Otherwise, we say that the rule is \emph{randomized}.
For the moment, and for the sake of exposition, we assume that there is no cost of experimentation. While this assumption is clearly unrealistic, we later show that the main conclusions of our analysis are robust to adding fixed costs to the objective function of the $k$-median problem.    

Our setup only allows for treatment effect heterogeneity as a function of site-level covariates. This is partly motivated by the fact that detailed individual-level data may not be available for all sites in empirical applications. If individual-level covariates were available, and if there were no heterogeneity of CATEs (defined via individual-level characteristics)  across sites, one could estimate CATEs with data from experimental sites and then reweight them to derive average treatment effects for policy-relevant sites; see, for example, the discussion in \citet[][p. 493]{list2024optimally}. In this case, many state-of-the-art methods for estimating ATE with experimental data could be employed. However, one might still be concerned about external validity, namely that those CATEs vary at the site level. A pure  reweighting method would then not work, and some extrapolation would still be required. Our approach already offers a practical solution to deal with the individual-level covariates, at least in the case in which there is a single policy-relevant site: One can estimate site-specific CATEs and take $\widehat{\tau}_s$ to be the average of these CATEs over the distribution of covariates in the policy-relevant site. Then we are back to the original set-up of our problem, in which we need to decide how to aggregate these transported estimators to make policy choices in the site of interest. Since the payoff relevant parameter continues to be the site-level treatment effect, we think it is reasonable to model treatment effect heterogeneity as a function of the site-level characteristics.

\subsection{Welfare and Regret} 

We assume that the \emph{welfare} of a decision rule $T$, given that sites $\mathscr{S}$ are selected for experimentation, corresponds to the average expected welfare across policy-relevant sites:
\begin{equation} \label{eqn:welfare}
\mathcal{W}(T,\mathscr{S},\tau) := \frac{1}{\# \mathcal{S}_{P}}\sum_{s \in \mathcal{S}_{P}} \tau(X_{s}) \mathbb{E}_{\tau_{\mathscr{S}}}[T_{s}(\widehat{\tau}_{\mathscr{S}})], 
\end{equation}
where $\mathbb{E}_{\tau_{\mathscr{S}}}[T_{s}(\widehat{\tau}_{\mathscr{S}})]$ means that the expectation is taken assuming $\widehat{\tau}_{\mathscr{S}} \sim \mathcal{N}_{\#\mathscr{S}}(\tau_{\mathscr{S}},\Sigma_{\mathscr{S}})$, and $\tau(X_{s})$ is the true treatment effect at site $s$.

The \emph{regret} of policy $(T,\mathscr{S})$ equals
\begin{equation} \label{eqn:regret}
\mathcal{R}(T,\mathscr{S},\tau) := \frac{1}{\# \mathcal{S}_{P}} \sum_{s \in \mathcal{S}_{P}} \tau(X_{s}) \bigl( \mathbf{1}\{ \tau(X_s) \geq 0\} - \mathbb{E}_{\tau_{\mathscr{S}}}[T_{s}(\widehat{\tau}_{\mathscr{S}})] \bigr).
\end{equation}
Our focus will be on finding the purposive sampling scheme that minimizes worst-case regret. 

\begin{defn}[\emph{MMR optimal purposive sampling scheme and treatment rule}]  \label{def:mmr}
The pair $(T^*,\mathscr{S}^*) \in \mathcal{T}_{\mathscr{S}} \times \mathcal{A}(k)$ is minimax-regret (MMR) optimal among all purposive sampling schemes and treatment rules if
\begin{equation} \label{eqn:MMR}
\sup_{\tau \in \textrm{Lip}_{C}(\mathbb{R}^{d})}  \mathcal{R}(T^*,\mathscr{S}^*,\tau) = \inf_{\mathscr{S} \in \mathcal{A}(k), T \in \mathcal{T}_{\mathscr{S}}}  \sup_{\tau \in \textrm{Lip}_{C}(\mathbb{R}^{d})}  \mathcal{R}(T,\mathscr{S},\tau).
\end{equation}
\end{defn}

In the standard definition of MMR optimality, the decision maker may select \emph{randomized} decision rules. Definition \ref{def:mmr} implies an asymmetric treatment of randomization: While we allow the policymaker to randomize policy implementation choices, we are restricting her to pick the experimental sites in a deterministic fashion. In Section \ref{section:extensions}, we discuss challenges we encountered in trying to allow for the random selection of experimental sites.   

\begin{rem} \label{rem:double_inf}
It will sometimes be convenient to rewrite the right-hand side of \eqref{eqn:MMR} as
\[\inf_{\mathscr{S} \in \mathcal{A}(k)} \left( \inf_{T \in \mathcal{T}_{\mathscr{S}}} \sup_{\tau \in \textrm{Lip}_{C}(\mathbb{R}^{d})}  \mathcal{R}(T,\mathscr{S},\tau) \right). \]
This suggests that, conceptually, the MMR problem can be solved in two steps. First, analyze the problem of policy implementation given the experimental outcomes at sites $\mathscr{S}$; then, optimize over the sites where to experiment. 

This distinction is helpful because the first step is related to \citeauthor{ishihara2021}'s  (\citeyear{ishihara2021} ``evidence aggregation'' problem, a key difference being that we typically have more than one policy-relevant site.
\qed
\end{rem}

\section{Main Result} \label{sec:main}
This section presents our main result: Finding the MMR purposive sampling scheme is approximately, and sometimes exactly, equal to solving a $k$-median problem.
To develop this,  for each policy site $s\in\mathcal{S}_{P}$ and every  $\mathscr{S}\in\mathcal{A}(k)$, denote by $N_{\mathscr{S}}(s)\in\mathscr{S}$ its \emph{nearest neighbor} in $\mathscr{S}$   (or the nearest neighbor with the smallest index in case of multiplicity).  That is, for every $s \in \mathcal{S}_{P}$:
        \[ \|X_{s} -  X_{N_{\mathscr{S}}(s)} \|\leq \|X_{s} -  X_{s'} \| \; \: \forall s' \in \mathscr{S}.\]
Recall the following definition: 

\begin{defn}[\emph{$k$-median problem}] \label{def:k-median}
We say that a purposive sampling scheme, $\mathscr{S} \in \mathcal{A}(k)$, solves the $k$-median problem if it solves
\begin{equation} \label{eqn:k-median}
\inf_{\mathscr{S} \in \mathcal{A}(k)} \sum_{s \in \mathcal{S}_{P} } \|X_{s} - X_{N_{\mathscr{S}}(s)} \|.
\end{equation}
\end{defn}
In Definition \ref{def:k-median}, the policy-relevant sites (with indexes in $\mathcal{S}_{P}$) are \emph{clients} and the experimental sites (with indexes in $\mathcal{S}_{E}$) are \emph{facilities}. The connection cost between a facility $i$ and a client $j$ is the Euclidean distance $\| X_j - X_i\|$. Since the goal of the $k$-median problem is to choose the $k$ facilities that minimize connection cost, each client $s \in \mathcal{S}_{P}$ gets connected to the facility that is closest among those in $\mathscr{S}$. Hence, the term $X_{N_{\mathscr{S}}(s)}$ appears in  Definition \ref{def:k-median}. Finally, Equation \eqref{eqn:k-median} can be also written as 
\begin{equation} \label{eqn:k-median_aux}
\inf_{\mathscr{S} \in \mathcal{A}(k)} \left( \sum_{s \in \mathcal{S}_{P} \cap \mathscr{S} } \|X_{s} - X_{N_{\mathscr{S}}(s)} \|  + \sum_{s \in \mathcal{S}_{P} \backslash \mathscr{S} } \|X_{s} - X_{N_{\mathscr{S}}(s)} \| \right) = \inf_{\mathscr{S} \in \mathcal{A}(k)} \sum_{s \in \mathcal{S}_{P} \backslash \mathscr{S} } \|X_{s} - X_{N_{\mathscr{S}}(s)} \|,
\end{equation}
where the equality follows from the fact that for each facility $s \in \mathcal{S}_{P}$ that is also a client $s \in \mathscr{S}$, connection cost becomes zero. 

Our main result depends on treatment effect heterogeneity being possibly substantial. To formalize this, define $C^*:= \underset{\mathscr{S}\in \mathcal{A}(k), s\in\mathcal{S}_{P} \backslash \mathscr{S}}{\max}\left\{ \sqrt{\frac{\pi}{2}}\frac{\sigma_{N_{{\mathscr{S}}(s)}}}{\left\Vert X_{s}-X_{N_{{\mathscr{S}}(s)}}\right\Vert }\right\}<\infty$. 
Let $\sigma_{E}$ denote the largest standard deviation among the potential experimental sites. We then have:

\begin{theorem} \label{thm:main}
Suppose Assumptions \ref{asm:Lips}-\ref{asm:asm1} hold.  If $C>C^*$, then for any $\mathscr{S} \in \mathcal{A}(k)$:
\begin{equation}
\label{eqn:main_thm}
\Bigg | \left( \inf_{T \in \mathcal{T}_{\mathscr{S}}} \sup_{\tau \in \textrm{Lip}_{C}(\mathbb{R}^{d})}  \mathcal{R}(T,\mathscr{S},\tau) \right) - \frac{C}{2} \frac{1}{\# \mathcal{S}_{P}} \cdot \left(  \sum_{s \in \mathcal{S}_{P} \backslash \mathscr{S} } \|X_{s} - X_{N_{\mathscr{S}}(s)} \| \right) \Bigg|  
\end{equation}
is at most 
\begin{equation}
     B \cdot \sigma_{E} \cdot \frac{\min \{ \# \left( \mathcal{S}_{E} \cap \mathcal{S}_{P} \right) , k\}  }{\# \mathcal{S}_{P}}, \label{eq:slack} 
\end{equation}
where $B \equiv \arg \max_{z \geq 0} z \Phi(-z)$.
\end{theorem}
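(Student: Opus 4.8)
The plan is to prove the bound \eqref{eqn:main_thm}--\eqref{eq:slack} by sandwiching the inner minimax value
\[ M(\mathscr{S}) := \inf_{T \in \mathcal{T}^{1/2}_{\mathscr{S}}} \sup_{\tau \in \textrm{Lip}_{C}(\mathbb{R}^{d})} \mathcal{R}(T,\mathscr{S},\tau) \]
between $\frac{C}{2\#\mathcal{S}_P}\sum_{s\in\mathcal{S}_P\setminus\mathscr{S}}\|X_s-X_{N_{\mathscr{S}}(s)}\|$ and that same quantity plus the slack \eqref{eq:slack}. The lower bound will turn out to be \emph{exact} and to require no restriction on $C$; the hypothesis $C>C^*$ and the slack term will enter only through the upper bound, and the slack will be traceable entirely to the policy sites that are themselves selected for experimentation.

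First I would establish the lower bound by shrinking the adversary's feasible set to the $C$-Lipschitz functions that vanish on the selected sites, $\{\tau : \tau(X_{s'})=0\ \forall s'\in\mathscr{S}\}$. For any such $\tau$ we have $\tau_{\mathscr{S}}=\mathbf{0}$, hence $\widehat{\tau}_{\mathscr{S}}\sim\mathcal{N}_{\textrm{card}(\mathscr{S})}(\mathbf{0},\Sigma_{\mathscr{S}})$, so the defining property of $\mathcal{T}^{1/2}_{\mathscr{S}}$ forces $\mathbb{E}_{\tau_{\mathscr{S}}}[T_s(\widehat{\tau}_{\mathscr{S}})]=1/2$ for \emph{every} $T\in\mathcal{T}^{1/2}_{\mathscr{S}}$ and every $s$. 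Substituting into \eqref{eqn:regret} collapses the regret to $\frac{1}{2\#\mathcal{S}_P}\sum_{s\in\mathcal{S}_P}|\tau(X_s)|$, which no longer depends on $T$, so the $\inf_T$ is vacuous on this subfamily. It then remains to maximize $\sum_{s\in\mathcal{S}_P}|\tau(X_s)|$ over $C$-Lipschitz $\tau$ vanishing on $\mathscr{S}$, and the scaled distance function $\tau^*(x):=C\cdot\min_{s'\in\mathscr{S}}\|x-X_{s'}\|$ is feasible and attains $|\tau^*(X_s)|=C\|X_s-X_{N_{\mathscr{S}}(s)}\|$ at each policy site, giving $M(\mathscr{S})\ge\frac{C}{2\#\mathcal{S}_P}\sum_{s\in\mathcal{S}_P\setminus\mathscr{S}}\|X_s-X_{N_{\mathscr{S}}(s)}\|$ (selected policy sites contribute zero).

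For the upper bound I would exhibit one coordinatewise rule and bound its worst-case regret by the sum over policy sites of per-site worst cases, a valid over-estimate since $\sup_\tau\sum_s(\cdot)\le\sum_s\sup_\tau(\cdot)$ and each summand in \eqref{eqn:regret} depends on $\tau$ only through the relevant pair of coordinates once the rule is fixed. For a selected policy site $s\in\mathcal{S}_P\cap\mathscr{S}$ I use the own-estimate threshold $\mathbf{1}\{\widehat{\tau}_s\ge0\}$, which lies in $\mathcal{T}^{1/2}_{\mathscr{S}}$ and has worst-case regret $\sigma_s\max_{z\ge0}z\Phi(-z)\le\sigma_E B$ (using $\sigma_s\le\sigma_E$ and $\max_{z\ge0}z\Phi(-z)=B\Phi(-B)\le B$); since there are at most $\min\{\#(\mathcal{S}_E\cap\mathcal{S}_P),k\}$ such sites, these terms reproduce exactly \eqref{eq:slack}. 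For an extrapolation site $s\in\mathcal{S}_P\setminus\mathscr{S}$ I use a rule acting on the nearest-neighbor coordinate $\widehat{\tau}_{N_{\mathscr{S}}(s)}$ whose worst-case regret equals $\frac{C}{2}\|X_s-X_{N_{\mathscr{S}}(s)}\|$, matching the lower-bound term. Summing and dividing by $\#\mathcal{S}_P$ yields the upper bound, and combining with the exact lower bound gives the theorem.

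The crux, and the step I expect to be the main obstacle, is constructing the extrapolation rule and showing its worst-case regret is exactly $\frac{C}{2}\delta_s$ with $\delta_s:=\|X_s-X_{N_{\mathscr{S}}(s)}\|$. Among rules depending on the single coordinate $\widehat{\tau}_{N_{\mathscr{S}}(s)}$, membership in $\mathcal{T}^{1/2}_{\mathscr{S}}$ (required for \emph{all} diagonal $\Sigma$) is equivalent to the antisymmetry $T(-x)=1-T(x)$, so the acceptance curve $p(\theta')=\mathbb{E}_{\mathcal{N}(\theta',\sigma^2)}[T]$, with $\sigma=\sigma_{N_{\mathscr{S}}(s)}$, satisfies $p(0)=1/2$ and $p(-\theta')=1-p(\theta')$. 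Writing $\theta=\tau(X_s)$, $\theta'=\tau(X_{N_{\mathscr{S}}(s)})$ with $|\theta-\theta'|\le C\delta_s$, the per-site worst case is $\le\frac{C}{2}\delta_s$ precisely when $1-p(\eta)\le\frac{C\delta_s}{2(\eta+C\delta_s)}$ for all $\eta\ge0$ (the symmetric range following by antisymmetry). Since any Gaussian-convolution curve obeys $p'\le 1/(\sigma\sqrt{2\pi})$, a $p$ matching the target slope $p'(0)=1/(2C\delta_s)$ exists iff $1/(2C\delta_s)\le 1/(\sigma\sqrt{2\pi})$, i.e.\ $C\delta_s\ge\sqrt{\pi/2}\,\sigma$, which is exactly $C>C^*$; indeed the linear target $p(\eta)=\tfrac12+\tfrac{\eta}{2C\delta_s}$ dominates the required envelope. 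Verifying the \emph{global} inequality (not merely the slope at $0$) and exhibiting a genuine $[0,1]$-valued $T$ whose convolution dominates the envelope is the delicate part; I would either construct it directly or invoke the one-site evidence-aggregation result (Proposition~1 in \cite{olea2023decision}; see also \cite{ishihara2021,yata2021}).
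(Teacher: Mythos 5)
Your proposal is correct and follows essentially the same route as the paper: the lower bound via the scaled distance-to-$\mathscr{S}$ function $\tau^*$ combined with the $\mathcal{T}^{1/2}_{\mathscr{S}}$ normalization (the paper's Lemma \ref{lem:Lemma2}), and the upper bound via the coordinatewise rule---own-estimate threshold on $\mathcal{S}_P\cap\mathscr{S}$, smoothed nearest-neighbor rule on $\mathcal{S}_P\setminus\mathscr{S}$---with $\sup_\tau\sum_s\le\sum_s\sup_\tau$ and an appeal to the one-site evidence-aggregation result of \cite{olea2023decision} for the exact per-site value $\tfrac{C}{2}\|X_s-X_{N_{\mathscr{S}}(s)}\|$ (the paper's Lemma \ref{lem:Lemma1}). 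Your handling of the threshold-rule term, bounding $\sigma_s\max_{z\ge0}z\Phi(-z)=\sigma_sB\Phi(-B)\le\sigma_sB$, is in fact slightly more careful than the paper's statement of \eqref{eqn:aux1_lemma1} and still yields the slack \eqref{eq:slack}.
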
 

\begin{proof} 
Follows directly from Lemma \ref{lem:Lemma1} and \ref{lem:Lemma2} below. 
\end{proof}

The left-hand expression in \eqref{eqn:main_thm}, i.e., $\inf_{T \in \mathcal{T}_{\mathscr{S}}} \sup_{\tau \in \textrm{Lip}_{C}(\mathbb{R}^{d})}  \mathcal{R}(T,\mathscr{S},\tau)$, is the \emph{worst-case regret} of the purposive sampling scheme $\mathscr{S}$ assuming optimality of subsequent treatment choices. Theorem \ref{thm:main} thus shows that the (optimized) worst-case regret of any purposive sampling scheme can be uniformly approximated (after scaling by the factor $C/(2 \# \mathcal{S}_{P}$)) by the objective function of the \emph{k-median} problem in Definition \ref{def:k-median}, which in turn is the objective function on the right-hand side of Equation \eqref{eqn:k-median_aux}. Note that our Assumption \ref{asm:Lips} implies that the treatment effect heterogeneity between site $s$ and $s^\prime$ is  governed by the product of  $C$ and $\Vert X_s-X_{s^\prime}\Vert$. Given $\mathscr{S}\in \mathcal{A}(k)$, a smaller value of  $\left\Vert X_{s}-X_{N_{{\mathscr{S}}(s)}}\right\Vert$ for  $s\in\mathcal{S}_P$ indicates more similarity between  policy and experimental sites, i.e., less treatment heterogeneity with $C$ fixed.  Consequently, a larger $C^*$ is needed to allow for overall treatment heterogeneity to be substantial.\qed

Under conditions of Theorem \ref{thm:main},  when the candidate sites for experimentation $(\mathcal{S}_{E})$ and the policy-relevant sites $(\mathcal{S}_{P})$ are disjoint, then
    for any $\mathscr{S} \in \mathcal{A}(k)$:
    \begin{equation} \label{eqn:same_objective_functions}
     \inf_{T \in \mathcal{T}_{\mathscr{S}}} \sup_{\tau \in \textrm{Lip}_{C}(\mathbb{R}^{d})}  \mathcal{R}(T,\mathscr{S},\tau)  = \frac{C}{2} \frac{1}{\# \mathcal{S}_{P}} \cdot  \sum_{s \in \mathcal{S}_{P} \backslash \mathscr{S} } \|X_{s} - X_{N_{\mathscr{S}}(s)} \| . 
    \end{equation}
    Consequently, any solution of the $k$-median problem is then an \emph{exact} minimax-regret solution for the site selection problem. This follows from the fact that if a purposive sampling scheme, $\mathscr{S}^*$, minimizes the right-hand side of Equation \eqref{eqn:same_objective_functions}, then it will also minimize the left-hand side, which, by Remark \ref{rem:double_inf}, defines a minimax-regret optimal purposive sampling scheme.  \qed

Theorem \ref{thm:main} implies that even in the case where $\mathcal{S}_{E} \cap \mathcal{S}_{P} \neq \emptyset$,
\begin{equation*}
\Bigg | \inf_{\mathscr{S} \in \mathcal{A}(k)}\left( \inf_{T \in \mathcal{T}_{\mathscr{S}}} \sup_{\tau \in \textrm{Lip}_{C}(\mathbb{R}^{d})}  \mathcal{R}(T,\mathscr{S},\tau) \right) - \frac{C}{2} \frac{1}{\# \mathcal{S}_{P}} \cdot \inf_{\mathscr{S} \in \mathcal{A}(k)}\left(  \sum_{s \in \mathcal{S}_{P} \backslash \mathscr{S} } \|X_{s} - X_{N_{\mathscr{S}}(s)} \| \right) \Bigg|  
\end{equation*}
is at most equal to \eqref{eq:slack}. Thus, the quality of this approximation improves as either the number of policy-relevant sites increases or the potential experiments conducted become more precise ($\sigma_E$ becomes smaller). The approximation deteriorates as $k$ increases. This happens because in approximating the worst-case regret of a purposive sampling scheme, we ignored the regret associated with sites selected for experimentation; see also further discussion of Lemma \ref{lem:Lemma1} below. This means that, as $k$ increases, more sites will be ignored in our approximation and the upper bound will become looser. \qed

Theorem \ref{thm:main} follows from the following two lemmas, which bound the site selection problem's MMR value from above and below. 

\begin{lem}[Upper bound] \label{lem:Lemma1} 
Suppose Assumptions \ref{asm:Lips}-\ref{asm:asm1} hold.  For every $\mathscr{S} \in \mathcal{A}(k)$, there exists a constant $C(\mathscr{S})$ such that, if $C>C(\mathscr{S})$, then
\begin{equation*}
 \inf_{T \in \mathcal{T}_{\mathscr{S}}} \sup_{\tau \in \textrm{Lip}_{C}(\mathbb{R}^{d})}  \mathcal{R}(T,\mathscr{S},\tau)  ~\leq ~\; 
 \frac{B}{\# \mathcal{S}_{P}} \sum_{s \in \mathscr{S} \cap \mathcal{S}_{P}} \sigma_s   
~+~  \frac{C}{2} \frac{1}{\# \mathcal{S}_{P}} \sum_{s \in \mathcal{S}_{P} \backslash \mathscr{S} } \|X_{s} - X_{N_{\mathscr{S}}(s)} \| ,
\end{equation*}
where $B \equiv \arg \max_{z \geq 0} z \Phi(-z)$.
\end{lem}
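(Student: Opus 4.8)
The plan is to prove the stated upper bound by exhibiting one explicit feasible treatment rule and controlling its worst-case regret. Since the left-hand side is an infimum over $T\in\mathcal{T}^{1/2}_{\mathscr{S}}$, it suffices to construct a single rule $T\in\mathcal{T}^{1/2}_{\mathscr{S}}$ and bound $\sup_{\tau}\mathcal{R}(T,\mathscr{S},\tau)$. I would let each component $T_s$ depend only on the nearest-neighbor coordinate $\widehat\tau_{N_{\mathscr{S}}(s)}$, and then use the elementary inequality $\sup_{\tau}\frac{1}{\#\mathcal{S}_P}\sum_s(\cdots)\le\frac{1}{\#\mathcal{S}_P}\sum_s\sup_{\tau}(\cdots)$ to decouple the problem site by site. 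For each $s$ the summand depends on $\tau$ only through the pair $(\tau(X_s),\tau(X_{N_{\mathscr{S}}(s)}))$, which by Assumption \ref{asm:Lips} is constrained only through $|\tau(X_s)-\tau(X_{N_{\mathscr{S}}(s)})|\le C\|X_s-X_{N_{\mathscr{S}}(s)}\|$; relaxing each inner supremum to all pairs obeying this two-point constraint can only enlarge it, which is harmless for an upper bound. I would then split $\mathcal{S}_P$ into the overlap sites $\mathscr{S}\cap\mathcal{S}_P$ and the remaining sites $\mathcal{S}_P\setminus\mathscr{S}$.

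For an overlap site $s\in\mathscr{S}\cap\mathcal{S}_P$ we observe $\widehat\tau_s\sim\mathcal{N}(\tau(X_s),\sigma_s^2)$ directly ($N_{\mathscr{S}}(s)=s$, zero distance), and I would take the sign rule $T_s(\cdot)=\mathbf 1\{\widehat\tau_s\ge0\}$. Writing $\mu=\tau(X_s)$ and $z=|\mu|/\sigma_s$, the summand equals $|\mu|\,\Phi(-|\mu|/\sigma_s)=\sigma_s\,z\Phi(-z)\le\sigma_s B$ with $B=\arg\max_{z\ge0}z\Phi(-z)$ (the cases $\mu\ge0$ and $\mu<0$ being symmetric). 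Summing yields the first term $\frac{B}{\#\mathcal{S}_P}\sum_{s\in\mathscr{S}\cap\mathcal{S}_P}\sigma_s$, and the sign rule is symmetric in its argument so it lies in $\mathcal{T}^{1/2}_{\mathscr{S}}$.

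For $s\in\mathcal{S}_P\setminus\mathscr{S}$, writing $j=N_{\mathscr{S}}(s)$, $\delta_s=\|X_s-X_j\|>0$ and $v=C\delta_s$, I would instead use the smoothed probit rule $T_s(\cdot)=\Phi(\widehat\tau_j/h_s)$ with a site-specific bandwidth $h_s$. The convolution identity $\mathbb{E}_{\widehat\tau_j\sim\mathcal{N}(\mu_j,\sigma_j^2)}[\Phi(\widehat\tau_j/h_s)]=\Phi(\mu_j/\widetilde\sigma_s)$, where $\widetilde\sigma_s=\sqrt{h_s^2+\sigma_j^2}$, shows this rule is symmetric (hence in $\mathcal{T}^{1/2}_{\mathscr{S}}$) and reduces its expected action to $\Phi(\mu_j/\widetilde\sigma_s)$. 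Maximizing the summand over the two-point constraint (the adversary sets $\mu_j=\mu_s-v$ when $\mu_s\ge0$, symmetrically otherwise) gives $\sup_{\mu\ge0}\mu\,\Phi((v-\mu)/\widetilde\sigma_s)$. I would calibrate $\widetilde\sigma_s=v\sqrt{2/\pi}$, which makes $\mu=v$ a stationary point with value exactly $v/2=C\delta_s/2$; this is feasible precisely when $h_s^2=\tfrac{2}{\pi}(C\delta_s)^2-\sigma_j^2\ge0$, i.e.\ when $C>C(\mathscr{S}):=\max_{s\in\mathcal{S}_P\setminus\mathscr{S}}\sqrt{\pi/2}\,\sigma_{N_{\mathscr{S}}(s)}/\|X_s-X_{N_{\mathscr{S}}(s)}\|$, which is the threshold in the statement (and satisfies $C(\mathscr{S})\le C^*$). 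Summing the per-site values $C\delta_s/2$ over $\mathcal{S}_P\setminus\mathscr{S}$ produces the second term.

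The crux is the scale-free inequality $g(t):=t\,\Phi((1-t)\sqrt{\pi/2})\le\tfrac12$ for all $t\ge0$ (obtained by rescaling $\mu=vt$), which certifies that $v/2$ is genuinely the supremum above and not merely a stationary value. I would prove it by showing $t=1$ is the global maximizer: a direct computation gives $g''(t)=-c\,\phi(c(1-t))\,[2+c^2t(1-t)]$ with $c=\sqrt{\pi/2}$, which is negative on an interval $[0,\bar t)$ with $\bar t>1$; since $g'(1)=0$, this forces $g$ increasing on $[0,1]$ and decreasing on $(1,\bar t)$. For $t$ beyond some $\bar t'\in(1,\bar t)$ I would close the argument with the Mills-ratio bound $\Phi(-x)<\phi(x)/x$, which reduces $g'(t)<0$ to $c^2 t(t-1)\ge1$, valid once $t$ is large enough, so that the two ranges overlap and cover $(1,\infty)$. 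This monotonicity analysis, pinned to the exact constant $\sqrt{\pi/2}$, is the only non-routine step; the remaining assembly—combining the two groups via $\sup$-of-sum $\le$ sum-of-$\sup$—is bookkeeping, and Theorem \ref{thm:main} then follows by pairing this upper bound with the matching lower bound of Lemma \ref{lem:Lemma2}.
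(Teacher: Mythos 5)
Your proposal is correct and follows essentially the same route as the paper's proof: you construct exactly the rule the paper uses (the sign rule $\mathbf{1}\{\widehat\tau_s\ge 0\}$ at overlap sites and the probit rule $\Phi(\widehat\tau_{N_{\mathscr{S}}(s)}/h_s)$ with $h_s^2=\tfrac{2}{\pi}(C\|X_s-X_{N_{\mathscr{S}}(s)}\|)^2-\sigma_{N_{\mathscr{S}}(s)}^2$ elsewhere), apply the same sup-of-sums $\le$ sum-of-sups decoupling, and arrive at the same per-site values $B\sigma_s$ and $\tfrac{C}{2}\|X_s-X_{N_{\mathscr{S}}(s)}\|$ under the same threshold $C(\mathscr{S})$. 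The only difference is that the paper imports the non-overlap per-site bound from Proposition 1(iii) of \cite{olea2023decision}, whereas you verify the underlying inequality $t\,\Phi\bigl((1-t)\sqrt{\pi/2}\bigr)\le\tfrac12$ directly via the concavity and Mills-ratio argument, which is a correct, self-contained substitute for that citation.
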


\begin{proof}
See Appendix \ref{subsection:proof_Lemma1}.
\end{proof}

\begin{rem} \label{rem:uncapacitated_facility_location_remark}
Lemma \ref{lem:Lemma1} implies that the MMR value of the problem in Definition \ref{def:mmr} can be upper bounded by the solution of the \emph{uncapacitated $k$-facility location problem}.\footnote{Uncapacitated means that there is no capacity constraint on the number of clients that each facility can accommodate. See \cite[Chapter 4.5]{shmoys2011design} and \cite{zhang2007new}.} Just as in the $k$-median problem, there is a set of facilities $\mathcal{F}$ and a set of clients (or cities) $\mathcal{C}$. Now we assume that there is an opening cost $c_i \in \mathbb{R}_{+}$ associated with each facility $i \in \mathcal{F}$. The connection cost between facilities and clients is as before. The goal is to open at most $k$ facilities and connect each client to one facility so that total cost is minimized. Thus, the problem   
\begin{equation*}
\inf_{\mathscr{S} \in \mathcal{A}(k)} \left(  \frac{B}{\# \mathcal{S}_{P}} \sum_{s \in \mathscr{S} \cap \mathcal{S}_{P}} \sigma_s  
~+~  \frac{C}{2} \frac{1}{\# \mathcal{S}_{P}} \sum_{s \in \mathcal{S}_{P} \backslash \mathscr{S} } \|X_{s} - X_{N_{\mathscr{S}}(s)} \| \right)
\end{equation*}
can be viewed as a $k$-facility location problem. Just as before, the set of facilities is $\mathcal{S}_{E}$ and the set of clients is $\mathcal{S}_{P}$. The connection cost between sites $s \in \mathcal{S}_{E}$ and $s' \in \mathcal{S}_{p}$ is $ (C/(2 \#\mathcal{S}_{P})) \| X_s- X_{s'} \|$. The opening cost for any facility $s \in \mathcal{S}_{E}$ is $(B/\#\mathcal{S}_{P}) \sigma_s$. Minimizing the upper bound in Lemma 1 is thus equivalent to solving the $k$-facility location problem. As shown in Appendix \ref{subsection:proof_Lemma1}, the upper bound arises by bounding the worst-case sum of regrets across sites by the corresponding sum of worst-case regrets. For selected sites that are both facilities and clients, the worst-case regret is obtained by solving a point-identified treatment choice problem. In this case, the optimal treatment rule  for each  $s \in \mathcal{S}_{P} \cap \mathscr{S}$ is to simply  treat the whole population if $\hat{\tau}_s\geq0$. 
For policy-relevant sites where no experiment was conducted, it is obtained by solving the partially-identified treatment choice problem in \cite{ishihara2021}, for which the optimal treatment rule for each site $s \in \mathcal{S}_{P} \backslash \mathscr{S}$,  when $C>C^*$,  depends only on the estimate from its nearest neighbor and may randomize \citep{yata2021,olea2023decision}. See Appendix \ref{subsection:proof_Lemma1} for the exact form of these treatment rules. When $\mathcal{S}_E$ and $\mathcal{S}_{P}$ are disjoint, the first component of the upper bound vanishes, and the bound becomes proportional to the solution of the $k$-median problem.  \qed 
\end{rem}
\begin{lem}[Lower bound] \label{lem:Lemma2} 
Suppose Assumptions \ref{asm:Lips}-\ref{asm:asm1} hold.  For every $\mathscr{S} \in \mathcal{A}(k)$:
\begin{equation*}
 \inf_{T \in \mathcal{T}_{\mathscr{S}}} \sup_{\tau \in \textrm{Lip}_{C}(\mathbb{R}^{d})}  \mathcal{R}(T,\mathscr{S},\tau)  \geq \;  
 \frac{C}{2} \frac{1}{\# \mathcal{S}_{P}} \sum_{s \in \mathcal{S}_{P} \backslash \mathscr{S} } \|X_{s} - X_{N_{\mathscr{S}}(s)} \| .
\end{equation*}
\end{lem}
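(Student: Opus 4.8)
The plan is to bound the minimax-regret value from below by exhibiting a \emph{single} adversarial treatment-effect function $\tau^*$ that forces \emph{every} admissible treatment rule to incur regret exactly equal to the claimed quantity. This is cleaner than a Bayes or two-point prior argument, and it works precisely because the restriction to $\mathcal{T}^{1/2}_{\mathscr{S}}$ pins down the behavior of any rule in the configuration where all experimental effects vanish.

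Concretely, I would first define the cone-shaped function $\tau^*(x) := C \min_{s' \in \mathscr{S}} \|x - X_{s'}\|$, namely $C$ times the Euclidean distance from $x$ to the finite set of selected experimental covariates. Since the distance to a set is $1$-Lipschitz, $\tau^* \in \textrm{Lip}_C(\mathbb{R}^d)$. By construction $\tau^*(X_{s'}) = 0$ for every experimental site $s' \in \mathscr{S}$, whereas at a policy site $s \in \mathcal{S}_P \setminus \mathscr{S}$ we have $\tau^*(X_s) = C \min_{s' \in \mathscr{S}} \|X_s - X_{s'}\| = C \|X_s - X_{N_{\mathscr{S}}(s)}\| > 0$, with strict positivity guaranteed by Assumption \ref{asm:asm1}.

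The crux of the argument is then the following. Under $\tau = \tau^*$, every experimental site carries zero true effect, so $\tau^*_{\mathscr{S}} = \mathbf{0}$ and $\widehat\tau_{\mathscr{S}} \sim \mathcal{N}_{\textrm{card}(\mathscr{S})}(\mathbf{0}, \Sigma_{\mathscr{S}})$ with $\Sigma_{\mathscr{S}}$ diagonal. By the defining property of $\mathcal{T}^{1/2}_{\mathscr{S}}$, this immediately gives $\mathbb{E}_{\tau^*_{\mathscr{S}}}[T_s(\widehat\tau_{\mathscr{S}})] = 1/2$ for every $s \in \mathcal{S}_P$ and every $T \in \mathcal{T}^{1/2}_{\mathscr{S}}$. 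Crucially, this value is identical across all admissible rules, so the adversary's choice of $\tau^*$ leaves no room for $T$ to reduce its regret. Substituting into \eqref{eqn:regret}, the sites $s \in \mathcal{S}_P \cap \mathscr{S}$ satisfy $\tau^*(X_s)=0$ and contribute nothing, while each $s \in \mathcal{S}_P \setminus \mathscr{S}$ contributes $\tau^*(X_s)\big(\mathbf{1}\{\tau^*(X_s)\geq 0\} - \tfrac12\big) = \tfrac{C}{2}\|X_s - X_{N_{\mathscr{S}}(s)}\|$. Averaging over policy sites yields $\mathcal{R}(T,\mathscr{S},\tau^*) = \frac{C}{2}\frac{1}{\#\mathcal{S}_P}\sum_{s \in \mathcal{S}_P \setminus \mathscr{S}} \|X_s - X_{N_{\mathscr{S}}(s)}\|$ for \emph{every} admissible $T$.

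To finish, since $\sup_{\tau \in \textrm{Lip}_C(\mathbb{R}^d)} \mathcal{R}(T,\mathscr{S},\tau) \geq \mathcal{R}(T,\mathscr{S},\tau^*)$ equals this same constant for all $T \in \mathcal{T}^{1/2}_{\mathscr{S}}$, taking the infimum over $T$ preserves the inequality and delivers the stated lower bound. Note that, unlike the upper-bound lemma, no size condition on $C$ (no $C > C(\mathscr{S})$ or $C > C^*$) is needed here. I expect the main obstacle to be conceptual rather than computational: recognizing that a single ``all-positive'' worst case --- the distance-to-facilities cone --- already attains the bound, with the $\mathcal{T}^{1/2}$ constraint doing the heavy lifting by fixing $\mathbb{E}[T_s] = 1/2$ and thereby neutralizing any rule's ability to adapt to the unobserved sign of the non-experimental effects.
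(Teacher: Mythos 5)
Your proposal is correct and follows essentially the same route as the paper: the same adversarial cone function $\tau^*(x) = C\min_{s'\in\mathscr{S}}\|x - X_{s'}\|$, the same use of the $\mathcal{T}^{1/2}_{\mathscr{S}}$ constraint to pin $\mathbb{E}[T_s(\widehat{\tau}_{\mathscr{S}})]=1/2$ when $\tau^*_{\mathscr{S}}=\mathbf{0}$, and the same evaluation of regret site by site. The only cosmetic difference is that you invoke the $1$-Lipschitzness of the distance-to-a-set function where the paper verifies the Lipschitz property by an explicit three-case triangle-inequality argument.
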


\begin{proof}
See Appendix \ref{subsection:proof_Lemma2}.
\end{proof}

\begin{rem}
Lemma 2 implies that the MMR value of the problem in Definition \ref{def:mmr} can be lower bounded by the solution of the \emph{k-median problem}
\begin{equation}
\label{eqn:kmedian}
\inf_{\mathscr{S} \in \mathcal{A}(k)}  \left( \frac{C}{2} \frac{1}{\# \mathcal{S}_{P}} \sum_{s \in \mathcal{S}_{P} \backslash \mathscr{S} } \|X_{s} - X_{N_{\mathscr{S}}(s)} \| \right) .
\end{equation}
Theorem \ref{thm:main} thus follows by noting that the upper and lower bound match up to the \emph{opening costs} of the facilities. As noted before, the lower and upper bound match when $\mathcal{S}_{E} \cap \mathcal{S}_{P}=\emptyset$. \qed \end{rem}

\begin{rem}\label{rem:weighted.welfare}

Our results extend to non-equal weights on policy relevant sites with minor modifications.
Denote by $\omega_{s}>0$ a known weight for each site $s\in\mathcal{S}_{P}$.
Without loss of generality, assume that $\sum_{s\in\mathcal{S}_{P}}\omega_{s}=\#\mathcal{S}_{P}$.
We may define weighted welfare as
\[
\mathcal{W}^{\omega}(T,\mathscr{S},\tau):=\frac{1}{\#\mathcal{S}_{P}}\sum_{s\in\mathcal{S}_{P}}\omega_{s}\tau(X_{s})\mathbb{E}_{\tau_{\mathscr{S}}}[T_{s}(\widehat{\tau}_{\mathscr{S}})],
\]
with weighted regret
\begin{equation}
\mathcal{R}^{\omega}(T,\mathscr{S},\tau):=\frac{1}{\#\mathcal{S}_{P}}\sum_{s\in\mathcal{S}_{P}}\omega_{s}\tau(X_{s})\left(\mathbf{1}\{\tau(X_{s})\geq0\}-\mathbb{E}_{\tau_{\mathscr{S}}}[T_{s}(\widehat{\tau}_{\mathscr{S}})]\right).\label{eqn:regret}
\end{equation}
Then, the minimax-regret optimal purposive sampling scheme and treatment rule
are as in Definition \ref{def:mmr} but with the modified $\mathcal{R}^{\omega}$.
Inspecting proofs of Lemmas \ref{lem:Lemma1} and \ref{lem:Lemma2}, we find that the modified problem can be approximated by the alternative
$k$-median problem 
\[
\inf_{\mathscr{S}\in\mathcal{A}(k)}\sum_{s\in\mathcal{S}_{P}}\omega_{s}\|X_{s}-X_{N_{\mathscr{S}}(s)}\|,
\]
i.e., the connection cost between client $j$ and facility $i$ is
now $\omega_{j}\|X_{j}-X_{i}\|$. Let $\omega_{P}:=\max_{s\in\mathcal{S}_{P}}\left\{ \omega_{s}\right\} $.
Theorem \ref{thm:main} can then be modified as follows: For the same
$C^{*}$ defined there and for any $\mathscr{S}\in\mathcal{A}(k)$,
\[
\Bigg|\left(\inf_{T\in\mathcal{T}_{\mathscr{S}}^{1/2}}\sup_{\tau\in\textrm{Lip}_{C}(\mathbb{R}^{d})}\mathcal{R}^{\omega}(T,\mathscr{S},\tau)\right)-\frac{C}{2}\frac{1}{\#\mathcal{S}_{P}}\cdot\sum_{s\in\mathcal{S}_{P}\backslash\mathscr{S}}\omega_{s}\|X_{s}-X_{N_{\mathscr{S}}(s)}\|\Bigg|
\]
is at most 
\[
B\cdot\omega_{P}\cdot\sigma_{E}\cdot\frac{\min\{\#\left(\mathcal{S}_{E}\cap\mathcal{S}_{P}\right),k\}}{\#\mathcal{S}_{P}}.
\]
\qed\end{rem}

We finish this section by mentioning that---even when experimental and policy-relevant sites are disjoint and Assumptions \ref{asm:Lips}-\ref{asm:asm1} hold---we were not able to find an exact MMR sampling scheme (whether purposive or randomized) for \emph{all} possible values of $C$ (the Lipschitz constant that controls treatment effect heterogeneity).
One challenge is that, absent enough treatment effect heterogeneity, the optimal treatment rule for each policy-relevant site will tend to aggregate information from multiple experimental sites, and will generally trade-off the representativeness of a candidate set of experimental sites (as measured by the distance between covariates) against the precision of the corresponding treatment effect estimators. To see this, consider the extreme case where there is no site-level heterogeneity; i.e., $C=0$. In this scenario, any experimental site provides an unbiased estimator for the true
treatment effect in any policy-relevant site.  Consequently, making policy choices for the policy-relevant sites based only on the information available for its nearest-neighbor tends to be suboptimal. One can show that for each set of experimental sites, $\mathscr{S}$, and for each policy-relevant site, $s\in\mathcal{S}_{P}$, the optimal treatment rule at the policy-relevant site $s$ is a weighted linear combination of the treatment effects in the experimental sites:
\[ T_s(\widehat{\tau}_{\mathscr{S}})=\mathbf{1}\left\{ w_{\mathscr{S}}^{\top}\widehat{\tau}_{\mathscr{S}}\geq0\right\},
\textrm{ where } w_{\mathscr{S}}=\bigl(w_{\mathscr{S}_{1}},w_{\mathscr{S}_{2}},\ldots, w_{\mathscr{S}_{\#\mathscr{S}}}\bigr)^{\top}
\text{ and } w_{\mathscr{S}_{i}}=\frac{1/\sigma_{\mathscr{S}_{i}}^{2}}{\sum_{i=1}^{\#\mathscr{S}}\left(1/\sigma_{\mathscr{S}_{i}}^{2}\right)}.\]
That is, the optimal treatment assignment aggregates an inverse-variance
weighted average of all experimental sites---which provides an efficient estimator, given the available information, of the treatment effect at any  policy-relevant
site $s$.
Thus, the optimal sampling scheme would simply choose the $k$
sites in $\mathcal{S}_{E}$ that lead to the smallest-variance estimator of the true treatment effect. 

In the more general case, where $C>0$, it is possible to characterize the optimal treatment rule at each policy-relevant site; see \cite{olea2023decision}. However, this site-by-site solution only provides an upper bound on the problem's MMR value. Moreover, implementing this upper bound has the additional challenge that the ex-ante variance
of the hypothetical estimate from each experimental site needs to be known. 

In contrast, a strength of our $k$-median proposal
is that it does not require ex-ante knowledge of these variances. In addition, for general values of $C$, our approach also provides a nontrivial upper bound on the problem's MMR value: Just i) force the treatment decision $T_s$ to only depend on site $N_{\mathscr{S}}(s)$ and ii) compute worst-case expected regret separately across sites, ignoring that the regret-maximizing parameter configurations may be mutually inconsistent across sites. Both manipulations increase regret and therefore define an upper bound, which is furthermore easy to compute for any given sampling scheme because the induced MMR treatment choice problem was solved in \citet{stoye2012minimax}. For $C$ large enough, the bound will recover Theorem \ref{thm:main}. The caveat is that, in general, choosing the sites that minimize this bound will require brute-force enumeration.

\section{Integer Programming and the $k$-median Problem}
\label{section:kmedian}
The $k$-median problem in \eqref{eqn:kmedian} can be formulated as the following linear integer program \citep[Chapter 7.7, p. 185]{shmoys2011design}:
\begin{align*}
    \min_{\{y_i,x_{i,j}\}_{i \in \mathcal{S}_{E}, j \in \mathcal{S}_{P}}} &\sum_{i \in \mathcal{S}_E, j \in \mathcal{S}_P } x_{i,j}\cdot c(j,i)\\
    \text{such that } &\sum_{i \in \mathcal{S}_E} x_{i,j} = 1, \;\;\;\;\; \forall j \in \mathcal{S}_P,  \\
    &\sum_{i \in \mathcal{S}_E} y_i \le k,\\
    & 0 \le x_{i,j} \le y_i, \;\;\;\;\; i \in \mathcal{S}_E, j \in \mathcal{S}_P,\\
    &  y_i \in \{0,1\}, x_{i,j} \in \{0,1\}, \;\;\;\;\; i \in \mathcal{S}_E, j \in \mathcal{S}_P. 
\end{align*}
Here, the choice variables are the binary-valued $y_i$ and $x_{i,j}$ (for $i \in \mathcal{S}_{E}, j \in \mathcal{S}_{P}$); $y_i$ indicates whether site $i$ is chosen for experimentation, and $x_{i,j}$ indicates whether experimental site $i$ is used for policy choices at the policy-relevant site $j$. 
The total number of sites chosen for experimentation cannot exceed $k$, and site $i$ can only be used for making policy choices at policy site $j$ if site $i$ is indeed chosen for experimentation. The connection cost between facility $i$ and client $j$ is $c(j,i):=\| X_i - X_j \|$.\footnote{The connection cost in the objective function of the integer program differs from the connection cost in the $k$-median problem in \eqref{eqn:kmedian} by a constant factor $C/(2 \#\mathcal{S}_P)$, which does not affect the solution of the $k$-median problem. Solving the linear integer program described above is equivalent to solving \eqref{eqn:kmedian}.}

A major advantage of this integer programming formulation is that most ecosystems for scientific computing offer different solvers for linear and nonlinear integer programs. For the applications in this paper, we use the \texttt{MIP} solver in \texttt{Gurobi} \citep{gurobi} through the Python package \texttt{gurobipy}. The \texttt{Gurobi} software is highly optimized, especially for large-scale problems, and it offers an academic license. Even though the scale of the applications presented in Section \ref{section:application} is not large enough for the efficiency advantages of \texttt{Gurobi} to become salient over other solvers, we wanted to showcase its ease of use. It also integrates seamlessly with Google Colab, providing us a way to build self-contained and reproducible examples.\footnote{\href{https://colab.research.google.com/}{https://colab.research.google.com/}}

The \texttt{MIP} solver uses a version of the branch-and-bound algorithm;  see  \cite{bertsimas2005optimization}, Chapter 11.1, for a general description of this algorithm. Broadly speaking, this algorithm works by first finding the solution to the linear programming (LP) relaxation of the original integer problem. This is known as the \emph{relaxation} step. Then, the problem is split into two sub-problems (\emph{branching}) according to the integer constraints. This gives bounds on the integer solution; the algorithm is applied recursively until the lower bound and the upper bound converge up to a tolerance parameter. The recursion creates \emph{nodes}, and there are some strategies to determine which nodes should be explored further; for example, nodes that have integer solutions do not require any more branching.

\texttt{Gurobi} implements several additional steps that help the branch-and-bound algorithm be more efficient.\footnote{See \href{https://www.gurobi.com/resources/mixed-integer-programming-mip-a-primer-on-the-basics/}{https://www.gurobi.com/resources/mixed-integer-programming-mip-a-primer-on-the-basics/}.} The \emph{presolve} step reduces the number of effective constraints of the problem by checking if the integer requirement can eliminate some of them. As the name suggests, it is performed before the start of the branch-and-bound algorithm. \emph{Cutting planes} tightens the feasible region by adding linear inequalities to eliminate fractional solutions; it is performed during the branch-and-bound process, and for this reason, the algorithm used by \texttt{Gurobi} can be referred to as a version of a ``branch-and-cut" algorithm. Finally, the \texttt{MIP} solver implements several \emph{heuristics}, for example by rounding the component of a solution that is closest to an integer, fixing it, and hoping the other components will converge to integers more quickly.

In principle, one can always solve \eqref{eqn:kmedian} by enumerating all possible size $k$ subsets of $\mathcal{S}_{E}$. Such an algorithm runs in time proportional to ${{\#\mathcal{S}_{E}}\choose{k}}    \cdot \#\mathcal{S}_{P} \cdot k \cdot d$ and therefore scales poorly when ${{\#\mathcal{S}_{E}}\choose{k}}$ is large. However, we are able to evaluate the performance of our preferred solver by brute-force solving smaller but nontrivial instances of the problem, notably the entire application in Section \ref{sec:multicountry}. 

Figure \ref{fig:gurobi_output} in Appendix \ref{sec:Gurobi} shows an example of the output obtained after using the \texttt{MIP} solver in \texttt{Gurobi} to solve the linear integer program for the application in Section \ref{sec:multicountry}. In this example, the scale of the problem is given by $\#\mathcal{S}_E = \#\mathcal{S}_P = 15, k = 6$, and $d = 8$. We defer the details of the application to Section \ref{sec:multicountry}. 

While we do not use it in this paper, we finally note that there is a large literature studying efficient (polynomial-time) \emph{approximation} algorithms for the $k$-median problem, going back to seminal work of \cite{charikar2002constant}. A basic idea in these algorithms is to consider a linear programming relaxation of the integer program associated with the $k$-median problem \citep[Chapter 7.7]{shmoys2011design}. Even though the scale of the problems analyzed in this paper does not require the implementation of any of these algorithms, there are several papers that present theoretical performance guarantees for them; see for example \cite{cohen2022improved} and also the references in \cite{cohen2018bane}. Finally, it is worth mentioning that when $k$ is fixed (and not viewed as part of the problem's input), there is an approximate algorithm that runs in time proportional to $n \cdot d$; see  \cite{kumar2010linear}. Such an algorithm could be useful when $n$ and $d$ are large and $k$ is small.

\section{Applications} 
\label{section:application}

\subsection{Mobile Financial Services in Bangladesh} 

\cite{lee2021poverty} conducted a randomized controlled trial in Bangladesh to estimate the effects of encouraging rural households to receive money transfers from migrant family members. They specifically conducted an encouragement design where poor rural households with family members who had migrated to a larger urban destination receive a 30--45 minute training about how to register and use the mobile banking service ``bKash'' to send instant remittances back home.

The experiment was conducted in the Gaibandha district, one of Bangladesh's poorest regions. It focused on households that had migrant workers in the Dhaka district, the administrative unit in which the capital of Bangladesh is located. \cite{lee2021poverty} measured several outcomes of both receiving households and sender migrants; see their Figures 3 and 4. To give a concrete example of the measured outcomes, one question of interest is whether families that adopt the mobile banking technology are more (or less) likely to declare that the \emph{monga}, the seasonal period of hunger in September through November, was not a problem for their household. \cite{lee2021poverty} (Table 9, Column 7) present results for this specific variable showing that households that used a bKash account in the treatment group are 9.2 percentage points more likely to declare that \emph{monga} was not a problem.  

We ask: Do the findings in \cite{lee2021poverty} generalize to other migration corridors, i.e., combinations of origin and destination districts, in Bangladesh? Is the corridor selected by \cite{lee2021poverty} a good choice for a researcher who is concerned about external validity? Following \cite{gechter2023site}, we name the corridors using a destination-origin format; for example, the migration corridor studied in \cite{lee2021poverty} is ``Dhaka-Gaibandha''. Figure \ref{fig:corridors} displays this corridor along with other common ones. The $41$ migration corridors analyzed in \cite{gechter2023site} are depicted as dotted lines connecting an origin and a destination.\footnote{We thank Michael Gechter for gracefully sharing part of the dataset used in \cite{gechter2023site}.}

\begin{figure}[!ht]
    \begin{center}
        \includegraphics[width=\linewidth]{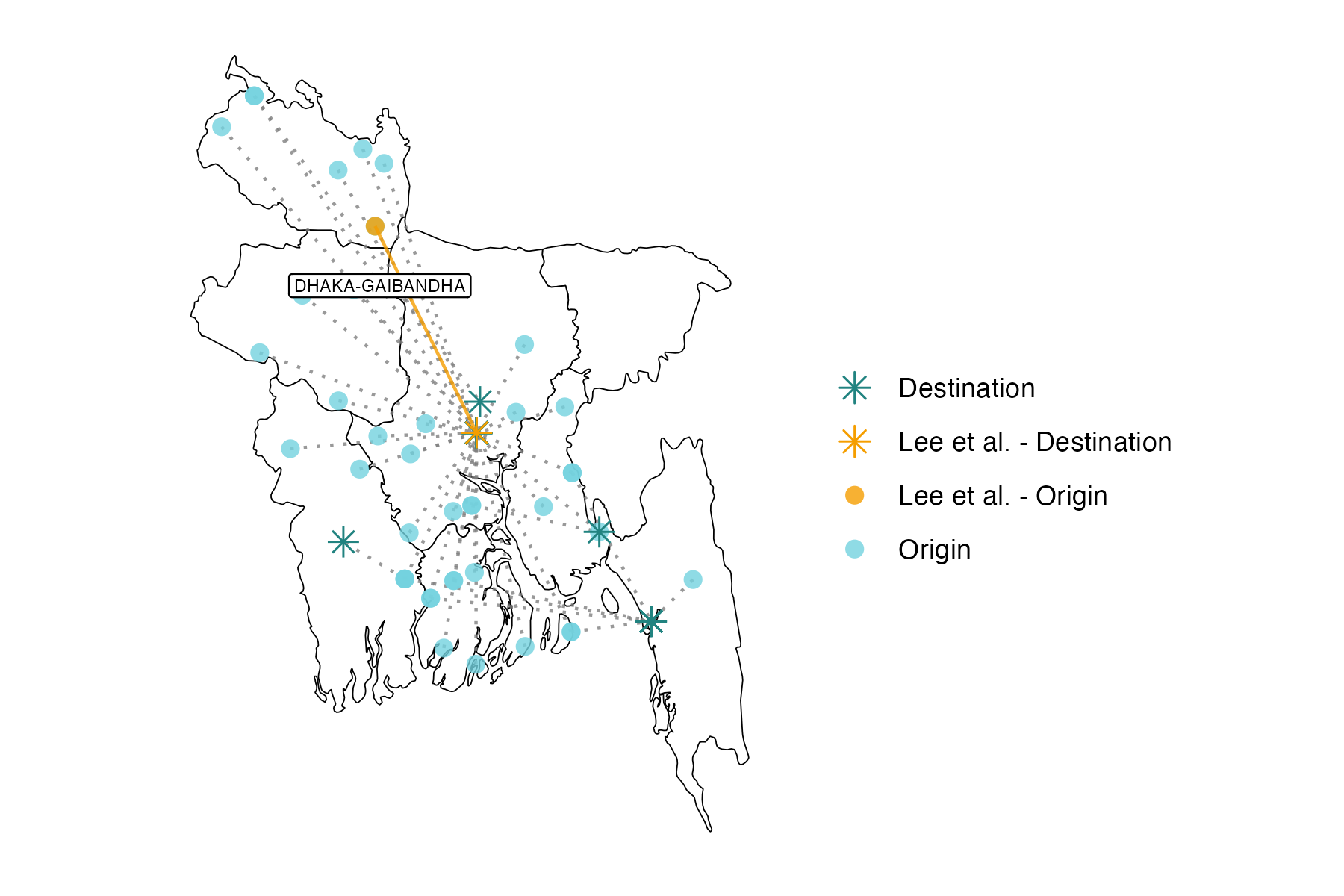}
    \caption{Bangladesh Migration Corridors}
    \label{fig:corridors}
    \end{center}
    {\raggedright \footnotesize \textit{Notes}: The map of Bangladesh with the origins of the migration corridors marked as light blue dots and the destinations marked as dark blue stars. Following the terminology used in \cite{gechter2023site}, \emph{origin} refers to a worker's home, and destination refers to where the worker migrates for work. The corridor where the experiment was originally implemented in \cite{lee2021poverty}, Dhaka-Gaibandha, is highlighted in yellow.}
\end{figure}

In \citeauthor{lee2021poverty}'s \citeyear{lee2021poverty} words, ``[t]he particular nature of our sample potentially limits the external validity'' of the analysis. In short, migration corridors differ in characteristics ranging from distance between origin and destination to cost of living and average wages. Figure \ref{fig:characteristics_gechter} presents a box plot of $d= 13$ (standardized) characteristics that \cite{gechter2023site} collected for each of the $41$ migration corridors. We take these corridors to be our potential experimental and policy-relevant sites. That is, $\mathcal{S}_E = \mathcal{S}_P$, and $\#\mathcal{S}_E=\#\mathcal{S}_P   = 41$. Below we present results for $k \in \{1,2\}$.

\begin{figure}[h!]
	\begin{center}
	    \begin{subfigure}{.5\linewidth}
		\centering
		\includegraphics[width=\linewidth]{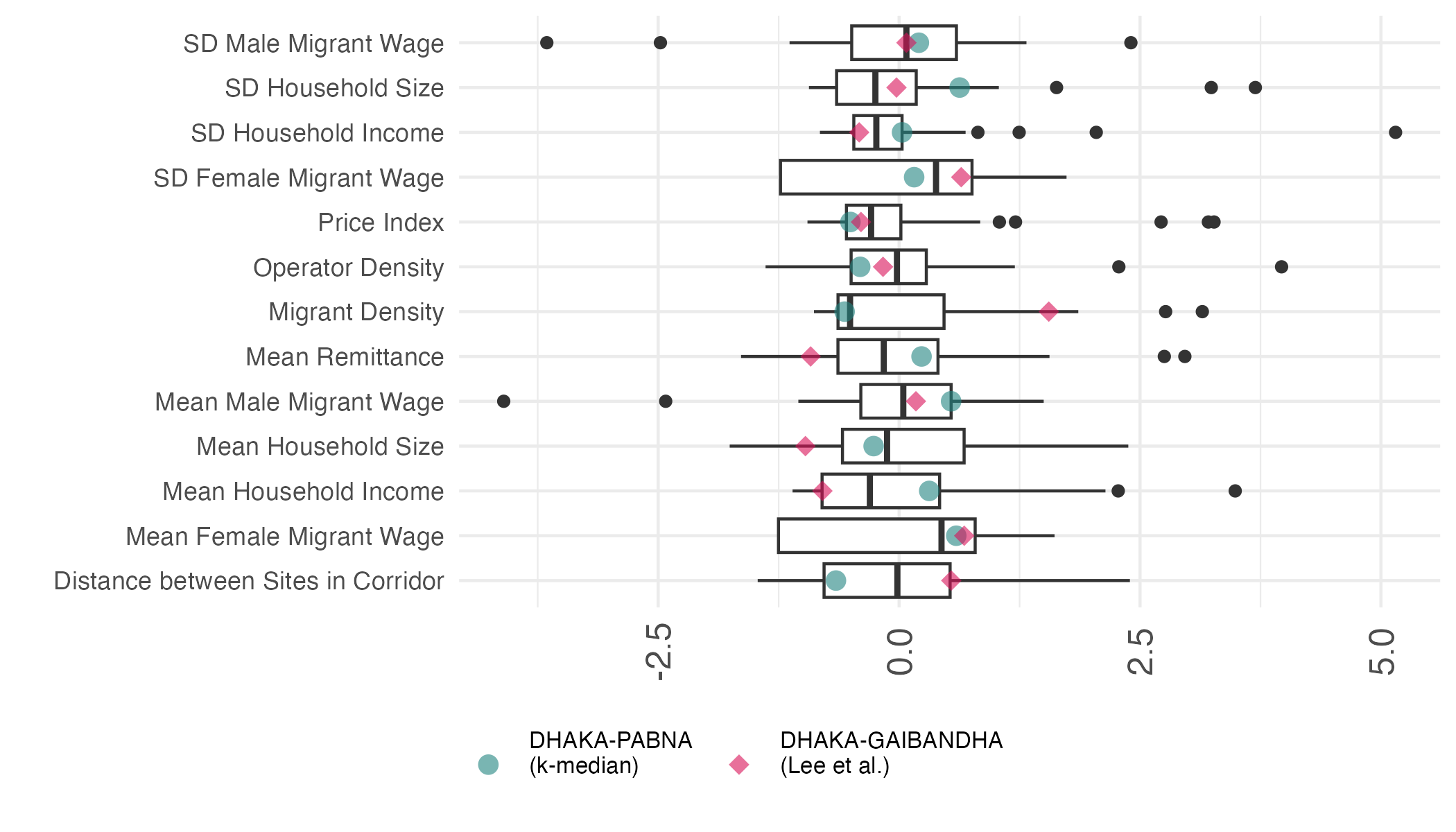}
		\caption{$k$-median}
		\label{subfig:covariates_k_lee_1}
	\end{subfigure}%
	\begin{subfigure}{.5\linewidth}
		\centering
		\includegraphics[width=\linewidth]{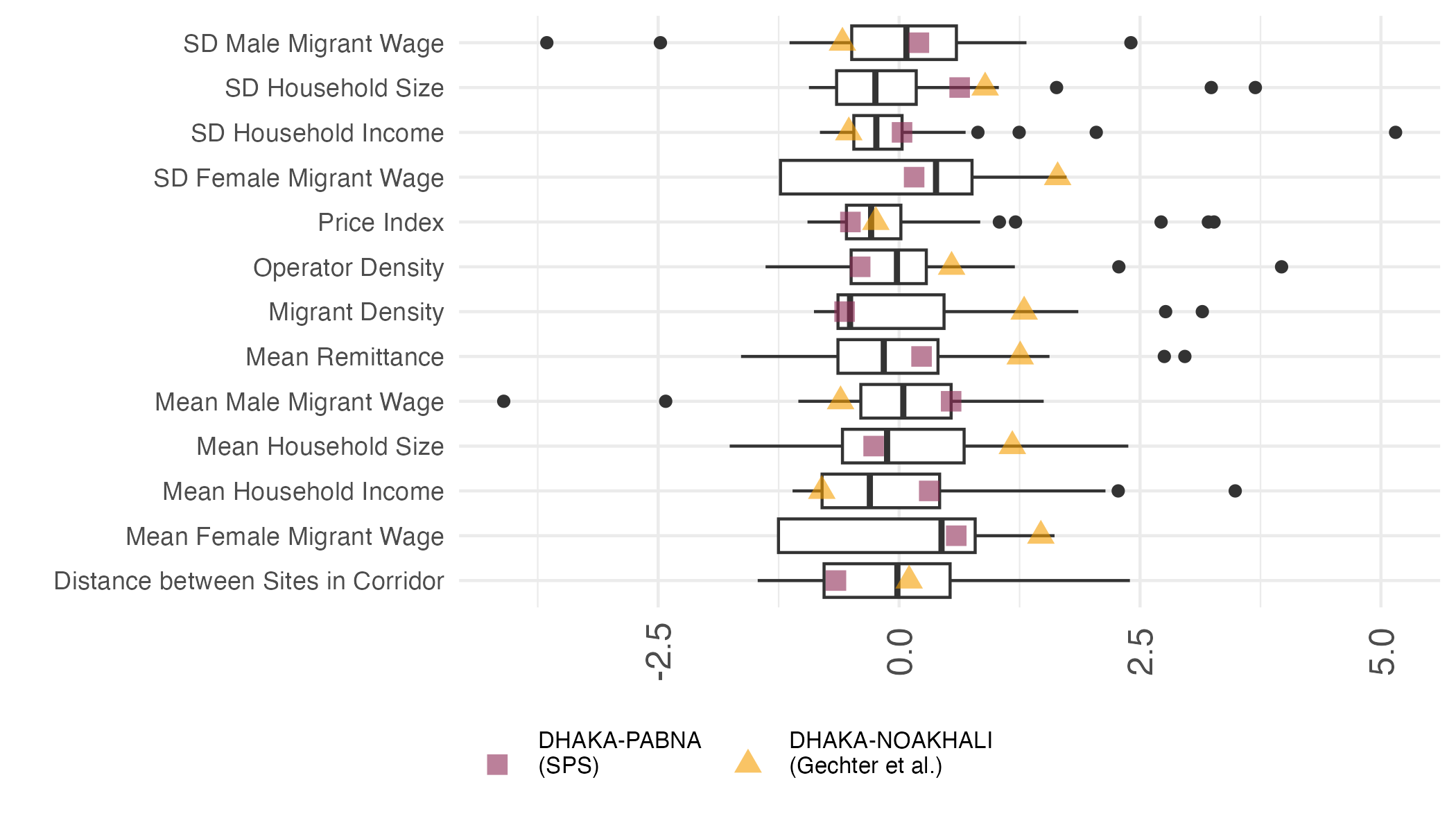}
		\caption{\cite{egamidesigning} and \cite{gechter2023site}}
		\label{subfig:covariates_sps_gechter_1}
	\end{subfigure}
	 \caption{Solution of the $k$-median Problem ($k=1$) and Distribution of Site-Level Covariates}
	 \label{fig:characteristics_gechter}
	\end{center}
 {\raggedright \footnotesize \textit{Notes}: For each covariate, the box represents the interquartile range (IQR), the vertical black line represents the median, and the horizontal line shows the ``theoretical minimum'' (defined as $Q_1 - 1.5$IQR) and ``theoretical maximum'' (defined as $Q_3 + 1.5$IQR). Black dots are outliers, defined as observations that fall beyond the theoretical minimum and maximum. Each panel depicts the sites selected by the different approaches when $k=1$.}
\end{figure}

{\scshape Selected Site when $k=1$:} When $k=1$, the selected site based on the $k$-median problem is Dhaka-Pabna. This is also the solution obtained by using the synthetic purposive sampling approach (henceforth, SPS) in \cite{egamidesigning}. 

Figure \ref{subfig:covariates_k_lee_1} presents a visual comparison of the covariates of Dhaka-Pabna (blue circles) and Dhaka-Gaibandha (pink diamonds), the original site selected by \cite{lee2021poverty}. The covariate values for Dhaka-Gaibandha are slightly outside the interquartile range for at least three covariates: migrant density, mean remittance, and mean household size. In comparison, all but one covariate value for Dhaka-Pabna are within the interquartile range. The figure also shows that two key covariates of Dhaka-Gaibandha are right at the edges of the interquartile range: distance between sites in the corridor (3rd quartile) and mean household income (1st quartile). One might conjecture that the effects of adopting a mobile banking technology to transfer money particularly depend on distance and household incomes, suggesting that the Dhaka-Gaibandha corridor may not be the most representative.\footnote{\cite{gechter2023site} suggest that these qualities could explain the large treatment effects found by \cite{lee2021poverty}.} Dhaka-Pabna has opposite features: The distance between destination and origin in this corridor is short (1st quartile) and households are relatively better off in terms of income (3rd quartile). The use of the minimax criterion might explain why a corridor with these characteristics may be a good choice for extrapolating experimental results.       

Figure \ref{subfig:covariates_sps_gechter_1} also presents the covariates of Dhaka-Noakhali (yellow triangles), the migration corridor selected by the Bayesian approach of \cite{gechter2023site}.\footnote{By construction, the solution of \cite{gechter2023site} depends on the choice of prior. The results herein reported are based on their preferred prior specification; see Section 5.3 p.p.23 in \cite{gechter2023site}.} For 10 out of the 13 variables that control treatment effect heterogeneity, Dhaka-Noakhali has covariates that are typically outside the interquartile range.

{\scshape Selected Sites when $k=2$:} When $k=2$, the $k$-median solution is to again pick Dhaka-Pabna and additionally Dhaka-Pirojpur. 
Figure \ref{subfig:covariates_k_2} presents the covariates of Dhaka-Pabna (filled, blue circle) and Dhaka-Pirojpur (hollow, blue circle). While four covariate values of Dhaka-Pirojpur
are outside the interquartile range, they still appear more central than Dhaka-Gaibandha and Dhaka-Noakhali. However, relative to Dhaka-Pabna, the solution for $k=2$ adds a considerably less central site; Figure \ref{fig:hh_distance} illustrates that this site is a good nearest neighbor for some sites that would not otherwise be well matched.    

Figure \ref{subfig:covariates_sps_gechter_2} presents the solutions of \cite{egamidesigning} (squares) and \cite{gechter2023site} (triangles).\footnote{\cite{gechter2023site} impose an additional constraint on purposive sampling schemes: they require the two migration corridors selected for experimentation to have origins in different divisions. We note that both the solutions of \cite{egamidesigning} and the $k$-median problem satisfy this constraint as well.} The solution of the $k$-median problem and synthetic purposive sampling are no longer the same. We also note that the two sites selected by synthetic purposive sampling differ from the site selected by this procedure when $k=1$. 

\begin{figure}[h!]
    \begin{center}
	\begin{subfigure}{.5\linewidth}
		\centering
		\includegraphics[width=\linewidth]{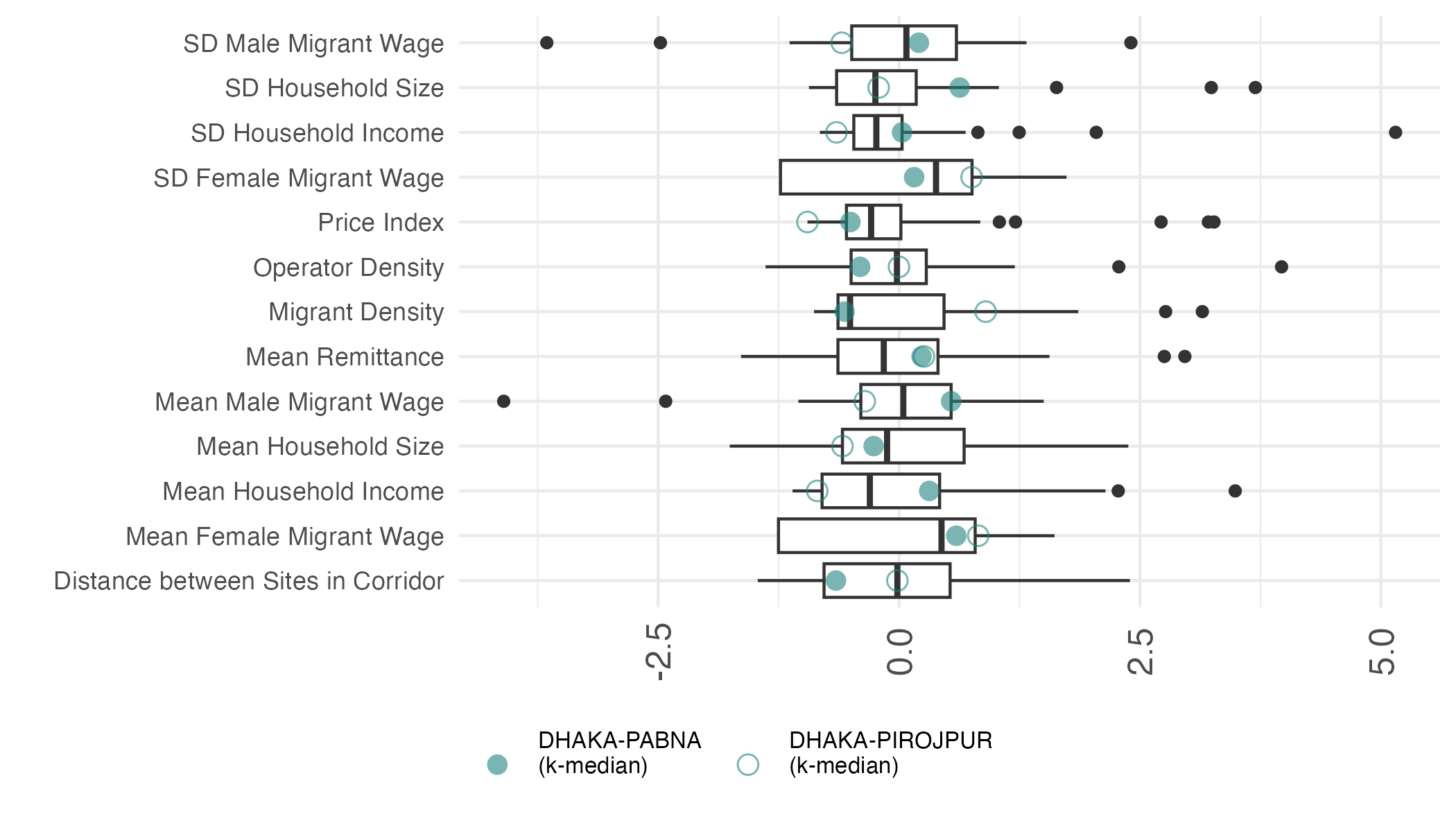}
		\caption{$k$-median}
		\label{subfig:covariates_k_2}
	\end{subfigure}%
	\begin{subfigure}{.5\linewidth}
		\centering
		\includegraphics[width=\linewidth]{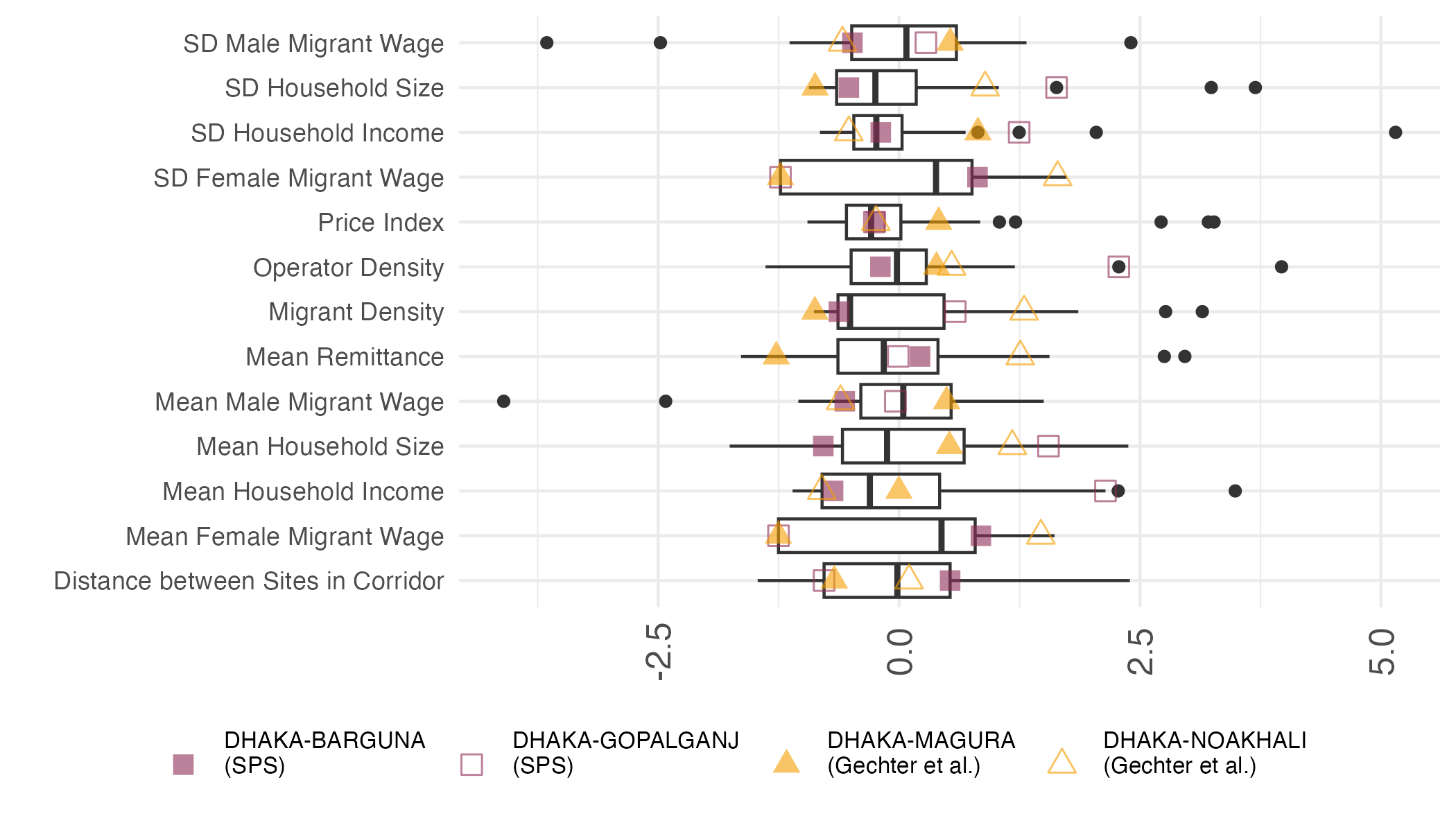}
		\caption{\cite{egamidesigning} and \cite{gechter2023site}}
		\label{subfig:covariates_sps_gechter_2}
	\end{subfigure}
	 \caption{Solution of the $k$-median Problem ($k=2$) and Distribution of Site-level Covariates}
	 \label{fig:characteristics_gechter_2}
	\end{center}
 {\raggedright \footnotesize \textit{Notes}: Box plots for the distribution of covariates among migration corridors, constructed as explained in Figure \ref{fig:characteristics_gechter}. Each panel depicts the site selected by the different approaches when $k=2$.}
\end{figure}

\begin{figure}[h!]
  \begin{center}
      \begin{subfigure}[b]{0.45\linewidth}
    \includegraphics[width=\linewidth]{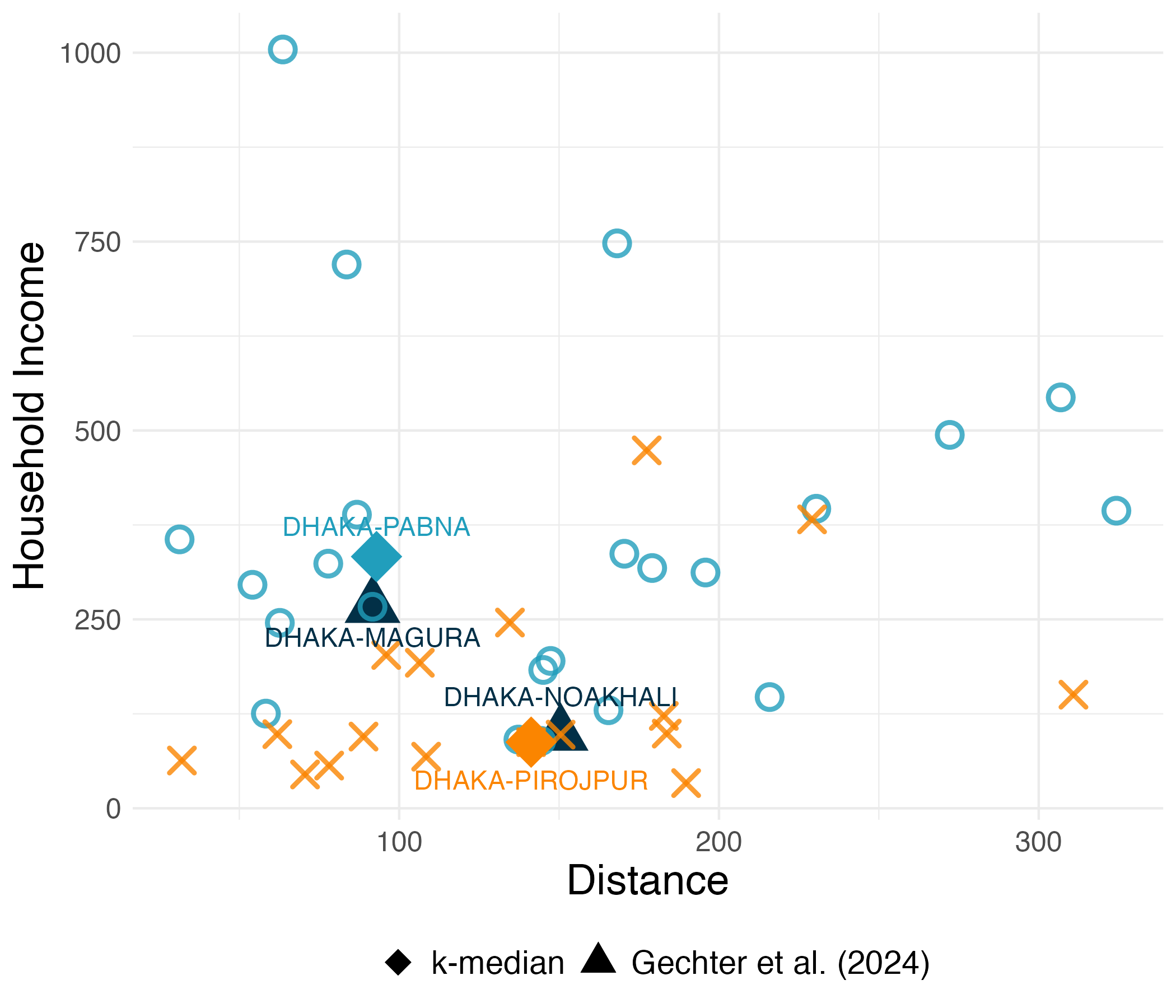}
    \caption{$k$-median} 
    \label{subfig:hh_distance_kmedian}
  \end{subfigure}
  \hfill 
  \begin{subfigure}[b]{0.45\linewidth}
    \includegraphics[width=\linewidth]{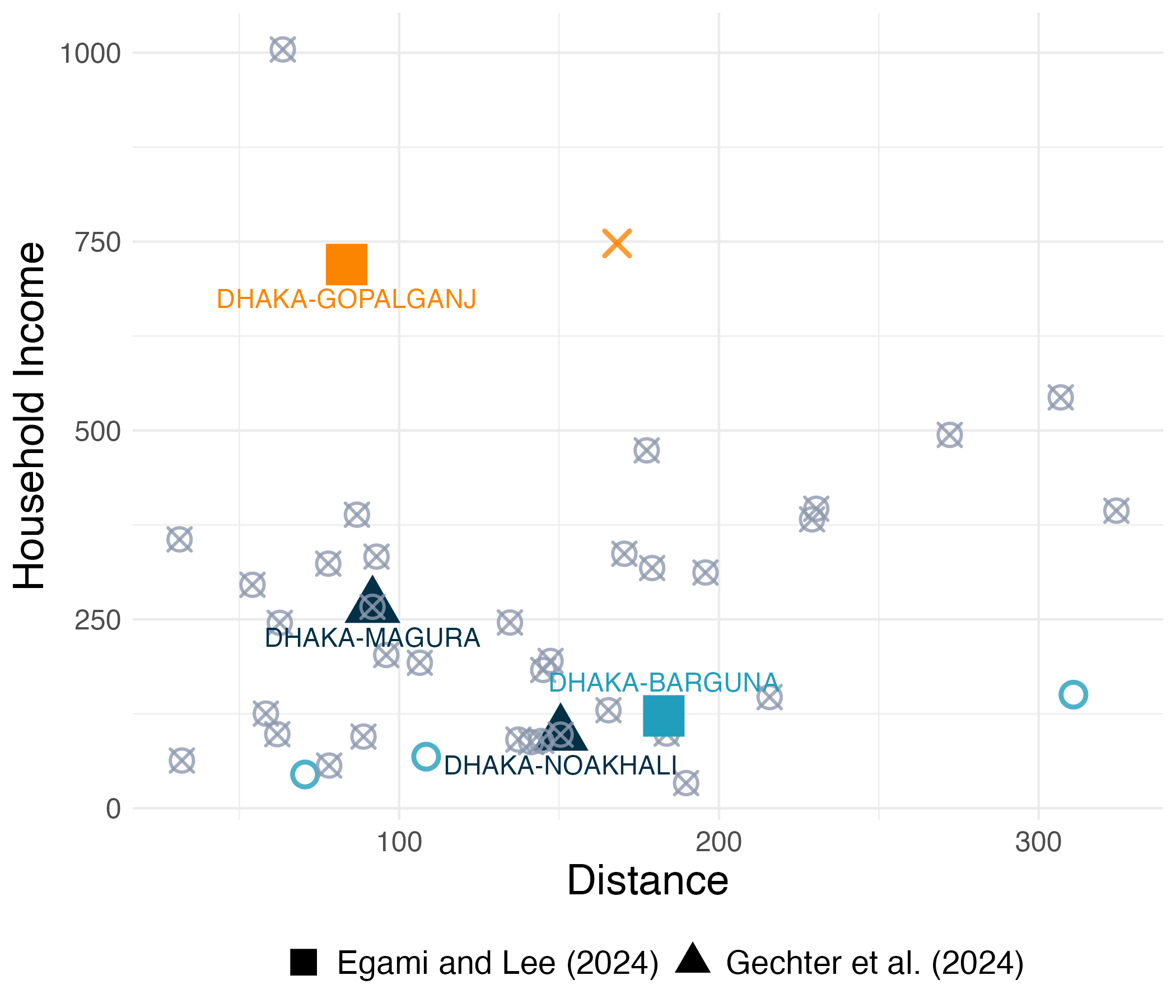}
    \caption{\cite{egamidesigning}} 
    \label{subfig:hh_distance_sps}
  \end{subfigure}
  \caption{Optimal Experimental Sites for $k = 2$ on a Two-dimensional Covariate Plane}
  \label{fig:hh_distance}
  \end{center}
{\raggedright \footnotesize \textit{Notes}: Each point represents a destination-origin migration corridor (site) on the two-dimensional covariate plane, using two covariates: distance between the two ends of each corridor and average household income in the home district (where the average is taken over households with a reported migrant). The solutions of each method are indicated by the shapes of the solid dots. In Panel \ref{subfig:hh_distance_kmedian}, blue circles denote policy sites with Dhaka-Pabna as the nearest neighbor, and orange crosses represent those with Dhaka-Pirojpur as the nearest neighbor. Similarly, in Panel \ref{subfig:hh_distance_sps}, blue circles and orange crosses follow the same coding, indicating that these policy sites rely on a single experimental site for constructing their synthetic control. Additionally, gray cross-circles represent policy sites that use both experimental sites for their synthetic control.} 
\end{figure}

Figure \ref{fig:hh_distance} presents a simple visualization of optimal connections between sites under different optimality criteria. The underlying scatter plots in both panels are the same; they visualize the location of corridors in (distance, household income)-space. Panel \ref{subfig:hh_distance_kmedian} indicates which sites were selected through solving the $k$-median problem and which of these sites any other site was matched to, where blue circles represent matches with Dhaka-Pabna and orange crosses represent matches with Dhaka-Pirojpur. Figure \ref{subfig:hh_distance_kmedian} also presents the sites selected by \cite{gechter2023site}, i.e. Dhaka-Magura and Dhaka-Noakhali, both marked with dark blue triangles. They appear close to the sites selected by the $k$-median problem, but in terms of that problem's criterion function, they only rank $455$ among $820$ candidate solutions and miss the problem's optimal value by $17\%$. Panel \ref{subfig:hh_distance_sps} similarly visualizes the sites selected by \citeauthor{egamidesigning}'s (\citeyear{egamidesigning}) synthetic purposive sampling approach, i.e. Dhaka-Gopalganj (orange square) and Dhaka-Barguna (blue square).\footnote{We generated this figure by using \citeauthor{egamidesigning}'s (\citeyear{egamidesigning}) code on \citeauthor{gechter2023site}'s (\citeyear{gechter2023site}) data.} We also visualize how policy sites are matched with experimental sites: Orange crosses correspond to migration corridors that assign all of their weight to Dhaka-Gopalganj; blue circles correspond to migration corridors that assign all of their weight to Dhaka-Barguna; all other sites (gray crossed circles) assign strictly positive weights to both donor sites. This illustrates that these sampling schemes meaningfully differ. This would even be true if the synthetic purposive sampling approach were implemented but forcing degenerate (single donor) matches, because \citeauthor{egamidesigning}'s (\citeyear{egamidesigning}) approach would then reduce to a $k$-mean problem, i.e. using squared Euclidean distance as connection cost. Note also that we decided not to report how policy sites are matched with experimental sites in the Bayesian solution of \cite{gechter2024generalizing}. This is simply because, under standard Gaussian process priors, the posterior mean for the treatment effect at each policy relevant site uses the information available for all experimental sites. It would be interesting to analyze whether treatment assignment rules that only use the information of some experimental sites could arise from sparsity-inducing priors such as those discussed in \cite{datta2016hierarchical}.

{\scshape Computational Costs:} The above examples are small enough so that the $k$-median problem could  easily be solved by brute-force enumeration of $41$ and $820$ candidate solutions, respectively. Needless to say, such an approach would not scale---for example, in this same application, $k=10$ induces $1,121,099,408$ candidate solutions.

Indeed, Figure \ref{subfig:Gurobi_vs_bruteforce} compares time to solve the $k$-median problem for $k \in \{1,\ldots, 10\}$ using a) the integer program formulation of the $k$-median problem in Section \ref{section:kmedian} and b) brute-force enumeration. The \texttt{MIP} solver in \texttt{Gurobi} solves all instances of the problem to provable optimality in less than one second each. In contrast, brute-force enumeration takes approximately $5$ hours for $k = 10$.\footnote{This was run in a Windows XPS with $10$ cores and $32$GB of RAM using R. The code is parallelized, and to avoid memory issues when $k > 8$, it runs a C++ function in the background to evaluate one combination at a time. This ensures that we are giving brute-force enumeration the best chance of success. } 

One potential benefit of brute-force enumeration is that one can check for multiple solutions, which actually occurred at $k=6$. In \texttt{Gurobi}, an ad hoc search could be conducted by modifying the \href{https://www.gurobi.com/documentation/current/refman/seed.html}{random number seed} or using the \href{https://www.gurobi.com/documentation/9.5/refman/concurrent_optimizer.html#sec:ConcurrentControl}{concurrent optimizer}, but discovery would not be guaranteed.

Figure \ref{subfig:Gurobi_vs_SPS} reports the time needed to implement the synthetic purposive sampling approach of \cite{egamidesigning}. This was done by using the \texttt{spsR} package with the option to use the \texttt{Gurobi} solver on the background. We consider it to be fast, taking less than $40$ seconds for $k = 10$. However, while synthetic purposive sampling can be formulated as a quadratic mixed integer program, $k$-median is linear and correspondingly faster to solve; indeed, solutions were almost instant for every $k$ up to $10$.

\begin{figure}[h!]
    \begin{center}
	\begin{subfigure}{.5\linewidth}
		\centering
		\includegraphics[width=\linewidth]{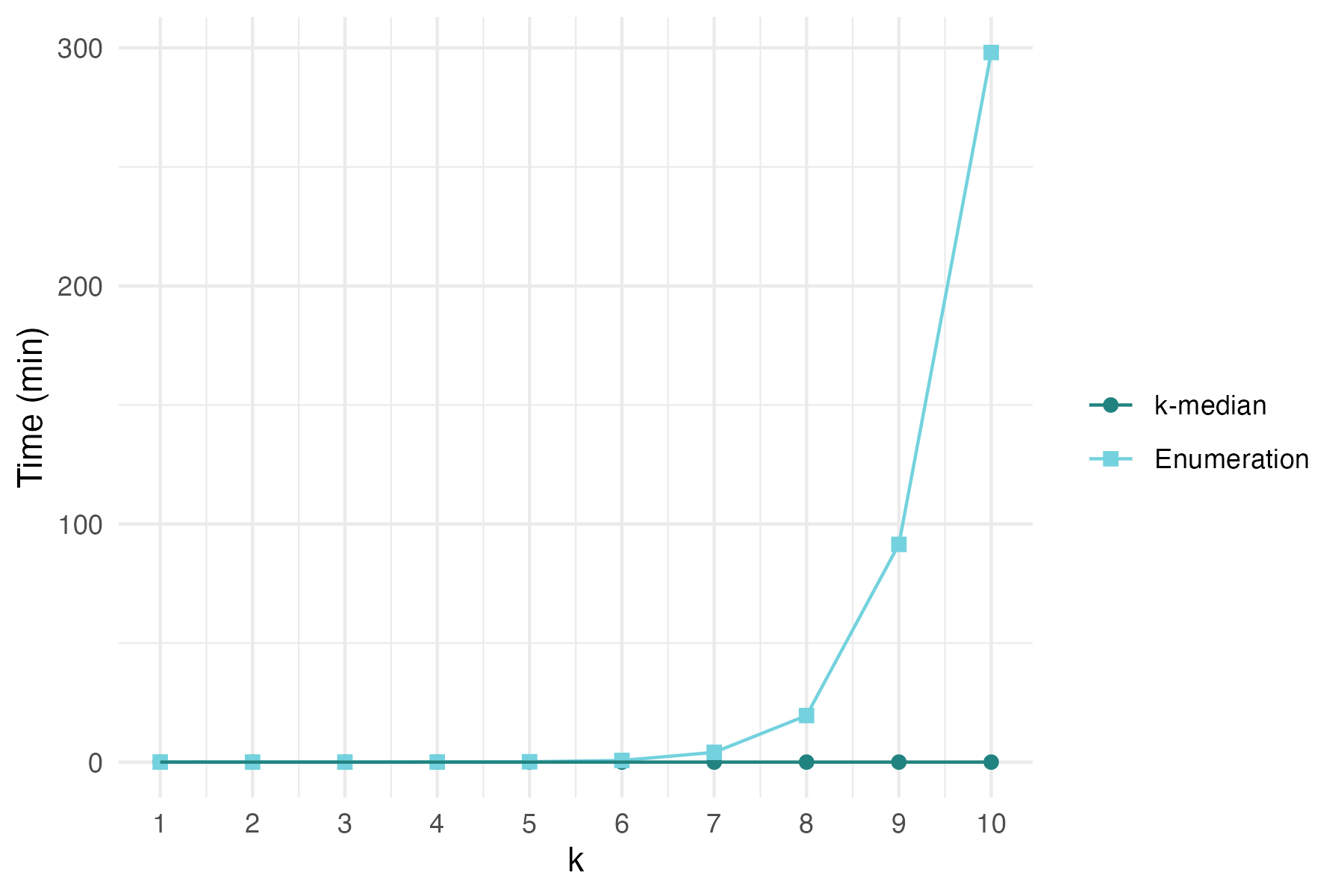}
		\caption{Integer Program vs. Enumeration}
		\label{subfig:Gurobi_vs_bruteforce} 
	\end{subfigure}%
	\begin{subfigure}{.5\linewidth}
		\centering
		\includegraphics[width=\linewidth]{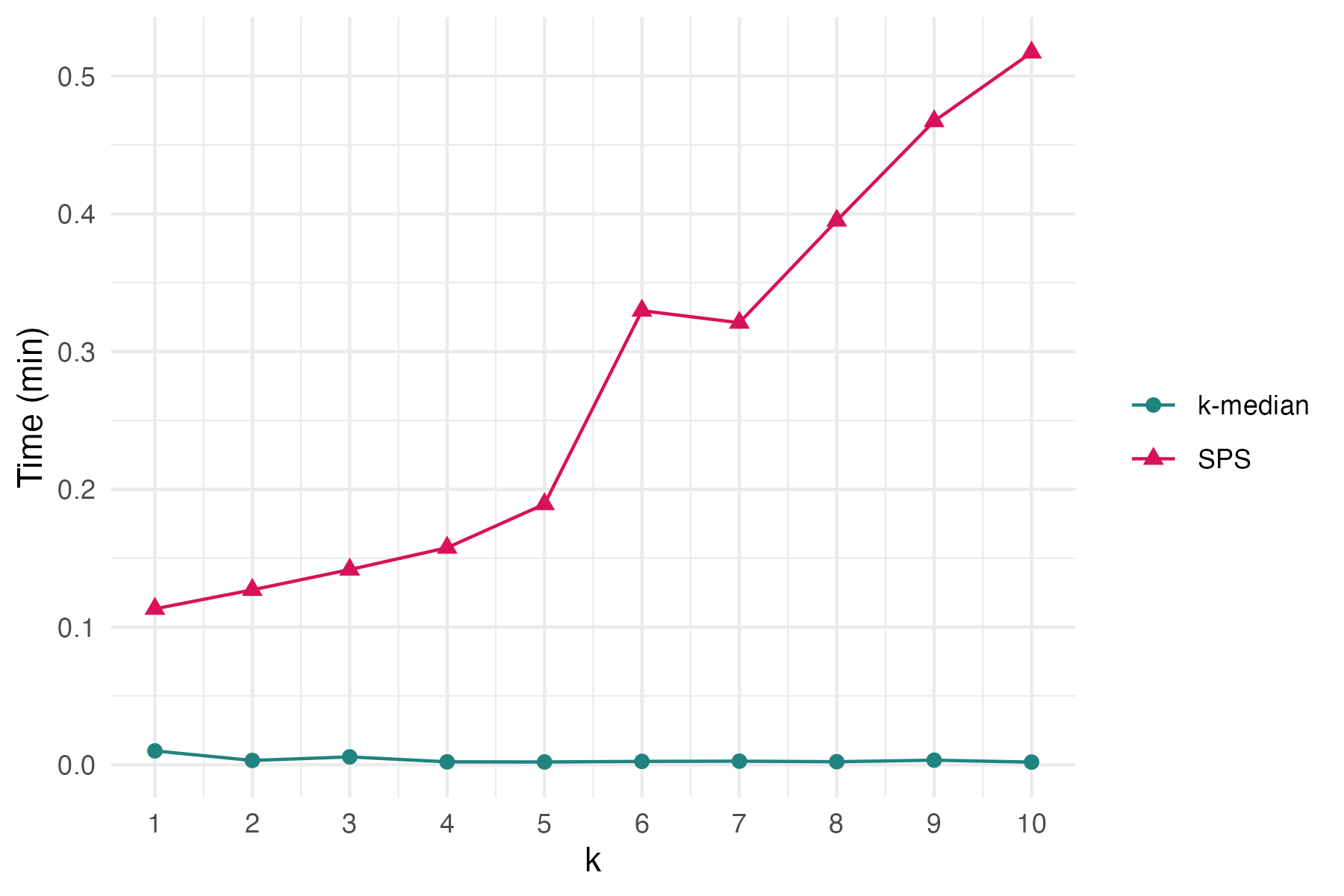}
		\caption{Integer Program vs. SPS}
		\label{subfig:Gurobi_vs_SPS}
	\end{subfigure}
	 \caption{Time Needed to Solve the $k$-median Problem $k \in \{1,\ldots 10\}$}
	 \label{fig:runtime}
	\end{center}
 {\raggedright \footnotesize \textit{Notes}: Time comparison of different purposive sampling approaches. Time (vertical axis) is in minutes. The dark, blue line with circles represents the time it takes to solve the integer program in Section \ref{section:kmedian} using the \texttt{MIP} solver in \texttt{Gurobi} to provable optimality (Gurobi gives the solution in less than a second). The light, blue line with squares in Panel a) represents the time it takes to solve the $k$-median problem using brute-force enumeration. The red line with triangles in Panel b) represents the time needed to implement the synthetic purposive sampling approach using the \texttt{spsR} package.}
\end{figure}

\subsection{Multi-Country Survey Experiments in Europe} 
\label{sec:multicountry}

Our second application revisits a multi-country survey experiment originally conducted and analyzed in \cite{naumann2018attitudes} and discussed in \cite{egamidesigning}. The question of interest is whether native-born inhabitants of a particular country are more supportive of immigration depending on whether the potential migrants are high-skilled or low-skilled. \cite{naumann2018attitudes} carried out a survey experiment in $15$ European countries listed in Figure \ref{fig:network}. Respondents were native-born individuals and were randomly assigned to report their attitudes towards either high-skilled (``treatment'') or low-skilled (``control'') immigrants.

While the experiments have already been conducted and outcomes of each experiment are available for all countries, we follow \cite{egamidesigning} and consider the situation of a researcher that can only conduct $k=6$ experiments. We let all $15$ countries be both potential experimental and policy-relevant sites. That is, $\mathcal{S}_E = \mathcal{S}_P$, and $\#\mathcal{S}_E= \#\mathcal{S}_P  = 15$.\footnote{A potential situation we have in mind is one of a researcher who would like to give policy advice to the policymakers in these countries on whether to initiate an immigration reform that could favor either high-skilled or low-skilled immigrants. The researcher knows that policymakers are interested in voters' attitudes towards these types of reforms. We assume that the researcher is only able to experiment in a subset of countries due to administrative or budget constraints, and that he/she needs to extrapolate voters' attitudes of the other policy-relevant sites based on the experimental estimates. The researcher needs to decide whether to recommend the implementation of an immigration reform directed to either high-skilled or low-skilled immigrants.}  

The only data needed to solve the $k$-median problem are site-level covariates of both experimental and policy-relevant sites. We use the same covariates as \cite{egamidesigning}.\footnote{These are: Gross Domestic Product (GDP), size of migrant population, unemployment rate, proportion of females, mean age, mean education, baseline level of support for immigration by the general public, and a categorical variable that indicates the subregions in Europe (i.e., South, North, East, and West). The covariate data can be accessed in the open-source software R package, \href{https://naokiegami.com/spsR}{\texttt{spsR}}.}
They are in different scales: For example, GDP is measured in 2015 U.S. dollars, while the unemployment rate is reported in percentage points. As is commonly recommended in clustering problems and also done in \citet{egamidesigning}, we standardize all of them. 

When $k=6$, the $k$-median problem is solved by the Czech Republic, Denmark, France, Ireland, Spain, and Switzerland. Figure \ref{fig:covariates_multicntry} visualizes the distribution of standardized covariates, along with the sites selected by both the $k$-median and the synthetic purposive sampling approach. Four of the six selected sites are common to both approaches (Czech Republic, Denmark, Spain, Switzerland), but synthetic purposive sampling chooses Germany and the Netherlands instead of France and Ireland.  

\begin{figure}[h!]
	\begin{center}
	    \begin{subfigure}{.5\linewidth}
		\centering
		\includegraphics[width=\linewidth]{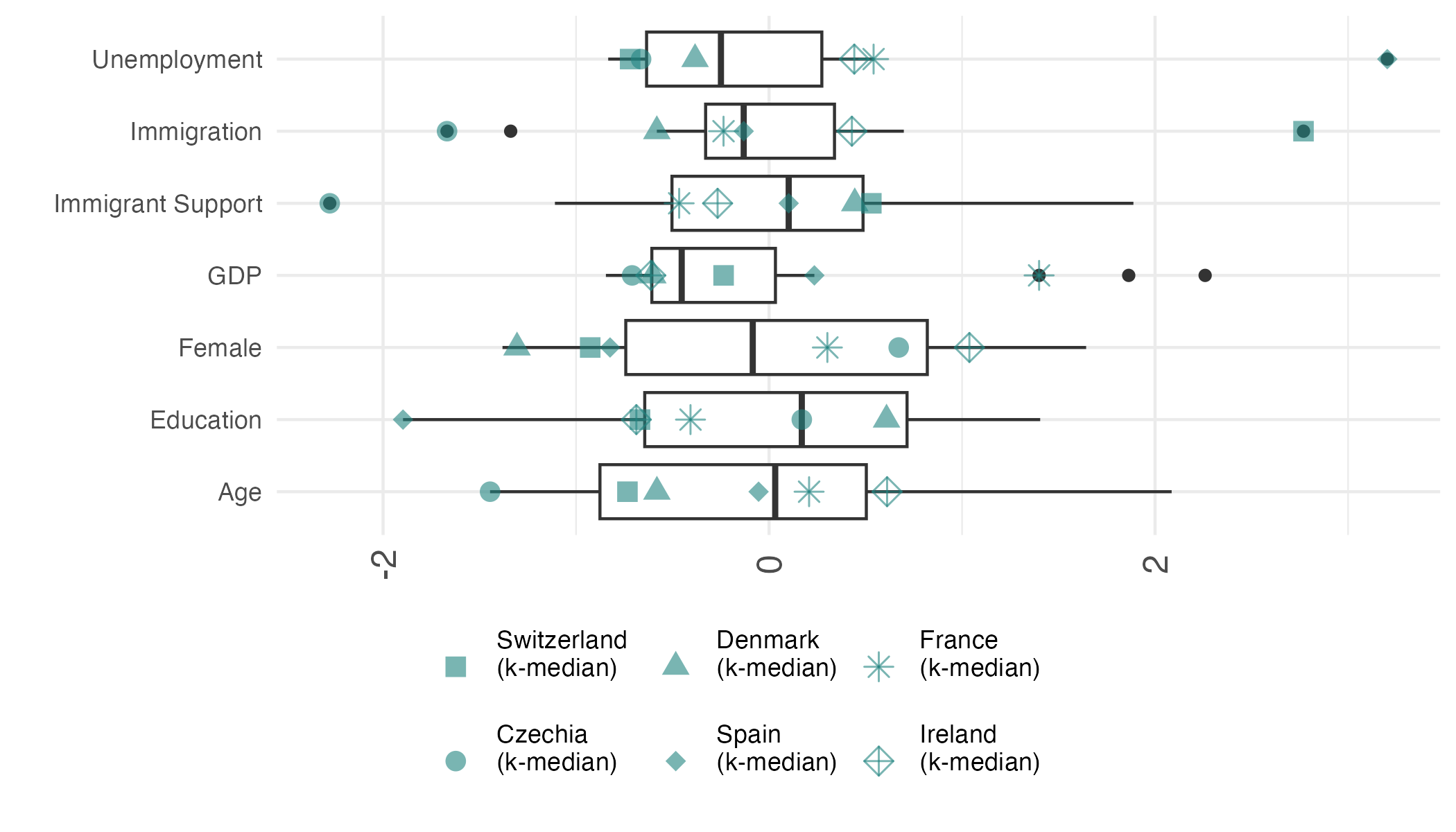}
		\caption{$k$-median}
		\label{subfig:boxplot_k6_kmedian_multicntry}
	\end{subfigure}%
	\begin{subfigure}{.5\linewidth}
		\centering
        \includegraphics[width=\linewidth]{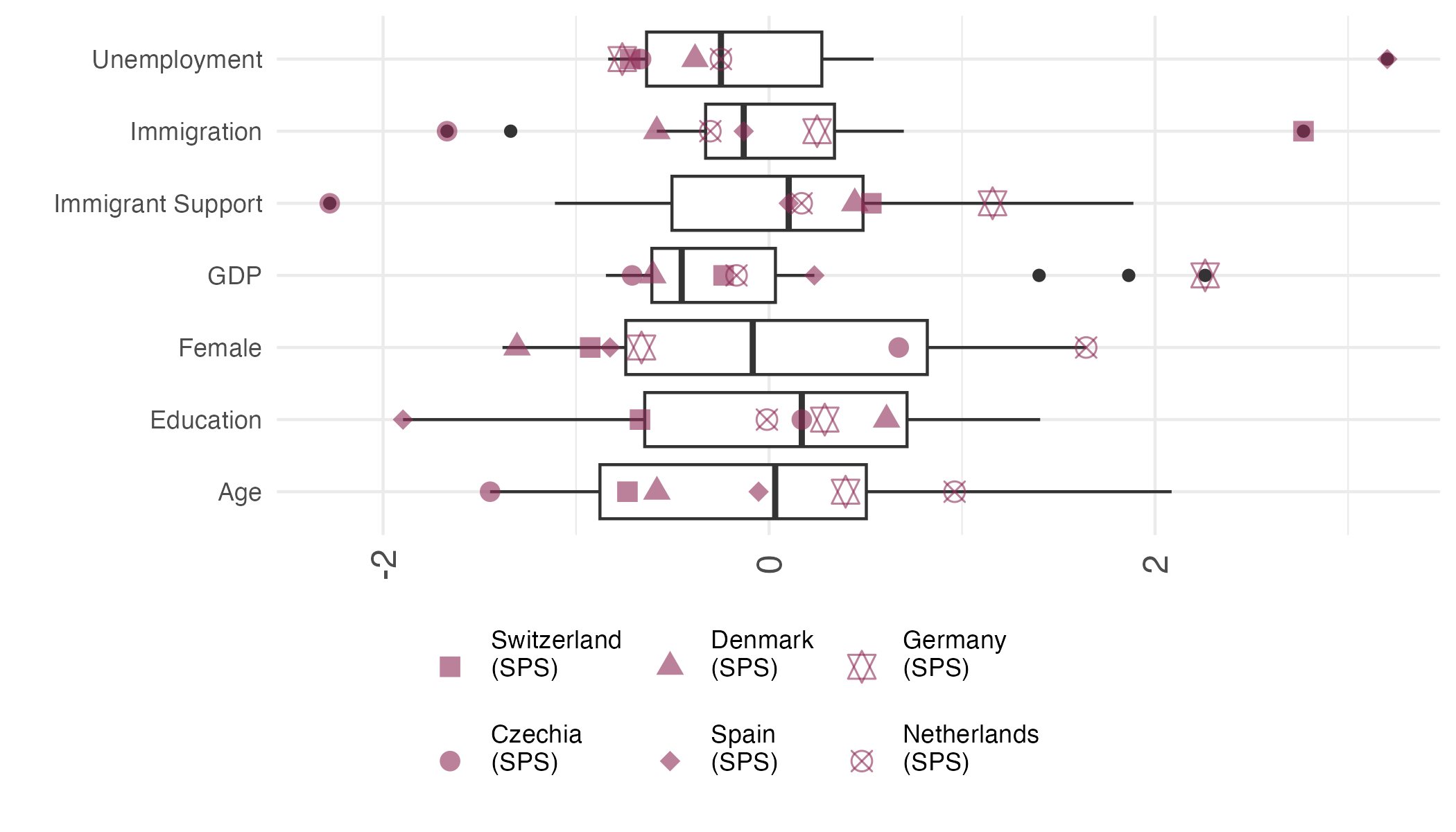}
		\caption{\cite{egamidesigning}}
		\label{subfig:boxplot_k6_sps_multicntry}
	\end{subfigure}
	\caption{Distribution of Site-level Covariates in The Multi-Country Survey Experiment $(k = 6)$}
	\label{fig:covariates_multicntry}
	\end{center}
{\raggedright \footnotesize \textit{Notes}: Box plots showing the distribution of covariates among the fifteen countries. The box plots are constructed as explained in Figure \ref{fig:characteristics_gechter}. Each panel depicts the site selected by the different approaches when $k=6$. Solid shapes indicate sites that are in both the solution of the $k$-median problem and the synthetic purposive sampling approach. Hollow shapes indicate solutions that differ across methods. }   
\end{figure}

Figure \ref{fig:network} visualizes the connection networks induced by the different solutions. In both panels, red circles represent countries that are selected for experimentation. The gray lines in Figure \ref{fig:network} indicate the connections between the experimental sites and the policy-relevant sites that were not selected for experimentation (blue circles). For example, we can see that each policy-relevant site is connected to exactly one experimental site, i.e. its nearest neighbor; for example, the United Kingdom uses only the information of France. The connection network in Figure \ref{subfig:network-sps} is considerably more dense, with all policy-relevant sites connected to more than one experimental site. For instance, the synthetic experiment for United Kingdom assigns positive weights to the Czech Republic, Netherlands, and Germany. To further aid the visual interpretation of the connection network, we color each connection differently to capture its strength. For example, for the United Kingdom, the strongest connection is to the Netherlands $(0.74)$, whereas the weakest connection of the United Kingdom is to the Czech Republic $(0.01)$.

Figure \ref{subfig:network-kmedian} also shows that three selected countries in the $k$-median problem (Switzerland, Czech Republic, and Spain) are not connected to any of the other policy-relevant sites, suggesting that they were selected because no other country provides a close enough match for themselves. In contrast, in Figure \ref{subfig:network-sps}, these countries receive positive weights from at least five other countries.
\begin{figure}[!ht]
	\begin{center}
	    \begin{subfigure}{.5\linewidth}
		\centering
		\includegraphics[width=\linewidth]{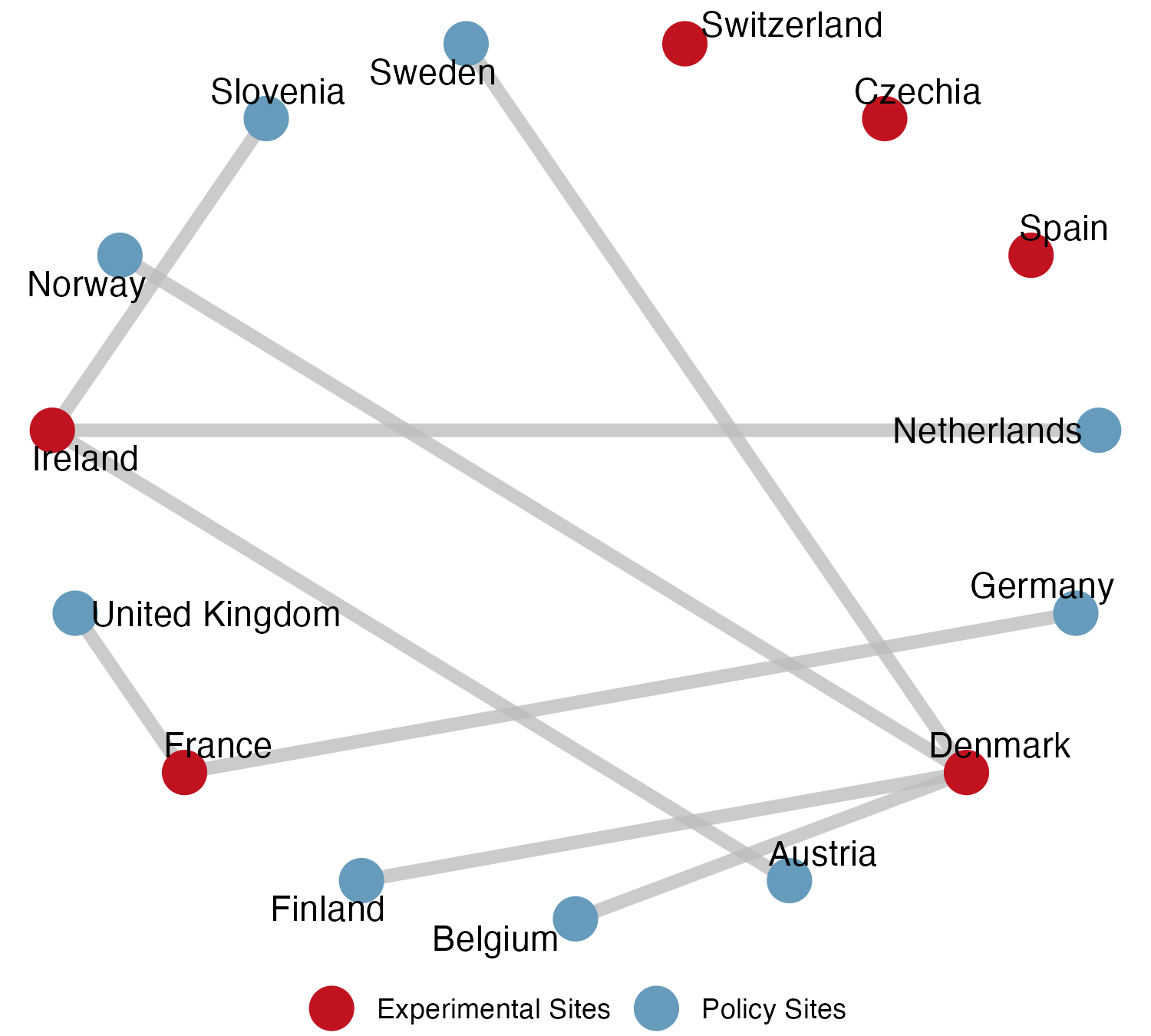}
		\caption{$k$-median}
		\label{subfig:network-kmedian}
	\end{subfigure}%
	\begin{subfigure}{.5\linewidth}
		\centering
        \raisebox{-2in}{\includegraphics[width=\linewidth]{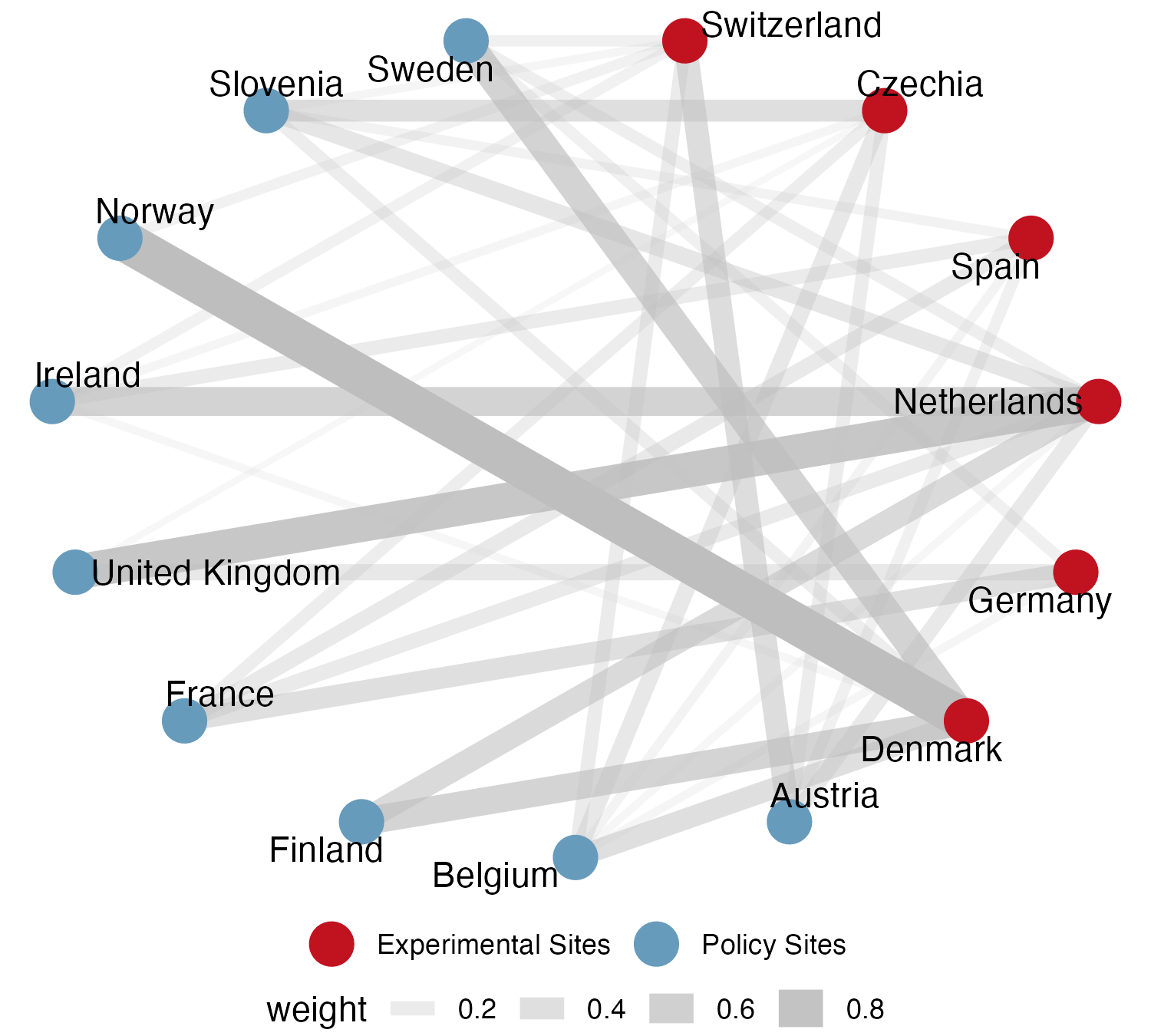}}
		\caption{\cite{egamidesigning}}
		\label{subfig:network-sps}
	\end{subfigure}
	\caption{Connection Network for $k=6$}
	\label{fig:network}
	\end{center}
{\raggedright \footnotesize \textit{Notes}: In Panel (\ref{subfig:network-kmedian}), a connection between a blue dot and a red dot indicates that the corresponding experimental site is the nearest neighbor of the policy site. In Panel (\ref{subfig:network-sps}), each blue dot may be connected to one or multiple red dots, indicating that the corresponding policy site uses the weighted average of one or more experimental sites to construct its synthetic control. The width and the transparency of the connection line indicate the weight that each policy site puts on the connecting experimental site.}   
\end{figure}

{\scshape Approximation Error in Theorem \ref{thm:main}:} Because experimental estimates and corresponding standard errors are available for all $15$ countries, we can compute the approximation error in Theorem \ref{thm:main}. Define the \emph{relative} approximation error of the $k$-median approximation to MMR as  
\begin{equation} \label{eqn:relative_approx}
    \frac{\inf_{\mathscr{S} \in \mathcal{A}(k)} \left( \inf_{T \in \mathcal{T}_{\mathscr{S}}} \sup_{\tau \in \textrm{Lip}_{C}(\mathbb{R}^{d})}  \mathcal{R}(T,\mathscr{S},\tau) \right)}{\frac{C}{2} \frac{1}{\# \mathcal{S}_{P}} \inf_{\mathscr{S} \in \mathcal{A}(k)}\sum_{s \in \mathcal{S}_{P} \backslash \mathscr{S} } \|X_{s} - X_{N_{\mathscr{S}}(s)} \| }.
\end{equation} 

Algebraic manipulations shows that this expression is bounded from below by
\begin{equation*}
\label{eqn:approx_error_bound}
    \max\left\{0, 1-\frac{B \cdot \sigma_{E} \cdot \frac{\min \{ \# \left( \mathcal{S}_{E} \cap \mathcal{S}_{P} \right) , k\}  }{\# \mathcal{S}_{P}}}{\frac{C}{2} \frac{1}{\# \mathcal{S}_{P}} \inf_{\mathscr{S} \in \mathcal{A}(k)}  \sum_{s \in \mathcal{S}_{P} \backslash \mathscr{S} } \|X_{s} - X_{N_{\mathscr{S}}(s)} \| } \right\},
\end{equation*}
and from above by
\begin{equation*}
\label{eqn:approx_error_upper_bound}
    1+\frac{B \cdot \sigma_{E} \cdot \frac{\min \{ \# \left( \mathcal{S}_{E} \cap \mathcal{S}_{P} \right) , k\}  }{\# \mathcal{S}_{P}}}{\frac{C}{2} \frac{1}{\# \mathcal{S}_{P}} \inf_{\mathscr{S} \in \mathcal{A}(k)}  \sum_{s \in \mathcal{S}_{P} \backslash \mathscr{S} } \|X_{s} - X_{N_{\mathscr{S}}(s)} \| },
\end{equation*}
where $B \equiv \arg \max_{z \geq 0} z \Phi(-z)$. The closer the lower and upper bounds are, the better the $k$-median approximation is. 

Figure \ref{fig:approx_error} displays these bounds for values of $k \in \{1, \ldots, 10\}$. As $k$ increases, the approximation becomes worse; as mentioned before, this is driven by ignoring the experimental sites themselves in bounding regret. In this application, the $k$-median solution still works relatively well when $k=6$, with a relative error of about $\pm 13\%$. As $k$ increases to $10$, the relative error is about $\pm 50\%$.\footnote{For these computations, we picked $C$ to be the smallest Lipschitz constant needed to capture the heterogeneity of estimated treatment effects in the data. That is, we pick $C$ as
\begin{equation}
\label{eqn:min_lipC}
    \max_{s,s' \in \mathcal{S}_E \cup \mathcal{S}_P}\frac{|\widehat{\tau}_s - \widehat{\tau}_{s'}|}{\|X_s- X_{s'}\|},
\end{equation}
where $\widehat{\tau}_s$ denote the estimated treatment effects for site $s$. It turns out that this $C$ is comfortably ``large'' in the sense of Theorem \ref{thm:main}. We provide additional details in Appendix \ref{subsection:LipsConstant}.}

\begin{figure}[h!]
    \begin{center}
        \includegraphics[width = 0.8\textwidth]{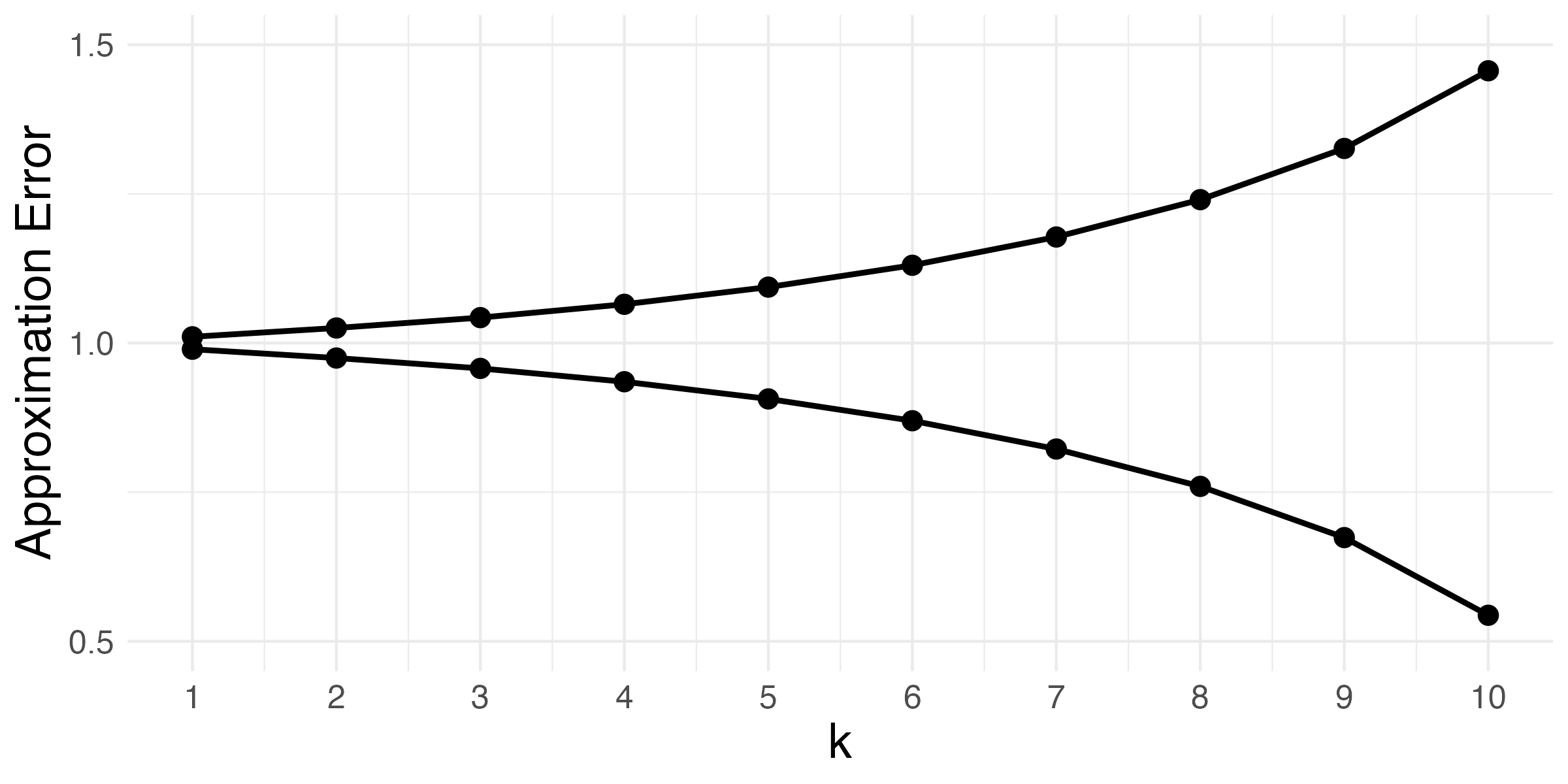}
        \caption{Approximation Error of The $k$-median Solutions}
        \label{fig:approx_error}
    \end{center}
{\raggedright \footnotesize \textit{Notes}: Approximation error of the $k$-median problem in terms of lower and upper bound on the fraction of the true minimax-regret solution over the $k$-median solution; cf. \eqref{eqn:relative_approx}. The Lipschitz constant is chosen to be the smallest Lipschitz constant that is needed to explain the data; cf. \eqref{eqn:min_lipC}.}
\end{figure}

\section{Extensions} \label{section:extensions}

\subsection{Fixed Costs of Experimentation} \label{subsection:fixed_costs}

We next allow for the possibility that running an experiment in a site $s \in \mathcal{S}_{E}$ has a fixed cost $c_s$. This means that the welfare of a decision rule $T$, given that sites $\mathscr{S}$ are selected for experimentation, corresponds to
\begin{equation*} \label{eqn:welfare_cost}
\mathcal{W}_c(T,\mathscr{S},\tau) := \frac{1}{\# \mathcal{S}_{P}} \left( \sum_{s \in \mathcal{S}_{P}} \tau(X_{s}) \mathbb{E}_{\tau_{\mathscr{S}}}[T_{s}(\widehat{\tau}_{\mathscr{S}})] -  \sum_{s \in \mathscr{S}} c_s \right).
\end{equation*}
Based on this welfare function, the oracle action for the policymaker is to implement the policy in any policy-relevant site $s$ for which $\tau(X_s) \geq 0$. Expected regret of $(T,\mathscr{S})$ becomes  
\begin{equation*} \label{eqn:regret_cost}
\mathcal{R}_c(T,\mathscr{S},\tau) := \frac{1}{\# \mathcal{S}_{P}} \sum_{s \in \mathscr{S}} c_s + \frac{1}{\# \mathcal{S}_{P}} \sum_{s \in \mathcal{S}_{P}} \tau(X_{s}) \left( \mathbf{1}\{ \tau(X_s) \geq 0\} - \mathbb{E}_{\tau_{\mathscr{S}}}[T_{s}(\widehat{\tau}_{\mathscr{S}})] \right).
\end{equation*}
Under the assumptions of Theorem \ref{thm:main}, one can show that 
\[ \inf_{\mathscr{S} \in \mathcal{A}(k)} \left( \inf_{T \in \mathcal{T}_{\mathscr{S}}} \sup_{\tau \in \textrm{Lip}_{C}(\mathbb{R}^{d})}  \mathcal{R}_c(T,\mathscr{S},\tau) \right) \]  
can be approximated by $(C/2 \# \mathcal{S}_{P})$ times
\begin{equation} \label{eqn:k-facility-location} 
\inf_{\mathscr{S} \in \mathcal{A}(k)}\left(\sum_{s \in \mathscr{S}} \left(\frac{2}{C} \right) c_s +   \sum_{s \in \mathcal{S}_{P} \backslash \mathscr{S} }  \|X_{s} - X_{N_{\mathscr{S}}(s)} \| \right), 
\end{equation}
and that the approximation error is the same as the one given in Theorem \ref{thm:main}.

The problem in \eqref{eqn:k-facility-location} is the \emph{metric uncapacitated $k$-facility location problem} discussed in Remark \ref{rem:uncapacitated_facility_location_remark}. This is a common extension of the $k$-median problem where there is a fixed cost of opening each facility. It can also be formulated as a linear integer program, namely 
\begin{align*}
    \min_{\{y_i,x_{i,j}\}_{i \in \mathcal{S}_{E}, j \in \mathcal{S}_{P}}} & \left( \sum_{i \in \mathcal{S}_E} y_i c^*_i  +  \sum_{i \in \mathcal{S}_E, j \in \mathcal{S}_P } x_{i,j}\cdot c(j,i) \right) \\
    \text{such that } &\sum_{i \in \mathcal{S}_E} x_{i,j} = 1, \;\;\;\;\; \forall j \in \mathcal{S}_P,  \\
    &\sum_{i \in \mathcal{S}_E} y_i \le k,\\
    & 0 \le x_{i,j} \le y_i, \;\;\;\;\; i \in \mathcal{S}_E, j \in \mathcal{S}_P,\\
    &  y_i \in \{0,1\}, x_{i,j} \in \{0,1\}, \;\;\;\;\; i \in \mathcal{S}_E, j \in \mathcal{S}_P. 
\end{align*}
Just as before, the choice variable $y_i$ indicates whether facility $i$ is open, and $x_{i,j}$ indicates whether client $j$ is assigned to facility $i$. Constraints are imposed to ensure that each client $j$ is assigned to at least one facility, that there are no more than $k$ facilities in total, and that a client can only be assigned to an open facility. The connection cost between a facility $i$ and a client $j$ is still given by $\| X_i - X_j \|$, but now there is a fixed cost $c^*_i \equiv 2 c_i / C$ of opening a facility $i$. Choosing the optimal sites now requires knowledge of the Lipschitz constant $C$ as this constant appears explicitly in the fixed cost $c^*_i$. Consequently, every time a new site is considered for experimentation, there is a trade-off between its contribution to reduce regret and the fixed cost of experimentation. When $C$ is large, the fixed costs in the objective function become negligible, and the solution of the uncapacitated facility location problem can be approximated by the solution of the $k$-median problem.

\subsection{Random Selection of Experimental Sites}\label{subsec:randomize}
So far, our analysis focused on minimizing the worst-case (welfare-based) regret among all \emph{purposive} sampling schemes that select at most $k$ sites. That is, we excluded randomized (including nonuniformly randomized) sampling schemes. This is an important limitation---a widespread view among experimenters is that \emph{``the external validity of randomized evaluations for a given population (say, the population of a country) would be maximized by randomly selecting sites and, within these sites, by randomly selecting treatment and comparison groups''} \citep[][p. 3953]{duflo2007using}. A similar point is made in \cite[p. 493]{list2024optimally}: \emph{``if the researcher chooses locations at random in an initial stage of the experimental design, the this will lead to generalizable results across all potential locations.''}

We now extend our baseline framework (which excludes fixed costs of experimentation) to allow for randomized site selection. First, we note that whether randomization is potentially desired delicately depends on the decision-theoretic setup. For example, the optimal sampling scheme in the Bayesian setting of \citet{gechter2023site} will typically be purposive. Whether randomization improves minimax regret depends on how exactly the decision problem is formulated, which can be related to what we refer to as the \emph{timing} assumptions in an implicit game that the decision maker plays against a malicious ``nature''. We will first clarify this observation and then provide a brief illustration of randomized solutions.

To formalize the discussion, let $M$ denote the cardinality of $\mathcal{A}(k)$. Let $\Delta \left( \mathcal{A}(k) \right)$ denote the set of all probability distributions over the $M$ 
elements of $\mathcal{A}(k)$. We define a randomized site selection as a probability distribution $p := (p_1, \ldots, p_{M}) \in \Delta \left( \mathcal{A}(k) \right)$. The econometrician will pick a subset from the experimental sites by drawing one realization of a distribution $p^* \in \Delta \left( \mathcal{A}(k) \right)$ that she specified. Denote the randomly selected sites by $\mathscr{S}^*$.

In our setting, whether the decision maker will \emph{want to} randomize crucially depends on timing assumptions; that is, the moment in the statistical game in which nature can move. Consider first the case in which an adversarial nature may choose $\tau$ after seeing the realization of $\mathscr{S^*}$ (and knowing $\mathcal{T}_{\mathscr{S}^*}$). Then, the risk of using treatment rule $T \in \mathcal{T}_{\mathscr{S}^*}$ is
\[ \mathcal{R}(T,\mathscr{S}^*,\tau)  \]
and the worst-case payoff becomes 
\[  \sup_{\tau \in \textrm{Lip}_{C}(\mathbb{R}^{d})}  \mathcal{R}(T,\mathscr{S}^*,\tau).  \]

The minimax problem faced by the econometrician after $\mathscr{S}^*$ has been realized is:
\[  V( \mathscr{S}^* ) : = \inf_{T \in \mathcal{T}_{\mathscr{S}^*}} \sup_{\tau \in \textrm{Lip}_{C}(\mathbb{R}^{d})}  \mathcal{R}(T,\mathscr{S}^*,\tau).  \] 

With slight abuse of notation, let $p(\mathscr{S}^*)$ denote the probability of choosing $\mathscr{S}^*$ at random under $p \in \Delta \left( \mathcal{A}(k) \right)$. The (ex ante) expected payoff of any randomized site selection is
\[  \sum_{ \mathscr{S}^* \in \mathcal{A}(k) } p(\mathscr{S}^*) \cdot V(\mathscr{S}^*),   \]
and the optimal randomized site selection solves 
\[  \inf_{p \in \Delta(\mathcal{A}(k))} \sum_{ \mathscr{S}^* \in \mathcal{A}(k) } p(\mathscr{S}^*) \cdot V(\mathscr{S}^*) .  \]
But this problem is solved by any $p$ supported on $\arg \min_{\mathscr{S}} V(\mathscr{S})$, the set of purposive sampling schemes that solve the site selection problem. If that problem's solution is unique, the policymaker will \emph{never} randomize under this timing of the game. Furthermore, this ``sequential'' timing may feel natural in applications that we have in mind.

That said, the timing that seems more in line with \citeauthor{Wald1950}'s (\citeyear{Wald1950}) original application of the minimax principle is one in which the policymaker commits to both a randomized sampling scheme and a set of contingent (on sampling scheme) decision rules and nature adversarially picks $\tau$ before any randomization was realized. We next briefly discuss this possibility. 

To see that randomization might strictly speaking be optimal, consider a stylized example where $k=1$ and the covariates of each site are equal to its index: $S_E=\{1,4\}$, and $S_P=\{2,3\}$. For simplicity, suppose furthermore that $\widehat{\tau}_s=\tau_s$, i.e. there is no sampling uncertainty in the treatment effects. This example can be solved for those combinations of sampling scheme and treatment assignment rule that achieve exact MMR. As we formally show in Appendix \ref{subsection:proof_randomize}, the exact MMR attainable by purposive sampling equals $3C/4$, whereas the exact MMR with randomized sampling equals $C/2$.\footnote{The assumption of perfect signals is for simplicity. The example is rigged to resemble cases analyzed in  \citet{olea2023decision} and \citet{stoye2012minimax}, and we would accordingly be able to generalize it, but for our present purpose, solving for arbitrary sampling variances would only add tedium.} Thus, in principle there can be a gain to randomized sampling.

To see that solving this problem can quickly become very hard, consider now the same example, except that the experimental sites coincide with the policy sites at $\{1,2\}$. Then we can find the MMR optimal combination of purposive sampling scheme and treatment assignment rule, and we can also verify that randomized site selection will strictly reduce worst-case regret.  
However, we are unable to characterize the exact solution for this, still extremely structured, case. 

In related work, \cite{fernandez2024} further explores the use of the \emph{Hedge Algorithm} for finding the minimax regret optimal randomized site selection. Their results suggest that even if randomization is
allowed, it is possible that choosing the site that is most representative for the policy-relevant site could still be approximately minimax regret optimal. Moreover, in the application they consider, the approximately optimal randomized selection schemes are far from uniform sampling.

\subsection{Other Restrictions on Treatment Heterogeneity}\label{sec:other.resitrction}

\subsubsection{Other Distance Measure Based on Observed Covariates}\label{subsec:other.distance}

To see how our results can allow for other types of distances,  let $m(x,x^{\prime}):\mathbb{R}^{d}\times\mathbb{R}^{d}\rightarrow\mathbb{R}_{+}$ be a distance metric on $\mathbb{R}^d$.\footnote{That is, we assume that $m(x,x') > 0$ for any $x \neq x'$, and that for any $x$ we have $m(x,x)=0$. We also assume that the function is symmetric, in that $m(x,x')=m(x',x)$. And finally, we assume that $m$ satisfies the triangle inequality $m(x,x') \leq m(x,x'')+m(x',x'')$ for any $x,x',x''$. Also see \citet[][p.20]{dudley02}.    } Then, Assumption \ref{asm:Lips} may be modified as:

\begin{assumption}\label{asm:general.distance}
$\tau$ is a Lipschitz function (with respect to metric $m(\cdot,\cdot)$) with known constant $C$. That is, for any $x,x' \in \mathbb{R}^{d}$, $|\tau(x)-\tau(x')| \leq C m(x,x')$.
\end{assumption}
For example, $m(x,x^{\prime})=\bigl((x-x^{\prime})^{\top}(x-x^{\prime})\bigr)^{1/2}$ is the Euclidean distance in Assumption \ref{asm:Lips} and $m(x,x^{\prime})=\bigl((x-x^{\prime})^{\top}W(x-x^{\prime})\bigr)^{1/2}$ for some positive definite matrix $W$ is a weighted Euclidean distance. Choosing $m(x,x^{\prime})=\left[\tilde{m}(x,x^{\prime})\right]^{\alpha}$ for some $\alpha\in(0,1)$ and some distance measure $\tilde{m}(\cdot,\cdot)$ also effectively allows us to model $\tau$ as  a H\"{o}lder continuous function of order $\alpha$ \citep[][p. 56]{dudley02}. 
With Assumptions \ref{asm:asm1} and \ref{asm:general.distance},  $N_{\mathscr{S}}(s)$ is understood to be a nearest neighbor in $\mathscr{S}$   measured in terms of metric $m(\cdot,\cdot)$ that appears in Assumption \ref{asm:general.distance}. Moreover,  $\text{Lip}_{C}(\mathbb{R}^d)$ now stands for the space of all Lipschitz functions from $\mathbb{R}^{d}$ to $\mathbb{R}$ with constant $C$, but in terms of metric $m(\cdot,\cdot)$. We here slightly abuse notation because we are not using $m$ to index the nearest-neighbor function $N_{\mathscr{S}}(s)$ or the Lipschitz functional class $\text{Lip}_{C}(\mathbb{R}^d)$ although their definitions depend on the choice of metric $m(\cdot,\cdot)$. Under a general metric $m(\cdot,\cdot)$, the $k$-median problem is simply modified as follows. 

\begin{defn}\label{def:k-median-general-metric}
We say that a purposive sampling scheme, $\mathscr{S} \in \mathcal{A}(k)$, solves the $k$-median problem with a metric cost function $m(\cdot,\cdot)$ if it solves
\begin{equation} \label{eqn:k-median-general-metric}
\inf_{\mathscr{S} \in \mathcal{A}(k)} \sum_{s \in \mathcal{S}_{P} } m(X_{s}, X_{N_{\mathscr{S}}(s)}).
\end{equation}
\end{defn}

Since the proof of Theorem \ref{thm:main} does not rely on any specific property of Euclidean distance (other than the fact that it is a distance), Theorem \ref{thm:main} still holds after replacing the Euclidean distance $\left\Vert  \cdot-\cdot\right\Vert$ with an arbitrary metric $m(\cdot,\cdot)$. 

A natural question that arises in light of the objective function in \eqref{eqn:k-median-general-metric} is whether there is a weaker version of Assumption \ref{asm:general.distance} where $m(\cdot,\cdot)$ is not necessarily a metric. Such a result could be helpful, for example, to provide a decision-theoretic foundation for use of the \emph{k-means} algorithm for site selection, which would arise by setting $m(x,x')=\| x-x'\|^2$. Unfortunately, it is known that the squared Euclidean distance is not a metric (as it violates triangle inequality). Moreover, it is also known that if one insists in having a version of Assumption \ref{asm:general.distance} with $m(x,x')=\| x-x'\|^2$, the only functions that satisfy this property in $\mathbb{R}^{d}$ are constant functions, which preclude treatment effect heterogeneity. Thus, while arbitrary Lipschitz-type conditions for general functions $m(\cdot,\cdot)$ are likely not useful, in Section \ref{subsec:cc} we discuss the extent to which we can generate results for a further generalization of Assumption \ref{asm:general.distance} that considers a convex and centrosymmetric space of functions.

In practice, as with matching estimators, distance measures must weight covariates, but there is no obvious best way to do so, and the solution to the $k$-median problem will be sensitive to this choice.  Following the literature \citep[][Section 18.5]{imbens2004nonparametric,imbens2009recent,imbens2015causal}, we think two metrics are reasonable: Mahalanobis metric, i.e., 
$m(x,x^{\prime})=\bigl((x-x^{\prime})^{\top}\Sigma_X^{-1}(x-x^{\prime})\bigr)^{1/2}$, where $\Sigma_X$ is the covariance matrix of site-level covariates; or the modification thereof that sets all off-diagonal elements of $\Sigma_X^{-1}$ to zero. Relative to Euclidean distance, the Mahalanobis metric has the advantage of being invariant to changes in the scale of covariates (and more generally, to affine transformations of covariates). However, this metric is not invariant to other transformations and it is not usually recommended for categorical variables. For alternative options, in Section \ref{subsection:structural_model} below we also discuss how a structural model for treatment effects can be used naturally to define a metric $m(\cdot,\cdot)$ over covariates.

\subsubsection{Unobserved Treatment Heterogeneity}\label{subsec:unobserved}

Our assumptions can also be modified to accommodate some forms of unobserved treatment heterogeneity. Instead of viewing policy effects as a function of observed covariates $X$ only, we now model the policy effects at the site level. Specifically, for each $s\in\mathcal{S}$, denote by $\tau_s\in \mathbb{R}$ the policy effect in site $s$. Then, instead of viewing $\tau$ as a function, we will let $\tau:=(\tau_1,\tau_2,...,\tau_S)^{\top}\in\mathbb{R}^{S}$ be a vector of dimension $S$ that represents the policy effects of all sites in $\mathcal{S}$. Then, Assumptions \ref{asm:Lips} and \ref{asm:asm1} can be replaced with the following:

\begin{assumption}\label{asm:unobserved}
For any $s,s^{\prime} \in \mathcal{S}$, $s\neq s^{\prime}$,
$\tau$ satisfies:
\begin{equation}\label{eq:ass.1.alt}
|\tau_s-\tau_{s^{\prime}}| \leq C m(X_{s},X_{s^{\prime}})+c, 
\end{equation}
where $C>0$ and $c>0$ are both known, and $m(\cdot,\cdot)$ is a metric. 
\end{assumption}

Assumption \ref{asm:unobserved} allows sites with the same covariates to have different policy effects. The difference, however, is assumed to be at most $c$. We motivate Assumption \ref{asm:unobserved} using a simple linear regression model in Appendix  \ref{sec:app.unobserved}. Denote by $\mathcal{F}_{C,c}\subset \mathbb{R}^{S}$ the collection of all
vectors of dimension $\mathbb{R}^{S}$  satisfying (\ref{eq:ass.1.alt}). For this parameter class, the MMR purposive sampling scheme and treatment rule are redefined to solve 
\begin{equation}\label{eq:mmr.alt}
\inf_{\mathscr{S}\in\mathcal{A}(k),T\in\mathcal{T}_{\mathscr{S}}}\sup_{\tau\in\mathcal{F}_{C,c}}\mathcal{R}(T,\mathscr{S},\tau).    
\end{equation}

Then the purposive sampling scheme that solves \eqref{eq:mmr.alt} can still be approximated by the solution of the $k$-median problem in Definition \ref{def:k-median-general-metric} with a different cost function. To see this, note that the functional class $\mathcal{F}_{C,c}$ is still convex and centrosymmetric. Thus, we can show that, for the evidence
aggregation problem studied in \cite{olea2023decision}, the conclusion of their
Proposition 1(iii) extends to  $\mathcal{F}_{C,c}$
as well. Then, under Assumption \ref{asm:unobserved}, Lemmas \ref{lem:Lemma1} and \ref{lem:Lemma2} continue to hold with minor modifications. In particular,
when $C$ is large enough (with a threshold that can be exactly characterized), we
have that 
\[
\inf_{T\in\mathcal{T}_{\mathscr{S}}}\sup_{\tau\in\mathcal{F}_{C,c}}\mathcal{R}(T,\mathscr{S},\tau)~\leq~\;\frac{B}{\#\mathcal{S}_{P}}\sum_{s\in\mathscr{S}\cap\mathcal{S}_{P}}\sigma_{s}~+~\frac{1}{\#\mathcal{S}_{P}}\sum_{s\in\mathcal{S}_{P}\backslash\mathscr{S}}\frac{Cm(X_{s},X_{N_{\mathscr{S}}(s)})+c}{2}.
\]
In addition, the following lower bound can also be verified to hold:
\[
\inf_{T\in\mathcal{T}_{\mathscr{S}}}\sup_{\tau\in\mathcal{F}_{C,c}}\mathcal{R}(T,\mathscr{S},\tau)\geq\;\frac{1}{\#\mathcal{S}_{P}}\sum_{s\in\mathcal{S}_{P}\backslash\mathscr{S}}\frac{Cm(X_{s},X_{N_{\mathscr{S}}(s)})+c}{2}.
\]
These bounds imply that, even when treatment effect heterogeneity is characterized by Assumption \ref{asm:unobserved}, the optimized value of the minimax problem \eqref{eq:mmr.alt} can still be uniformly approximated by
\[
\inf_{\mathscr{S} \in \mathcal{A}(k)}~\frac{1}{\#\mathcal{S}_{P}}\sum_{s\in\mathcal{S}_{P}\backslash\mathscr{S}}\frac{Cm(X_{s},X_{N_{\mathscr{S}}(s)})+c}{2}.
\]
When $\mathcal{S}_{E} \cap \mathcal{S}_{P} = \emptyset$, the solution to the problem above is the same as 
\[
\inf_{\mathscr{S} \in \mathcal{A}(k)}~\frac{1}{\#\mathcal{S}_{P}}\sum_{s\in\mathcal{S}_{P}}\frac{Cm(X_{s},X_{N_{\mathscr{S}}(s)})+c}{2}.
\]
This problem is equivalent to the $k$-median problem in Definition \ref{def:k-median-general-metric}, where the connection cost between sites $s$ and $s'$ is given by $m(X_{s},X_{s'})$.  

\subsubsection{Treatment Heterogeneity Implied by Structural Model} \label{subsection:structural_model} 
Our approach can also accommodate structural models of treatment heterogeneity. To see this, let $\tau(X_s)=g(\theta,X_s)$ be the treatment effect of site $s$, built from a structural model $g(\cdot,\cdot)$ with parameters $\theta\in\Theta$ \citep{gechter2023site}. Then, the site-level treatment effect is governed by $\tau:=\left\{ \tau(\cdotp)\mid\tau(\cdotp)=g(\theta,\cdotp),\theta\in\Theta\right\}$.
We may define the  metric
\[m(x,x^\prime):=\sup_{\theta\in\Theta}\left|g(\theta,x)-g(\theta,x^{\prime})\right|,\]
implying that $\tau(\cdot)$ informed by the structural model is $1$-Lipschitz continuous with respect to the redefined $m(\cdot,\cdot)$ metric. Moreover, suppose we have a plausible structural estimate $\theta^*$ derived from previous studies. One may incorporate such information by analogously defining
\[m(x,x^\prime):=\sup_{\theta\in\Theta^*}\left|g(\theta,x)-g(\theta,x^{\prime})\right|,\]
where $\Theta^* \subset\Theta$  is a subset of parameter values  (possibly  a singleton) containing $\theta^*$. 

\subsubsection{General Convex and Centrosymmetric Class}\label{subsec:cc}
Instead of Assumption \ref{asm:Lips}, suppose $\tau(\cdotp)\in L_{cc}$, where
$L_{cc}$ is a convex and centrosymmetric (i.e., $\tau(\cdotp)\in L_{cc}$
implies $-\tau(\cdotp)\in L_{cc}$) parameter space. It turns out
that we can still derive a tractable upper bound for the associated
MMR optimal purposive sampling scheme. For example,
suppose $\mathcal{S}_{E}\cap\mathcal{S}_{P}=\emptyset$. Given $\mathscr{S}\in \mathcal{A}(k)$,
let
\[
\overline{I}_{s,\mathcal{\mathscr{S}}}(0):=\sup\left\{ u\in\mathbb{R}\mid\tau(X_{s})=u,\tau(X_{\mathscr{S}_{j}})=0,j=1,\ldots,\#\mathscr{S},\tau\in L_{cc}\right\} 
\]
be the upper bound of the identified set for the treatment effect
in policy relevant site $s\in\mathcal{S}_{P}$ when the treatment
effects for each of the experimental sites in $\mathscr{S}$ equals
0. Applying \cite{yata2021} and \cite{olea2023decision} for general
convex and centrosymmetric parameter space, we may derive the following
upper bound (following analogous steps of Lemma \ref{lem:Lemma1}): For every $\mathscr{S}\in \mathcal{A}(k)$,
whenever $\max\limits _{\mathscr{S}\in\mathcal{A}(k),s\in\mathcal{S}_{P}}\left\{ \overline{I}_{s,\mathcal{\mathscr{S}}}(0)\right\} $
is sufficiently large, we have
\begin{equation}
\inf_{T\in\mathcal{T}_{\mathscr{S}}}\sup_{\tau\in L_{cc}}\mathcal{R}(T,\mathscr{S},\tau)\leq\left(\frac{1}{2}\frac{1}{\#\mathcal{S}_{P}}\sum_{s\in\mathcal{S}_{P}}\overline{I}_{s,\mathcal{\mathscr{S}}}(0)\right).\label{eq:convex.centrosymmetric}
\end{equation}
Therefore, one may still try to solve the site selection problem by
minimizing 
\begin{equation}
\inf_{\mathscr{S}\in \mathcal{A}(k)}\sum_{s\in\mathcal{S}_{P}}\overline{I}_{s,\mathcal{\mathscr{S}}}(0).\label{eq:algorithm.cc}
\end{equation}
However, this foregoes two advantages of Lipschitz restrictions. In terms of algorithms,
the objective function in (\ref{eq:algorithm.cc}) does not give a
direct connection to a distance metric given the general nature of
$\overline{I}_{s,\mathcal{\mathscr{S}}}(0)$, implying additional
computational burden; in terms of theory, it may be more challenging
to show that the upper bound in (\ref{eq:convex.centrosymmetric})
is tight, while in the Lipschitz case, we are able to explicitly construct
a function in $\textrm{Lip}_C(\mathbb{R}^d)$ under which the upper bound is attained (Lemma \ref{lem:Lemma2}).

\section{Conclusion}
\label{sec:conclusion}
This paper presented a decision-theoretic justification for viewing the question of how to best choose \emph{where} to experiment in order to optimize external validity as a \emph{$k$-median  problem}. More concretely, we presented  conditions under which minimizing the worst-case, welfare-based regret among all purposive (nonrandomized) schemes that select $k$ sites is approximately equal, and can be exactly equal, to finding the $k$ most central vectors of baseline site-level covariates. 

We believe there are many interesting directions for future work. For example, while we focused on purposive sampling schemes, it would be interesting to better understand the value of randomized sampling schemes and whether site-level covariates can be used to design such randomized selection with an eye on external validity. We also think that discussions around the relation between the $k$-median problem and synthetic purposive sampling of \cite{egamidesigning} open interesting research directions to provide a decision-theoretic justification for the use the synthetic control of \cite{abadie2010synthetic}.

\appendix

\section{Proofs of Main Results}\label{sec:app.main}

\subsection{Proof of Lemma \ref{lem:Lemma1}} \label{subsection:proof_Lemma1}
Fix the selected sites $\mathscr{S}$, and denote the cardinality of $\mathscr{S}$ as $\#\mathscr{S}$. Let $\mathscr{S}_{1} < \mathscr{S}_{2} < \ldots < \mathscr{S}_{\#\mathscr{S}}$ denote the indices of the $\#\mathscr{S}$ experimental sites in $\mathscr{S}$.
For a given experimental site $s \in \mathscr{S}$, let $\widehat{\mathcal{\tau}}_{s}$ denote its corresponding treatment effect estimate. Let 
\[ \widehat{\tau}_{\mathscr{S}} : = ( \widehat{\tau}_{\mathscr{S}_1} , \ldots, \widehat{\tau}_{\mathscr{S}_{\#\mathscr{S}}}  )^{\top}\]
denote the vector containing the estimates for each experimental site. 

For each policy-relevant site, $s \in \mathcal{S}_{P}$, recall that $N_{\mathscr{S}}(s)$ denotes its nearest neighbor among the sites $\mathscr{S}$ (or the nearest neighbor with the smallest index in case of multiplicity). 
Partition the policy-relevant sites as
\[ \mathcal{S}_{P} = \left( \mathscr{S} \cap \mathcal{S}_{P} \right) \cup \left( \mathcal{S}_{P} \backslash \mathscr{S} \right). \] 
Consider the decision rule, $T^* \in \mathcal{T}_{\mathscr{S}}$, that recommends, for each policy-relevant site $s \in \mathcal{S}_{P}$  and given data $\widehat{\tau}_{\mathscr{S}}$,  the following action: 

\begin{enumerate}

\item [i)] For $s \in \mathcal{S}_{P} \cap \mathscr{S}$, 

\[ T^*_{s} \left( \widehat{\tau}_{\mathscr{S}} \right) := \mathbf{1}\{ \widehat{\mathcal{\tau}}_{s} \geq 0 \}.  \]

\item [ii)] For $s \in \mathcal{S}_{P} \backslash \mathscr{S}$, set

\[T^*_{s} \left( \widehat{\tau}_{\mathscr{S}} \right) := \Phi \left( \frac{\widehat{\tau}_{N_{\mathscr{S}}(s)}}{\tilde{\sigma}_{s}} \right), \]
where

\[ \tilde{\sigma}_{s} : = \sqrt{\left(\frac{C\|X_{s}-X_{N_\mathscr{S}(s)} \|}{\sqrt{\pi/2}}\right)^2-\sigma^2_{N_{\mathscr{S}}(s)} }, \]
\end{enumerate}
and $\sigma_{N_{{\mathscr{S}}(s)}}$ denotes the standard deviation of the treatment effect estimator corresponding to the nearest-neighbor of site $s$ when considering experimental sites $\mathscr{S}$. Note that the expression in ii) above is well-defined for every $s \in \mathcal{S}_{P} \backslash \mathscr{S}$ when $C$ is large enough. Moreover, this decision rule: (1) is the minimax-regret optimal rule (provided $C$ is large enough) for the evidence aggregation framework discussed in \cite{yata2021} and \cite{olea2023decision}, both of which build upon \cite{stoye2012minimax}, and (2) satisfies the following property: for any diagonal matrix $\Sigma$, and  $\forall s \in \mathcal{S}_{P}$, $\mathbb{E} [T_s (U)] = 1/2$, where $ U \sim \mathcal{N}_{\#\mathscr{S}}(\mathbf{0},\Sigma)$. That is,  $T^*$ is such that the ex-ante probability of  implementing the policy is $50\%$ whenever the true treatment effects at the sites experimented on are zero.  

Define $C(\mathscr{S})$ to be the smallest value of $C$ for which $\tilde{\sigma}_{s} > 0$ for every $s \in \mathcal{S}_{P}$. Note that by definition of infimum, 
\begin{eqnarray*}
\inf_{T \in \mathcal{T}_{\mathscr{S}}} \sup_{\tau \in \textrm{Lip}_{C}(\mathbb{R}^{d})}  \mathcal{R}(T,\mathscr{S},\tau)  &\leq& \sup_{\tau \in \textrm{Lip}_{C}(\mathbb{R}^{d})}  \mathcal{R}(T^*,\mathscr{S},\tau),  \\
&\leq & \frac{1}{\# \mathcal{S}_{P}} \sum_{s \in \mathcal{S}_{P}} \sup_{\tau \in \textrm{Lip}_{C}(\mathbb{R}^{d})} \left( \tau(X_{s}) \left( \mathbf{1}\{ \tau(X_s) \geq 0\} - \mathbb{E}_{\tau_{\mathscr{S}}}[T^*_{s}(\widehat{\tau}_{\mathscr{S}})] \right) \right), 
\end{eqnarray*}
where the second inequality follows from the fact that 
\[\mathcal{R}(T,\mathscr{S},\tau) := \frac{1}{\# \mathcal{S}_{P}} \sum_{s \in \mathcal{S}_{P}} \tau(X_{s}) \left( \mathbf{1}\{ \tau(X_s) \geq 0\} - \mathbb{E}_{\tau_{\mathscr{S}}}[T_{s}(\widehat{\tau}_{\mathscr{S}})] \right). \]
It is a well-known result (and can be verified by algebra) that for any $s \in \mathscr{S} \cap \mathcal{S}_{P}$, 
\begin{equation} \label{eqn:aux1_lemma1}
\sup_{\tau \in \textrm{Lip}_{C}(\mathbb{R}^{d})} \left( \tau(X_{s}) \left( \mathbf{1}\{ \tau(X_s) \geq 0\} - \mathbb{E}_{\tau_{\mathscr{S}}}[T^*_{s}(\widehat{\tau}_{\mathscr{S}})] \right) \right) = B \sigma_s,
\end{equation}
where $B \equiv \arg \max_{z \geq 0} z \Phi(-z)$.

Moreover, it follows from \cite[Proposition 1 (iii) and its proof]{olea2023decision} that if $C> \underset{s\in\mathcal{S}_{P} \backslash \mathscr{S}}{\max}\left\{ \sqrt{\frac{\pi}{2}}\frac{\sigma_{N_{{\mathscr{S}}(s)}}}{\left\Vert X_{s}-X_{N_{{\mathscr{S}}(s)}}\right\Vert }\right\}:=C(\mathscr{S})$, it holds that  for any $s \in \mathcal{S}_{P} \backslash \mathscr{S}$:

\begin{equation} \label{eqn:aux2_lemma1}
\sup_{\tau \in \textrm{Lip}_{C}(\mathbb{R}^{d})} \left( \tau(X_{s}) \left( \mathbf{1}\{ \tau(X_s) \geq 0\} - \mathbb{E}_{\tau_{\mathscr{S}}}[T^*_{s}(\widehat{\tau}_{\mathscr{S}})] \right) \right) = \frac{C}{2} \| X_s - X_{N_{\mathscr{S}}(s)} \|.
\end{equation}

Equations \eqref{eqn:aux1_lemma1}-\eqref{eqn:aux2_lemma1} imply:

\begin{equation}
 \inf_{T \in \mathcal{T}_{\mathscr{S}}} \sup_{\tau \in \textrm{Lip}_{C}(\mathbb{R}^{d})}  \mathcal{R}(T,\mathscr{S},\tau)  \leq \; 
 \left( \frac{B}{\# \mathcal{S}_{P}} \sum_{s \in \mathscr{S} \cap \mathcal{S}_{P}} \sigma_s  \right) 
+ \left( \frac{C}{2} \frac{1}{\# \mathcal{S}_{P}} \sum_{s \in \mathcal{S}_{P} \backslash \mathscr{S} } \|X_{s} - X_{N_{\mathscr{S}}(s)} \| \right) .
\end{equation}

\subsection{Proof of Lemma \ref{lem:Lemma2}} \label{subsection:proof_Lemma2}

Fix the selected sites $\mathscr{S}$ and let $\mathscr{S}_{1} < \mathscr{S}_{2} < \ldots < \mathscr{S}_{\#\mathscr{S}}$ denote the indices of the $\#\mathscr{S}$ experimental sites in $\mathscr{S}$. For any $X \in \mathbb{R}^d$, define $X_{\mathcal{N}_\mathscr{S}}$ to be the element in $\{X_{\mathscr{S}_{1}}, \ldots, X_{\mathscr{S}_{\#\mathscr{S}}} \}$ that is closest to $X$ in terms of $\| \cdot \|$ (if there is more than one closest element, pick the $X$ associated with the smallest index). The proof has three parts.  

{\scshape Part I:} Consider the function $\tau^{*}:\mathbb{R}^d \rightarrow \mathbb{R}$ such that: 
\[ \tau^{*}(X) = C \| X - X_{\mathcal{N}_{\mathscr{S}}} \|. \]
We start by showing that $\tau^{*}$ is Lipschitz with constant $C$. To see this, consider three cases:

\emph{Case 1:} Suppose first that $X,X' \in \{X_{\mathscr{S}_{1}}, \ldots, X_{\mathscr{S}_{\#\mathscr{S}}} \}$. In this case, we trivially have 
\[ |\tau^{*}(X) - \tau^{*}(X')| = 0 \leq C\| X  - X'\|. \]

\emph{Case 2:} Suppose now that $X \notin \{X_{\mathscr{S}_{1}}, \ldots, X_{\mathscr{S}_{\#\mathscr{S}}} \}$, but $X' \in \{X_{\mathscr{S}_{1}}, \ldots, X_{\mathscr{S}_{\#\mathscr{S}}} \}$. In this case, 
\[ |\tau^{*}(X) - \tau^{*}(X')| =  C\| X  - X_{\mathcal{N}_{\mathscr{S}}}\| \leq C\| X  - X'\|, \]
where the last inequality follows by the definition of $X_{\mathcal{N}_{\mathscr{S}}}$ and the fact that $X' \in\{X_{\mathscr{S}_{1}}, \ldots, X_{\mathscr{S}_{\#\mathscr{S}}} \}$.

\emph{Case 3:} Finally, take $X,X' \notin \{X_{\mathscr{S}_{1}}, \ldots, X_{\mathscr{S}_{\#\mathscr{S}}} \}$. In this case, 
\[ |\tau^{*}(X) - \tau^{*}(X')| =  C| \| X  - X_{\mathcal{N}_{\mathscr{S}}}\| -  \| X'  - X'_{\mathcal{N}_{\mathscr{S}}}\||. \]
Without loss of generality, assume that $\| X  - X_{\mathcal{N}_{\mathscr{S}}} \| \geq \| X'  - X'_{\mathcal{N}_{\mathscr{S}}}\|$. Then,
\begin{eqnarray*}
|\tau^{*}(X) - \tau^{*}(X')| &=& C \left( \| X  - X_{\mathcal{N}_{\mathscr{S}}}\| -  \| X'  - X'_{\mathcal{N}_{\mathscr{S}}}\| \right) \\
&\leq& C \left( \| X  - X'_{\mathcal{N}_{\mathscr{S}}}\| -  \| X'  - X'_{\mathcal{N}_{\mathscr{S}}}\| \right) \\
&=& C \left( \| X  - X' + X' -  X'_{\mathcal{N}_{\mathscr{S}}}\| -  \| X'  - X'_{\mathcal{N}_{\mathscr{S}}}\| \right) \\
&\leq& C  \| X  - X' \|,
\end{eqnarray*}
where the first inequality uses the definition of $X_{\mathcal{N}_{\mathscr{S}}}$, and the last display uses the triangle inequality. We conclude that $\tau^*$ is a Lipschitz function with constant $C$, which means it is included in our parameter space. 

\textsc{Part II:} Since $\tau^{*}\in\textrm{Lip}_{C}(\mathbb{R}^{d})$,
so is $-\tau^{*}$. Consider a choice of $\tau$ such that $\tau=\tau^{*}$
with probability $1/2$ and $\tau=-\tau^{*}$ with probability $1/2$.
Then, for any treatment rule $T$: 
\begin{equation*}
\sup_{\tau\in\textrm{Lip}_{C}(\mathbb{R}^{d})}\mathcal{R}(T,\mathscr{S},\tau)\geq\frac{1}{2}\mathcal{R}(T,\mathscr{S},\tau^{*})+\frac{1}{2}\mathcal{R}(T,\mathscr{S},-\tau^{*})\label{eqn:Lemma2_aux1}
\end{equation*}
Moreover, by definition, for all $\tau$,
\[
\mathcal{R}(T,\mathscr{S},\tau):=\frac{1}{\#\mathcal{S}_{P}}\sum_{s\in\mathcal{S}_{P}}\tau(X_{s})\left(\mathbf{1}\{\tau(X_{s})\geq0\}-\mathbb{E}_{\tau_{\mathscr{S}}}[T_{s}(\widehat{\tau}_{\mathscr{S}})]\right).
\]
Since $\tau^{*}(X_{s})=0$ for all $s\in\mathcal{S}_{P}\cap\mathscr{S}$,
we have that 
\begin{eqnarray*}
\mathcal{R}(T,\mathscr{S},\tau^{*}) & = & \frac{1}{\#\mathcal{S}_{P}}\sum_{s\in\mathcal{S}_{P}\backslash\mathscr{S}}\tau^{*}(X_{s})\left(\mathbf{1}\{\tau^{*}(X_{s})\geq0\}-\mathbb{E}_{\tau_{\mathscr{S}}^{*}}[T_{s}(\widehat{\tau}_{\mathscr{S}})]\right)\nonumber \\
 & = & \frac{1}{\#\mathcal{S}_{P}}\sum_{s\in\mathcal{S}_{P}\backslash\mathscr{S}}C\|X_{s}-X_{N_{\mathscr{S}}(s)}\|\left(1-\mathbb{E}_{\mathbf{0}}[T_{s}(\widehat{\tau}_{\mathscr{S}})]\right),
\end{eqnarray*}
where the last line uses the definition of $\tau^{*}$. Moreover,
we have $\|X_{s}-X_{N_{\mathscr{S}}(s)}\|>0$ for all $s\in\mathcal{S}_{P}\backslash\mathscr{S}$
by Assumption \ref{asm:asm1}, implying $-\tau^{*}(X_{s})<0$ for all $s\in\mathcal{S}_{P}\backslash\mathscr{S}$.
Therefore, we have, analogously, 
\begin{eqnarray*}
\mathcal{R}(T,\mathscr{S},-\tau^{*}) & =\frac{1}{\#\mathcal{S}_{P}}\sum_{s\in\mathcal{S}_{P}\backslash\mathscr{S}}C\|X_{s}-X_{N_{\mathscr{S}}(s)}\|\mathbb{E}_{\mathbf{0}}[T_{s}(\widehat{\tau}_{\mathscr{S}})].
\end{eqnarray*}
Conclude that for any $T\in\mathcal{T}$:
\begin{align}
\sup_{\tau\in\textrm{Lip}_{C}(\mathbb{R}^{d})}\mathcal{R}(T,\mathscr{S},\tau) & \geq\frac{1}{2}\frac{1}{\#\mathcal{S}_{P}}\sum_{s\in\mathcal{S}_{P}\backslash\mathscr{S}}C\|X_{s}-X_{N_{\mathscr{S}}(s)}\|\left(1-\mathbb{E}_{\mathbf{0}}[T_{s}(\widehat{\tau}_{\mathscr{S}})]\right)\nonumber \\
 & +\frac{1}{2}\frac{1}{\#\mathcal{S}_{P}}\sum_{s\in\mathcal{S}_{P}\backslash\mathscr{S}}C\|X_{s}-X_{N_{\mathscr{S}}(s)}\|\mathbb{E}_{\mathbf{0}}[T_{s}(\widehat{\tau}_{\mathscr{S}})]\nonumber \\
 & =\frac{C}{2}\frac{1}{\#\mathcal{S}_{P}}\sum_{s\in\mathcal{S}_{P}\backslash\mathscr{S}}\|X_{s}-X_{N_{\mathscr{S}}(s)}\|\left\{ 1-\mathbb{E}_{\mathbf{0}}[T_{s}(\widehat{\tau}_{\mathscr{S}})]+\mathbb{E}_{\mathbf{0}}[T_{s}(\widehat{\tau}_{\mathscr{S}})]\right\} \nonumber \\
 & =\frac{C}{2}\frac{1}{\#\mathcal{S}_{P}}\sum_{s\in\mathcal{S}_{P}\backslash\mathscr{S}}\|X_{s}-X_{N_{\mathscr{S}}(s)}\|.\label{eqn:Lemma2_aux3}
\end{align}

\textsc{Part III:} Equation (\ref{eqn:Lemma2_aux3}) implies 
\[
\inf_{T\in\mathcal{T}_{\mathscr{S}}}\sup_{\tau\in\textrm{Lip}_{C}(\mathbb{R}^{d})}\mathcal{R}(T,\mathscr{S},\tau)\geq\frac{C}{2}\frac{1}{\#\mathcal{S}_{P}}\sum_{s\in\mathcal{S}_{P}\backslash\mathscr{S}}\|X_{s}-X_{N_{\mathscr{S}}(s)}\|.
\]

\bibliographystyle{ecta}
\bibliography{references.bib}

\newpage

\section{Supplemental Appendix}

\subsection{Gurobi's Output} \label{sec:Gurobi} 

In this section, we discuss some parameters in the \texttt{Gurobi} optimizer that can be tuned to improve its performance.\footnote{For more specifics, see \url{https://www.gurobi.com/documentation/current/refman/mip_models.html}.} First, \texttt{MIPFocus} specifies, on a high level, whether the solution should prioritize speed or optimality. The default value of this parameter is $0$, which achieves a balance between searching for new solutions and proving optimality of the current solution. We set this parameter to $2$, prioritizing finding good-quality and optimal solutions. A second set of important parameters affect how the solver terminates. There are two termination choices for \texttt{MIP} models: 1) a restriction on the runtime of the algorithm, such as using \texttt{TimeLimit} to limit the wall-clock runtime, and 2) controlling the \emph{optimality} gap, by setting a parameter \texttt{MIPGap} that stops the algorithm when the relative gap between the best-known solution and the best known bound on the solution objective is less than the specified value. In this application, we let the runtime to be the default value (infinity) and set the tolerance level to $10^{-6}$. In Figure \ref{fig:gurobi_output}, the \texttt{Gurobi} solver outputs not only the solution and the optimal value of the problem but also the output gap. In this example, we can see that the gap between bounds is $0$, demonstrating that the $k$-median problem has been solved to \emph{provable optimality}. An interesting feature of using the integer programming formulation of the $k$-median problem is that even when we are not able to solve the problem to provable optimality, the optimality gap provides a \emph{suboptimality} guarantee for the obtained solution. Finally, the output reports the algorithm's runtime, which was only $0.05$ seconds for this example. There are more parameters one can tune to control the root relaxation, the aggressiveness of the cutting plane strategies and the level of presolve, tolerance parameters for primal feasibility, the integer feasibility, and more. For the applications in this paper, we let the rest of the parameters be the default.

\begin{figure}[!ht]
    \begin{center}
        \includegraphics[width=0.9\linewidth]{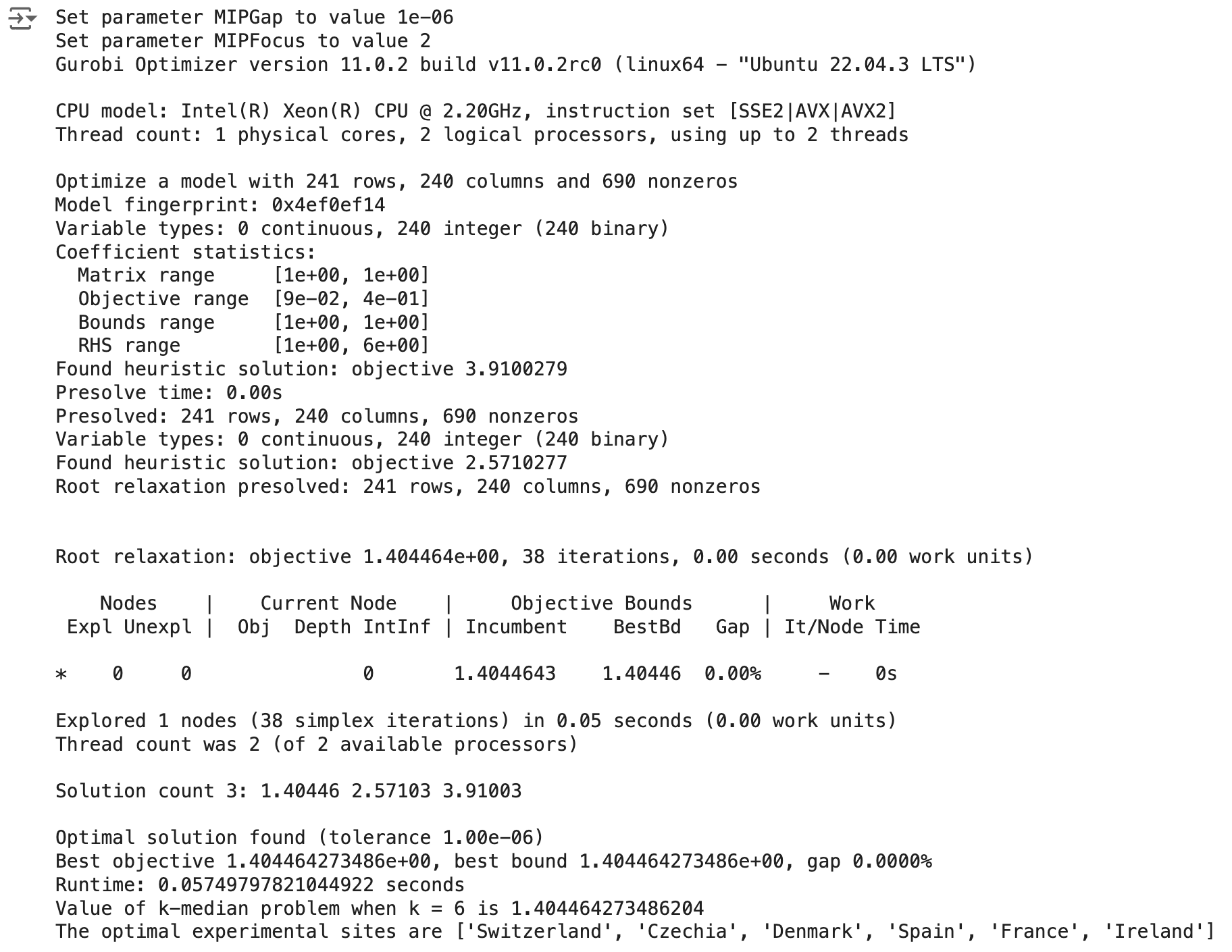}
    \caption{Gurobi Output}
    \label{fig:gurobi_output}
    \end{center}
{\raggedright \footnotesize \textit{Notes}: An example output from the \texttt{MIP} solver in \texttt{Gurobi} using the multi-country experiment for $k = 6$. See details of the application in Section \ref{sec:multicountry}.}
\end{figure}

\subsection{Exact Statement and Proof of Result Discussed in Section \ref{subsec:randomize}} \label{subsection:proof_randomize}
Consider the following specialization of our setting: $d=1$ and therefore $X \in \mathbb{R}$, $\mathcal{S}_E=\{1,4\}$, $\mathcal{S}_P=\{2,3\}$, and for all sites we just have $X_s=s$. (Intuitively, we consider 4 sites lined up equidistantly on a straight line, where the middle two sites are the policy sites.) Suppose furthermore that sampling distributions of signals are degenerate; formally, $\widehat{\tau}_s=\tau_s:=\tau(X_s)$ for all $s$. 
\begin{lem}\label{lem:app1}
    Under this section's additional assumptions (see paragraph immediately above):
    \begin{enumerate}
        \item If sampling schemes may randomize, the lowest regret achievable by any combination of sampling scheme and treatment rule equals $C/2$.
        \item If sampling has to be purposive (nonrandomized), the lowest regret achievable in combination with any treatment rule equals $3C/4$.
    \end{enumerate}
\end{lem}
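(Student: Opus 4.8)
The plan is to treat the two parts separately, dispatching the purposive bound with the paper's own machinery and the randomized bound by a direct analysis of the reduced game.

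For part 2, I would use that $\mathcal{S}_E=\{1,4\}$ and $\mathcal{S}_P=\{2,3\}$ are disjoint and that signals are degenerate, so $\sigma_E=0$, the slack term \eqref{eq:slack} vanishes, and the requirement $C>C^*$ is vacuous (indeed $C^*=0$). Hence the exact identity of Remark \ref{rem:disjoint}, namely \eqref{eqn:same_objective_functions}, applies for every $C>0$: for each admissible $\mathscr{S}$ the optimized worst-case regret equals $\tfrac{C}{2}\tfrac{1}{\#\mathcal{S}_P}\sum_{s\in\mathcal{S}_P\setminus\mathscr{S}}\|X_s-X_{N_{\mathscr{S}}(s)}\|$. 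For $\mathscr{S}=\{1\}$ this is $\tfrac{C}{2}\cdot\tfrac12(1+2)=\tfrac{3C}{4}$, and $\mathscr{S}=\{4\}$ gives the same by the reflection $s\mapsto 5-s$; the empty scheme is dominated, since with no signal the $\mathcal{T}^{1/2}$ constraint forces action $1/2$ and worst-case regret is infinite. Minimizing over $\mathcal{A}(1)$ therefore returns exactly $3C/4$.

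For the randomized case I would first reduce the ex ante problem. A non-dominated randomized scheme is a weight $p$ on $\{1\}$ and $1-p$ on $\{4\}$ together with contingent rules; writing $u:=\tau(1)$, $w:=\tau(4)$ and letting $a_s^{(1)}(u),a_s^{(4)}(w)$ denote the actions at $s\in\{2,3\}$, the choice $p=1/2$ and grouping the expected regret by policy site gives
\[
\tfrac12\Big[\tau(2)\big(\mathbf{1}\{\tau(2)\ge0\}-\bar a_2\big)+\tau(3)\big(\mathbf{1}\{\tau(3)\ge0\}-\bar a_3\big)\Big],\qquad \bar a_s:=\tfrac12\big(a_s^{(1)}(u)+a_s^{(4)}(w)\big),
\]
so the scheme acts only through the per-site averaged actions $\bar a_s$. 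For the lower bound I would exhibit the single adversarial configuration $\tau(1)=\tau(4)=0$, $\tau(2)=\tau(3)=r$ with $|r|\le C$, which is Lipschitz-$C$ (piecewise-linear with slopes $r,0,-r$). Under it every scheme observes signal $0$ (or none), so all actions collapse to constants and the ex ante regret equals $r(\mathbf{1}\{r\ge0\}-\bar\beta)$ for some $\bar\beta\in[0,1]$; taking $r=\pm C$ yields at least $C\max(\bar\beta,1-\bar\beta)\ge C/2$. Since this holds for arbitrary $p$ and arbitrary rules, every randomized scheme has worst-case regret at least $C/2$.

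The upper bound is the crux. Using that a Lipschitz-$C$ extension through $(1,u),(2,s),(3,t),(4,w)$ exists iff the consecutive constraints $|u-s|,|s-t|,|t-w|\le C$ hold, I would note that the adversary's signal choice separates: given $(s,t)$ the $u$-terms and $w$-terms of the ex ante regret are additive, so nature solves two independent one-dimensional minimizations and only a two-variable maximization over $\{|s-t|\le C\}$ remains. I would then construct symmetric near/far contingent rules calibrated to the single-site minimax-regret rules of \citet{stoye2012minimax} and \citet{olea2023decision}, check they lie in $\mathcal{T}^{1/2}_{\mathscr{S}}$, and verify the two-variable worst case equals $C/2$, matching the lower bound and the idealized nearest-neighbor value $\tfrac12(\tfrac{C}{2}+\tfrac{C}{2})$. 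The main obstacle is precisely this verification: the coupling $|s-t|\le C$ is indispensable (a site-by-site bound only recovers the purposive value $3C/4$), and the naive nearest-neighbor rules are not quite optimal, leaving regret slightly above $C/2$ along the diagonal $s=t$. Pinning down the genuinely minimax rules is therefore the hard step, most cleanly accomplished by positing a least-favorable prior supported on configurations such as $\tau(2)=\tau(3)=\pm C$ and verifying a saddle point with value $C/2$.
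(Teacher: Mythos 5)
Your treatment of part 2 is correct and takes a genuinely different route from the paper: you specialize Theorem \ref{thm:main} to the degenerate case $\sigma_E=0$, where the slack \eqref{eq:slack} vanishes and $C^{*}=0$, so the exact identity \eqref{eqn:same_objective_functions} yields $\tfrac{C}{2}\cdot\tfrac12(1+2)=3C/4$ for $\mathscr{S}=\{1\}$ (and $\{4\}$ by symmetry). The paper instead redoes the minimax computation by hand, exhibiting the rule $a_2=a_3=\min\{\max\{(3C-2\hat\tau_1)/6C,0\},1\}$ and the two-point prior $\pm(0,C,2C,3C)$. Your route is shorter, but note two caveats: the paper's prior-based lower bound holds for \emph{all} treatment rules, whereas the route through Lemma \ref{lem:Lemma2} uses the $\mathcal{T}^{1/2}_{\mathscr{S}}$ restriction (via $\mathbb{E}_{\mathbf{0}}[T_s]=1/2$); and you are invoking machinery stated for the nondegenerate normal model at the boundary $\sigma=0$, which works but deserves a sentence of justification. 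Your lower bound for part 1 is essentially the paper's: the configuration $\tau=(0,\pm C,\pm C,0)$ is uninformative for any scheme supported on $\{1\},\{4\}$, forcing regret at least $C/2$.

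The genuine gap is the upper bound in part 1, and you say so yourself. Establishing that $C/2$ is \emph{attainable} requires exhibiting a concrete randomized scheme together with contingent treatment rules and verifying that its worst-case regret over all of $\mathrm{Lip}_C(\mathbb{R})$ is exactly $C/2$; without this, parts 1 and 2 combined only bracket the randomized value in $[C/2,\,3C/4]$. You concede that your candidate nearest-neighbor-style rules leave regret strictly above $C/2$ on the diagonal $\tau_2=\tau_3$, and the fallback you propose---a saddle point against the least-favorable prior supported on $\tau_2=\tau_3=\pm C$---does not help: every rule is Bayes against that prior (its Bayes risk is $C/2$ regardless of the rule), so the saddle-point verification collapses back to precisely the worst-case computation you have not done. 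The paper closes this step constructively: randomize uniformly over $\{1\}$ and $\{4\}$ and set $a_2=a_3=\min\{\max\{(C+\hat\tau_{\mathscr{S}})/2C,0\},1\}$, then verify $\sup_\tau\mathcal{R}=C/2$ by a case analysis on the signs of $\tau_2,\tau_3$ and on which clamping constraints bind (the binding cases concentrate out $\tau_1=\tau_2-C$, $\tau_4=\tau_3-C$ and reduce to a two-variable quadratic maximized at $\tau_2+\tau_3=2C$). Some such explicit construction and verification is indispensable; your proposal identifies the right target but does not reach it.
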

\begin{proof} 
To see claim 1, consider the distribution under which $(\tau_1,\tau_2,\tau_3,\tau_4)$ is uniformly distributed over $\{(0,C,C,0),(0,-C,-C,0)\}$. Then the experimental sites do not yield any information, and no sampling and decision rule can improve on tossing a coin, in which case expected regret equals $C/2$. We conclude that the MMR value of this decision problem is at least $C/2$.

Next, suppose the policymaker uniformly randomizes over $\mathscr{S} \subset \{1,4\}$ for experimentation and then implements the new policy with probability $a_2=a_3=[(C+\widehat{\tau}_\mathscr{S})/2C]_0^1$; here, the notation $[X]_0^1 := \min\{\max\{X,0\},1\}$ indicates clamping of $X$ to $[0,1]$, and we denote $\tau_s^+ := \max\{\tau_s, 0\}$, $\tau_s^- := \min\{\tau_s, 0\}$. We will show that worst-case regret under this scheme is $C/2$, which therefore is the problem's MMR value and is attained by this rule.

Careful book-keeping reveals that the policymaker's worst-case expected regret equals
\begin{equation*}
\max_{\substack{(\tau_1,\tau_2,\tau_3,\tau_4): \\ \vert \tau_s-\tau_t\vert \leq C\vert s-t\vert}} \frac{1}{2}(\tau_2^+ +\tau_3^+)\cdot \frac{1}{2}\left(\left[\frac{C-\tau_1}{2C}\right]_0^1+\left[\frac{C-\tau_4}{2C}\right]_0^1\right) +\frac{1}{2}(\tau_2^-+\tau_3^-)\cdot \frac{1}{2}\left(\left[\frac{C+\tau_1}{2C}\right]_0^1+\left[\frac{C+\tau_4}{2C}\right]_0^1\right).
\end{equation*}
We will solve this by considering subcases. Suppose first that $\tau_2$ and $\tau_3$ have different signs. Since both objective and constraints are invariant under multiplying $(\tau_1,\tau_2,\tau_3,\tau_4)$ by $-1$, suppose without further loss of generality that $\tau_2 \geq 0 \geq \tau_3$. The optimization problem now simplifies to
\begin{equation*}
\max_{\substack{(\tau_1,\tau_2,\tau_3,\tau_4): \\ \vert \tau_s-\tau_t\vert \leq C\vert s-t\vert}} \frac{\vert\tau_2\vert}{4} \underset{\equiv B \in [0,2]}{\underbrace{\left(\left[\frac{C-\tau_1}{2C}\right]_0^1+\left[\frac{C-\tau_4}{2C}\right]_0^1\right)}} +\frac{\vert\tau_3\vert}{4} \underset{=2-B}{\underbrace{\left(\left[\frac{C+\tau_1}{2C}\right]_0^1+\left[\frac{C+\tau_4}{2C}\right]_0^1\right)}} \leq \frac{\max\{|\tau_2|,|\tau_3|\}}{2}\leq \frac{C}{2},
\end{equation*}
where the first inequality is justified in the display and the second one follows because $\tau_2$ and $\tau_3$ have different signs but differ by at most $C$.

Now let $\tau_2$ and $\tau_3$ have the same sign, which we take without further loss of generality to be positive. Then we initially observe simplification to
\begin{equation*}
\max_{\substack{(\tau_1,\tau_2,\tau_3,\tau_4): \\ \vert \tau_s-\tau_t\vert \leq C\vert s-t\vert}} \frac{1}{4}(\tau_2 +\tau_3)\left(\left[\frac{C-\tau_1}{2C}\right]_0^1+\left[\frac{C-\tau_4}{2C}\right]_0^1\right),
\end{equation*}
and we can furthermore concentrate out $\tau_1=\tau_2-C$, $\tau_4=\tau_3-C$ to get
\begin{equation*}
\max_{\substack{(\tau_2,\tau_3): \\ \vert \tau_2-\tau_3\vert \leq C}} \frac{1}{4}(\tau_2 +\tau_3)\left(\left[\frac{2C-\tau_2}{2C}\right]_0^1+\left[\frac{2C-\tau_3}{2C}\right]_0^1\right).
\end{equation*}
Clamping of expressions at $1$ cannot bind because $\tau_2$ and $\tau_3$ are positive. If clamping at 0 binds for both fractions, then the objective equals $0$. Suppose clamping at $0$ binds for one expression, say (without further loss of generality) because $\tau_2>2C$, then we have simplification to
\begin{equation*}
\max_{\substack{(\tau_2,\tau_3): \\ \vert \tau_2-\tau_3\vert \leq C}} \frac{1}{4}(\tau_2 +\tau_3)\frac{2C-\tau_3}{2C}.
\end{equation*}
Keeping in mind that $\tau_2>2C$ and therefore also $\tau_3>C$ in this subcase, evaluation of derivatives shows that this expression decreases in $\tau_3$; hence, $\tau_3=\tau_2-C$. Substituting this in, one can further verify the expression to be decreasing in $\tau_2$; therefore, the maximal value in this subcase is attained at a boundary point also covered by the next case (and, though not essential for the argument, this value can be verified to be $3C/8$).

Finally, if no clamping binds, we can reduce the problem to
\begin{equation*}
\max_{\substack{(\tau_2,\tau_3): \\ \vert \tau_2-\tau_3\vert \leq C}} \frac{1}{4}(\tau_2 +\tau_3)\frac{4C-\tau_2-\tau_3}{2C} = \frac{C}{2},
\end{equation*}
where the maximum is attained by setting $\tau_2+\tau_3=2C$.

Regarding claim 2, by the decision problem's symmetry, it is without further loss of generality to assume that site $1$ is being sampled. MMR is then at least $3C/4$ because this value is achieved if the true parameter values $(\tau_1,\tau_2,\tau_3,\tau_4)$ are equally likely to be $(0,C,2C,3C)$ or $-(0,C,2C,3C)$.

We next show that this value is attained by uniformly assigning treatment with probability $a_2=a_3=[(3C-2\widehat{\tau}_1)/6C]_0^1$. Indeed, worst case regret of this decision rule equals
\begin{equation*}
\max_{\substack{(\tau_1,\tau_2,\tau_3,\tau_4): \\ \vert \tau_s-\tau_t\vert \leq C\vert s-t\vert}} \frac{1}{2}(\tau_2^+ +\tau_3^+)\left[\frac{3C-2\tau_1}{6C}\right]_0^1+\frac{1}{2}(\tau_2^- +\tau_3^-)\left[\frac{3C+2\tau_1}{6C}\right]_0^1.
\end{equation*}
If $\tau_2$ and $\tau_3$ have different signs, we can bound this value by $C/2$ just as before. If they have the same sign, taken to be positive without further loss of generality, then the problem simplifies to
\begin{equation*}
\max_{\substack{(\tau_1,\tau_2,\tau_3,\tau_4): \\ \vert \tau_s-\tau_t\vert \leq C\vert s-t\vert}} \frac{1}{2}(\tau_2 +\tau_3)\left[\frac{3C-2\tau_1}{6C}\right]_0^1=\max_{\tau_1} \frac{1}{2}(2\tau_1+3C)\left[\frac{3C-2\tau_1}{6C}\right]_0^1=\frac{3C}{4}.
\end{equation*}
Here, the first equality concentrates out $\tau_2=\tau_1+C$ and $\tau_3=\tau_2+C$; the second equality uses that clamping cannot bind (clamping at $0$ would set the expression to $0$, clamping at $1$ would imply that $\tau_2<0$), after which the problem is straightforwardly solved by $\tau_1=0$.
\end{proof}

\subsection{Additional Analysis of the Survey Experiment} \label{subsection:LipsConstant}

As mentioned in Section \ref{sec:multicountry}, we can obtain experimental estimates of all fifteen countries using the original experiments conducted in \cite{naumann2018attitudes}. Generally, experimental estimates of all policy sites are unknown and unattainable in most real-world applications; otherwise, there is no need to solve the site selection problem. However, as an illustrative example, we will leverage the information in these experiments to quantify the magnitude of the Lipschitz constant $C$ needed to explain the treatment heterogeneity in the data and the constant $C(\mathscr{S})$ in the assumption of Lemma \ref{lem:Lemma1} that gives the result of this paper.

The outcome of interest from the survey is a categorical variable indicating survey respondents' attitudes towards immigrants: $1$ for Allow None; $2$ for Allow A Few; $3$ for Allow Some; $4$ for Allow Many. For a more straightforward interpretation, we redefine the outcome variable to be binary: we let the outcome of the survey respondent be $1$ if she answers $3$ or $4$, indicating ``support"; otherwise, we let her outcome be $0$, indicating ``oppose." The treatment is also a binary variable, which equals $1$ $(0)$ if the survey is about high-skilled (low-skilled) immigrants. We use a simple difference-in-means estimator to estimate the treatment effect of each country. The table below shows the point estimates and their standard errors. The point estimates speak to the difference between the percentage of people who support high-skilled immigrants and the percentage of people who support low-skilled immigrants. For example, 25.8\% more survey respondents in Austria are more supportive of high-skilled immigrants, compared to low-skilled immigrants.

\begin{table}[ht]
\label{table:multicntryexpest}
\begin{center}
\caption{Experimental Estimates of Each Country}
\begin{adjustbox}{width=0.4\textwidth}
\begin{tabular}{lll}
\hline
Country & Estimate & Standard Error \\ 
\hline
Austria & 0.258906 & 0.026490 \\ 
Belgium & 0.232145 & 0.024740 \\ 
Switzerland & 0.285371 & 0.027139 \\ 
Czechia & 0.222865 & 0.019601 \\ 
Germany & 0.339650 & 0.016396 \\ 
Denmark & 0.293745 & 0.025156 \\ 
Spain & 0.265763 & 0.022948 \\ 
Finland & 0.403363 & 0.020222 \\ 
France & 0.275320 & 0.022045 \\ 
United Kingdom & 0.407362 & 0.022651 \\ 
Ireland & 0.238961 & 0.020889 \\ 
Netherlands & 0.301243 & 0.022853 \\ 
Norway & 0.262747 & 0.025613 \\ 
Sweden & 0.149249 & 0.019271 \\ 
Slovenia & 0.301862 & 0.025517 \\ 
   \hline
\end{tabular}
\end{adjustbox}
\end{center}
{\raggedright \footnotesize \textit{Notes}: Difference-in-difference estimates of the policy's treatment effect by country. See text for details on how the outcome variable was constructed.}
\end{table}

In Figure \ref{subfig:multicntryTEvsX}, each point represents a pair of two countries, and the slope from the origin to each point represents the smallest Lipschitz constant needed to explain the data observed for these two countries. Hence, the Lipschitz constant $C$ that is able to explain the treatment effect heterogeneity in the data for all countries is at least 
\begin{equation}
\label{eqn:C_data}
    \max\left\{\frac{|\widehat{\tau}(X_i) - \widehat{\tau}(X_j)|}{\|X_i - X_j \|}\right\}, \quad \forall i, j \in \mathcal{S}_E \cup \mathcal{S}_P,
\end{equation}
which equals 0.0892 and corresponds to the slope of the red dashed line in Figure \ref{fig:multicntryLipC}. The pair of countries that give the maximum of equation \ref{eqn:C_data}
is Finland and Sweden. Additionally, \cite{olea2023decision} show that, for each possible set of experimental sites $\mathscr{S}$, the $C(\mathscr{S})$ that gives the solution in Lemma \ref{lem:Lemma1} is defined as  
\begin{equation}
\label{eqn:C(scr_S)}
    C(\mathscr{S}) := \underset{s\in\mathcal{S}_{P} \backslash \mathscr{S}}{\max}\left\{ \sqrt{\frac{\pi}{2}}\frac{\sigma_{N_{{\mathscr{S}}(s)}}}{\left\Vert X_{s}-X_{N_{{\mathscr{S}}(s)}}\right\Vert }\right\}.
\end{equation}
Replacing $\sigma_{N_{{\mathscr{S}}(s)}}$ with the corresponding estimated standard errors, we get $C(\mathscr{S})$ equals $0.0233$, which corresponds to the slope of the blue dashed line in figure \ref{subfig:multicntryTEvsX}. Both the numbers and the plot indicate the smallest Lipschitz constant $C$ needed to explain the data is bigger than the largest lower bound of $C(\mathscr{S})$ that gives the nearest neighbor result. Additionally, Figure \ref{subfig:multicntryhistC} shows a histogram of $C(\mathscr{S})$ for all possible $\mathscr{S}$, and the red dashed line is the smallest Lipschitz constant compatible with data. By visual inspection, we can conclude that, in this application, the assumption $C > C(\mathscr{S})$ is likely to hold. It is worth pointing out that this assumption is in general not testable because the experimental estimates of all policy sites are unknown and we cannot compute $C(\mathscr{S})$.

\begin{figure}[!ht]
  \begin{center}
  \begin{subfigure}[b]{0.45\linewidth}
  \centering
    \includegraphics[width=\linewidth]{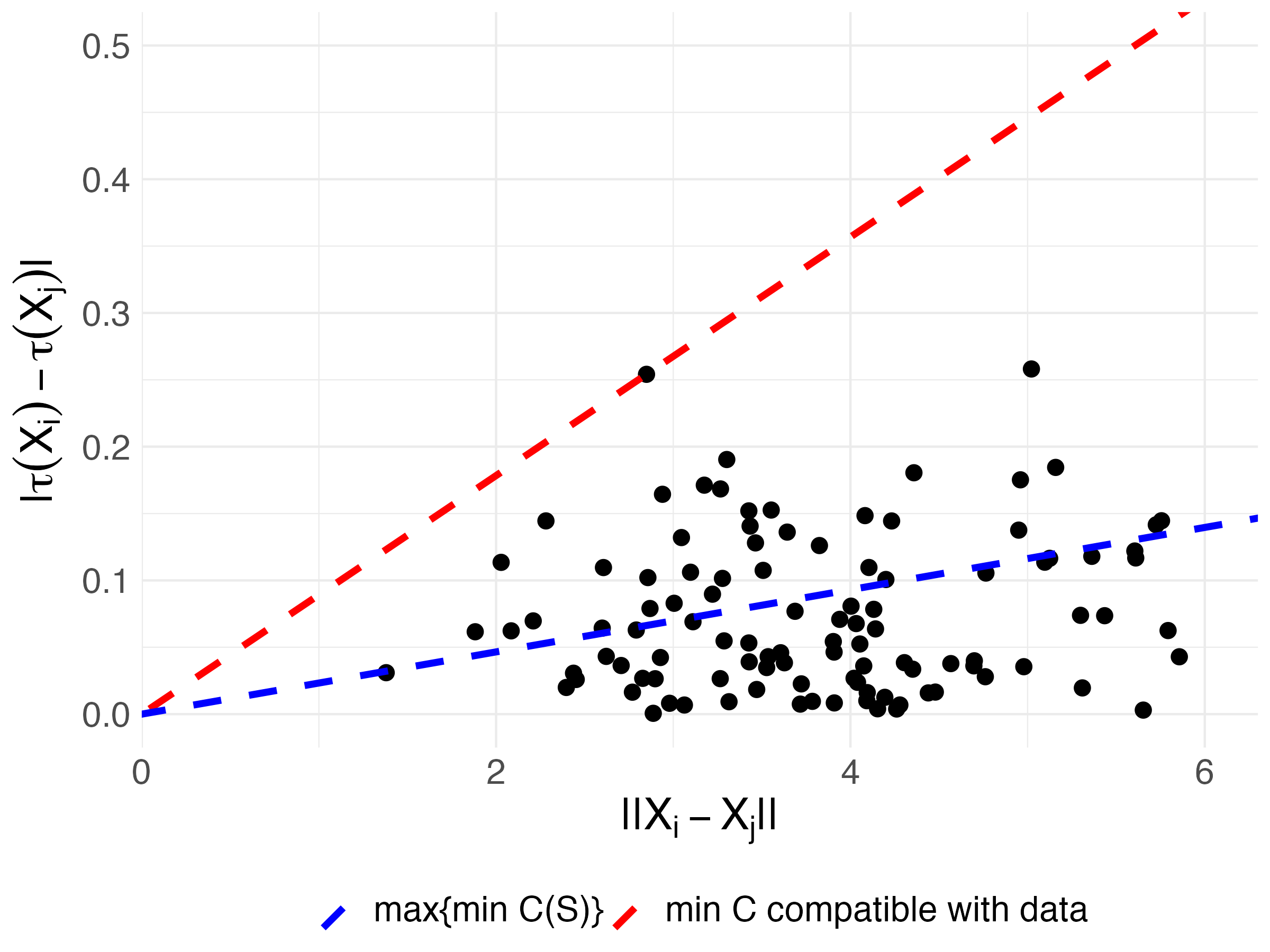}
    \caption{The $C$ Compatible with Data and $C(\mathscr{S})$}
    \label{subfig:multicntryTEvsX}
  \end{subfigure}
  \hfill 
  \begin{subfigure}[b]{0.45\linewidth}
  \centering
    \includegraphics[width=\linewidth]{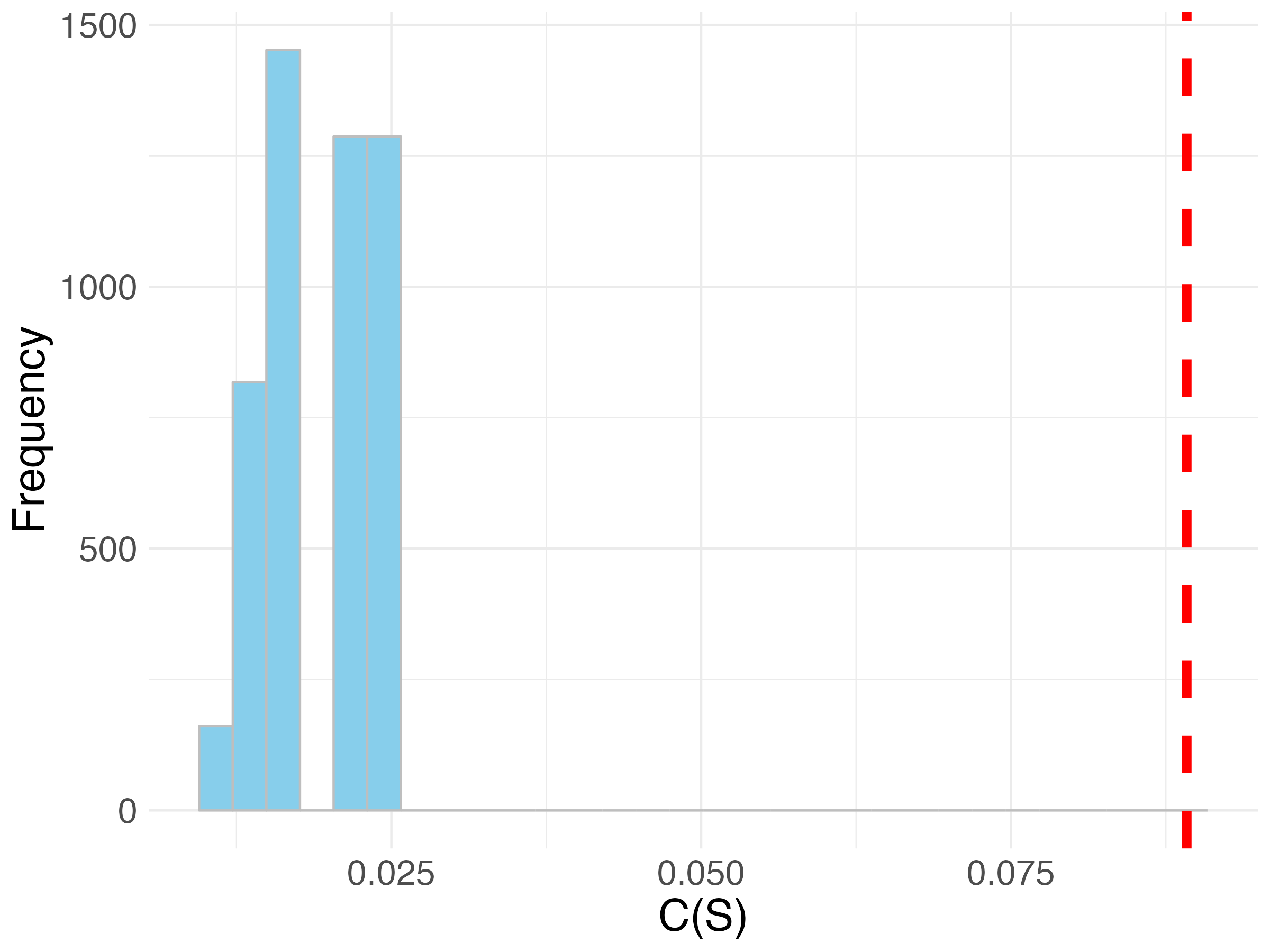}
    \caption{Histogram of $C(\mathscr{S})$}
    \label{subfig:multicntryhistC}
  \end{subfigure}
  \caption{Lipschitz Constant $C$ and $C(\mathscr{S})$}
  \label{fig:multicntryLipC}
\end{center}
{\raggedright \footnotesize \textit{Notes}: The dots in Panel \ref{subfig:multicntryTEvsX} represent all possible pairs among the $15$ countries. The slope of each point connecting to the origin represents the value of the Lipschitz constant needed to explain the estimated treatment effects and site-level covariates observed for that pair of countries. The red dash line is the smallest $C$ that is needed to explain all the data, computed using equation \ref{eqn:C_data}. The slope of the blue dashed line represents $C(\mathscr{S})$, computed using \ref{eqn:C(scr_S)}. Panel \ref{subfig:multicntryhistC} presents a histogram of $C(\mathscr{S})$ for all possible choices of experimental sites $\mathscr{S}$. The red dashed line is the smallest Lipschitz constant needed to explain the data, corresponding to the same red dashed line in panel \ref{subfig:multicntryTEvsX}.}      
\end{figure}

\subsection{A Simple Example that Motivates Assumption \ref{asm:unobserved}}\label{sec:app.unobserved}

In this section, we provide a simple linear regression example to motivate Assumption
\ref{asm:unobserved}. Suppose the effect of the status-quo is known
in all sites and normalized to zero. For each site $s\in\mathcal{S}$,
we have a random sample of $n_s$ units in the experiment. Let $Y_{i,s}$
be the outcome under the policy of interest for unit $i\in \{1,...,n_s\}$. We
assume that $Y_{i,s}$ is generated as follows:
\[
Y_{i,s}=\beta X_{i,s}+\gamma Z_{i,s}+\varepsilon_{i,s},
\]
where $X_{i,s}\in\mathbb{R}$ is the observed unit-level covariate
for individual $i$ in site $s$, $Z_{i,s}\in\mathbb{R}$ is the \emph{unobserved}
unit-level covariate, $\varepsilon_{i,s}\sim \mathcal{N}(0,\sigma_{\varepsilon,s}^{2})$
is an error term with $\sigma_{\varepsilon,s}^{2}>0$ and is independent of $\left(X_{i,s},Z_{i,s}\right)$,
and $\beta,\gamma\in\mathbb{R}$ are the same across different sites.
For simplicity, suppose that $X_{i,s}$ and $Z_{i,s}$ are jointly
normal: $\left(X_{i,s},Z_{i,s}\right)^{\top}\sim \mathcal{N}(\mu_{s},\varSigma_s)$,
where $\mu_{s}\in\mathbb{R}^{2}$ and $\varSigma_s$ is positive definite. 

Let $\bar{Y}_{s}:=\frac{1}{n_s}\sum_{i=1}^{n_s}Y_{i,s}$ be the sample
average of the observed outcome at site $s$ and $\bar{X}_{s}:=\frac{1}{n_s}\sum_{i=1}^{n_s}X_{i,s},\bar{Z}_{s}:=\frac{1}{n_s}\sum_{i=1}^{n_s}Z_{i,s}$
the observed and \emph{unobserved} site-level aggregate covariates, respectively.
Under the above assumptions, we have
\begin{equation}
\bar{Y}_{s}\mid\bar{X}_{s}\sim \mathcal{N}(\beta\bar{X}_{s}+\gamma\mathbb{E}[\bar{Z}_{s}|\bar{X}_{s}],\sigma_{s}^{2}),\label{eq:app.unobserved.motivation}
\end{equation}
where $\mathbb{E}[\bar{Z}_{s}|\bar{X}_{s}]$ is the expectation of
$\bar{Z}_{s}$ conditional on observed site-level
aggregate covariate $\bar{X}_{s}$ and where $\sigma_{s}^{2}>0$. Then, Assumption \ref{asm:unobserved} models a
case for which the policy effect of interest is
\[
\tau_{s}:=\beta\bar{X}_{s}+\gamma\mathbb{E}[\bar{Z}_{s}|\bar{X}_{s}],\forall s\in\mathcal{S}.
\]
Furthermore, (\ref{eq:app.unobserved.motivation}) implies that, conditional
on $\bar{X}_{s}$, we have an unbiased and normal estimator for $\tau_{s}$.
We may also calculate that $\mathbb{E}[\bar{Z}_{s}|\bar{X}_{s}]=\alpha_s$ for some $\alpha_s$ that depends on each site $s$. 
Then, for $s,s^{\prime}\in\mathcal{S}$, $s\neq s^{\prime}$, we have:
\begin{align}
\tau_{s}-\tau_{s^{\prime}} & =\beta\left(\bar{X}_{s}-\bar{X}_{s^{\prime}}\right)+\gamma\left(\alpha_s-\alpha_{s^{\prime}}\right), \label{eq:app.unobserved.motivation.2}
\end{align}
implying 
that $\left|\tau_{s}-\tau_{s^{\prime}}\right|$ can be bounded as in
Assumption \ref{asm:unobserved} for some positive $C$ and $c$,
as long as we are willing  to assume that $\beta,$ $\gamma$ are bounded, and that 
$\left|\alpha_{s}-\alpha_{s^{\prime}}\right|$ are bounded uniformly among all $s,s^{\prime}\in\mathcal{S}$.

\end{document}